\theoremstyle{theorem}
\newtheorem {theo}{Theorem}[section]
\newtheorem*{theo*}{Theorem}
\newtheorem {lemme}[theo]{Lemma}
\newtheorem*{lemme*}{Lemma}
\newtheorem {prop}[theo]{Proposition}
\newtheorem*{prop*}{Proposition}
\newtheorem {cor}[theo]{Corollary}
\newtheorem*{cor*}{Corollary}
\newtheorem*{cor_proof*}{Corollary (of the proof)}
\newtheorem*{conjecture*}{Conjecture}
\theoremstyle{definition}
\newtheorem {defi}[theo]{Definition}
\newtheorem*{defi*}{Definition}
\newtheorem {nota}[theo]{Notation}
\newtheorem*{nota*}{Notation}
\theoremstyle{remark}
\newtheorem {remarque}[theo]{Remark}
\newtheorem*{remarque*}{Remark}
\newtheorem*{Iremarque*}{Important remark}
\newtheorem {warning}[theo]{Warning}
\newtheorem*{warning*}{Warning}
\newtheorem*{remarques*}{Remarks}
\newtheorem*{warnings*}{Warnings}
\newtheorem*{convention*}{Convention}
\newtheorem {exemple}[theo]{Example}
\newtheorem*{exemple*}{Example}
\newtheorem*{exemples*}{Examples}
\newtheorem*{question*}{Question}
\newtheorem*{questions*}{Questions}
\newtheorem*{fact*}{Fact}
\def\BB{{\mathcal B}}
\def\CC{{\mathcal C}}
\def\F{{\mathds F}_2}
\def\Fq{{\mathds F}_q}
\def\N{{\mathds N}}
\def\Z{{\mathds Z}}
\def\2Z{{\fract{\Z}/{2\Z}}}
\def\e{\varepsilon}
\def\p{\partial}
\def\Ch{{\textnormal{Ch}}}
\def\Hom{{\textnormal{Hom}}}
\def\Id{{\textnormal{Id}}}
\def\Im{{\textnormal{Im}}}
\def\Ker{{\textnormal{Ker}}}
\def\Coker{{\textnormal{Coker}}}
\def\Mat{{\textnormal{Mat}}}
\def\End{{\textnormal{End}}}
\def\deg{{\textnormal{deg}}}
\def\overlap{{\textnormal{overlap}}}
\def\pcup{\operatornamewithlimits{\cup}\limits}
\def\poplus{\operatornamewithlimits{\oplus}\limits}
\def\potimes{\operatornamewithlimits{\otimes}\limits}
\def\pprod{\operatornamewithlimits{\prod}\limits}
\def\psum{\operatornamewithlimits{\sum}\limits}
\def\pmin{\operatornamewithlimits{\min}\limits}
\def\fract#1/#2{\hbox{\leavevmode
  \kern.1em \raise .25ex \hbox{\the\scriptfont0 $#1$}\kern-.1em }\big/
  {\hbox{\kern-.15em \lower .5ex \hbox{\the\scriptfont0 $#2$}} }}
\def\ffract#1/#2{\hbox{\leavevmode
  \kern.1em \raise .25ex \hbox{\the\scriptfont0 $#1$}\kern-.1em }\big/
  {\hbox{\kern-.15em \lower .5ex \hbox{\the\scriptfont0 \scriptsize $#2$}} }}
\def\fractt#1/#2{\hbox{\leavevmode
  \kern.1em \raise .25ex \hbox{\the\scriptfont0 $#1$}\kern-.1em
}\lower .2ex\hbox{\Big/}
  {\hbox{\kern-.15em \lower .8ex \hbox{\the\scriptfont0 $#2$}} }}
\def\subfract#1/#2{\hbox{\leavevmode
  \kern.1em \raise .25ex \hbox{\the\scriptfont0 \scriptsize $#1$}\kern-.1em }/
  {\hbox{\kern-.15em \lower .5ex \hbox{\the\scriptfont0 \scriptsize $#2$}} }}
\newcommand{\smallfrac}[2]{\textrm{\the\scriptfont0 \scriptsize $\frac{#1}{#2}$}}
\newcommand{\noi}{\noindent}
\newcommand{\disp}{\displaystyle}
\newcommand{\dessin}[2]{
  \vcenter{\hbox{\includegraphics[height=#1]{#2.pdf}}}}
\newcommand{\function}[5]{
  #1 \colon \left\{
  \begin{array}{ccc}
    #2 & \longrightarrow & #4\\[.1cm]
    #3 & \longmapsto & #5
  \end{array}\right.}
\newcommand{\fonction}[4]{
  \begin{array}{ccc}
    #1 & \longrightarrow & #3\\[.1cm]
    #2 & \longmapsto & #4
  \end{array}}
\newcommand{\Inter}[2]{\{#1,\ldots,#2\}}
\newcommand{\param}[4]{{\left\llbracket#1,#2,#3,#4\right\rrbracket}}
\def\RM{\textrm{RM}}
\def\QRM{\textrm{QRM}}
\def\QFG{\textrm{QFG}}
\def\QCC{\textrm{QCC}}
\def\Pol{\textrm{Pol}}
\def\Span{\textrm{Span}}
\newcommand{\chain}[1]{\mathscr{#1}}
\newcommand{\qcode}[1]{\mathcal{#1}}
\DeclareMathAlphabet{\mathpzc}{OT1}{pzc}{m}{it}
\newcommand{\code}[1]{\mathpzc{#1}}
\def\cperp{{\mathrlap{\hspace{.02cm}\perp}\bigcirc}}
\def\rotimes{\otimes_r}
\def\protimes{\operatornamewithlimits{\rotimes}\limits}
\renewcommand{\P}{\mathbb{P}}
\renewcommand{\leq}{\leqslant}
\renewcommand{\geq}{\geqslant}
\newcommand{\clines}{\code{C}_{{\rm lines}}}
\newcommand{\cplanes}{\code{C}_{{\rm planes}}}
\newcommand{\one}{\mathbf{1}}
\begin{document}

\title{On tensor products of CSS Codes}
\author[B. Audoux]{Benjamin Audoux}
         \address{Aix Marseille Universit\'e, I2M, UMR 7373, 13453 Marseille, France}
         \email{benjamin.audoux@univ-amu.fr}
\author[A. Couvreur]{Alain Couvreur}
         \address{INRIA \& LIX, UMR 7161, École Polytechnique, 91128 Palaiseau, France}
         \email{alain.couvreur@lix.polytechnique.fr}
\date{\today}

\begin{abstract}
  CSS codes are in one-to-one correspondance with length 3 chain
  complexes. The latter are naturally endowed with a tensor product
  $\otimes$ which induces a similar operation on the former. We
  investigate this operation, and in particular its behavior with
  regard to minimum distances.  Given a CSS code $\qcode{C}$, we give
  a criterion which provides a lower bound on the minimum distance of
  $\qcode{C} \otimes \qcode{D}$ for every CSS code $\qcode D$.
  From this criterion arises a generic bound for the minimum
    distance which is twice larger
    than the single bound previously known in the literature.  We
  apply these results to study the behaviour of iterated tensor powers
  of codes.  Such sequences of codes are logarithmically LDPC and we
  prove in particular that their minimum distances tend generically to
  infinity. More precisely, their minimum distance increases as $O(n^\alpha)$
  for some $\alpha>0$, where $n$ is the code length,
  while the row weight of
    their parity--check matrices grows as $O(\log(n))$. This entails a rather surprizing fact:
    even if a CSS code does not have quantum degeneracy, for a large enough $\ell$, its
    $\ell$--th iterated tensor power does.
  Different known results are also reinterpretated in terms of tensor products and three new families of LDPC CSS codes are studied.
\end{abstract}

\maketitle

\section*{Introduction}
In the last century, error-correcting codes were developed to overcome
the emergence of anomalies in data while transmitting or storing
them. In the quantum setting, such correction systems are all the more
important as quantum decoherence eventually produces such errors. At
the end of the $\textrm{XX}^\textrm{th}$ century, several
constructions were given for quantum error-correcting codes; among
them, CSS codes, developped by A.R Calderbank, P. Shor and A . Steane
\cite{Calderbank,Steane}, are constructed from two classical codes
orthogonal to each other.  Because of their strong relation with
classical codes, they have been the subject of intense study.  CSS
codes can alternatively be related to the topological notion of chain
complexes. Not only does this approach provides a way to construct CSS
codes, but parameters such as length, dimension and minimum distance
can also be read through the chain complex and its (co)homology.  From
this perspective, the non detectable error patterns correspond to
(co)cycles belonging to nonzero classes in the (co)homology of the chain complex.
The dimension of the quantum code is nothing but the dimension of the
(co)homology group, and the quantum minimum
distance nothing but the minimum weight of a (co)homologically non
trivial cycle. This point of view was pioneered by M. Freedman,
D. Meyer \cite{Meyer} and A. Kitaev \cite{Kitaev}.

Swiftness in error-correction is crucial since error correction should
occur faster than errors arise.  In the classical setting, LDPC (Low
Density Parity Check) codes, that is codes with sparse parity-check
matrices \cite{Gallager} are known to have very efficient decoding
algorithms.  These so-called {\em iterative decoding algorithms} have
a very low complexity and can be applied to some LDPC codes with error
correction performance very close to the Shannon limit; see for
instance \cite{RiUrSh}.  The notion of LDPC code can be transposed to
the quantum setting and efficient iterative decoding algorithms exist
\cite{LTZ}. However, while good classical LDPC codes can easily be
obtained by random generation, there is no way to generate randomly
quantum LDPC codes. Hence, to date, the construction of a quantum LDPC
code rests on methods of algebraic topology
\cite{Kitaev,Bombin,Zemor,Audoux} or combinatorics
\cite{TZ,Couvreur,Nico}. The list of references is far from being
exhaustive.  Classical and quantum LDPC codes differ in yet another
important point. While a generic sequence of classical LDPC codes has
a minimum distance which is linear in the code length, the best known
families of quantum LDPC codes have a minimum distance in
$O \left(\!\sqrt{n\sqrt{\log n}}\right)$ \cite{FreedmanLuo}.  The
question whether this square root barrier is fortuitous or not remains
open.  It is worth noting that by ``LDPC'' we mean that the code has
parity check matrices with row weight in $O(1)$ or $O(\log n)$ where
$n$ denotes the code length.  Indeed, a recent result of Bravyi and
Hastings \cite{Hastings} proves the existence of MDPC (Moderate
Density Parity Check) quantum codes, that is CSS codes described by
matrices whose row weight is in $O (\! \sqrt{n})$, with dimensions and
minimum distances linear in the code length.  Bravyi and Hastings'
construction is performed by choosing two 
CSS codes of the same length and by computing their so-called {\em
  homological product}, that we shall denote here by $\boxtimes$. It
produces a
CSS code whose minimum distance is linear in its length with a nonzero
probability.

In the present paper, we deepen the interplay between chain complexes
and CSS codes by transposing to the latter the standard notion of
tensor product $\otimes$ defined for the former.  We also introduce a
reduced notion $\rotimes$ of tensor product which, compared with the
standard one, improves the relative parameters since it decreases the
length but preserves the dimension and the minimum distance.  Though
distinct, Bravyi and Hastings' homological product and (reduced)
tensor products are closely related. Relationship between them are
discussed in Section~\ref{sec:BH}.

We study families of codes obtained by iterated tensor powers of a CSS
code. This operation does not improve the relative parameters but can
reasonably preserve them while providing codes with sparser parity
check matrices.  Actually, for $\qcode{C}_1$ and $\qcode{C}_2$ two CSS
codes, the length and the dimension of the product
$\qcode{C}_1\otimes\qcode{C}_2$ enjoy closed formulae roughly equal to the product of the corresponding parameters of
$\qcode{C}_1$ and $\qcode{C}_2$. The minimum distance
$d_{\qcode{C}_1\otimes\qcode{C}_2}$ of the product is more difficult
to evaluate.
Our main result is a criterion that provides a lower bound, using as
large as possible sets of (co)homologically non trivial elements with
as small as possible {\em overlaps} (see Definition~\ref{def:overlap}):
{
\renewcommand{\thetheo}{\ref{thm:main} and Theorem \ref{theo:mainBH}}
\begin{theo}
Let $\qcode{C}$ be a CSS code defined as a pair of classical
  codes $\code{C}_2\subseteq \code{C}_1^\perp$ given by full rank parity--check matrices.
  Let $g_1, \ldots, g_k\in \code{C}_1^\perp$ and $g_1^*, \ldots, g_k^*\in\code{C}_2^\perp$
  be such that
  $$\code{C}_1^\perp = \code{C}_2 \oplus
  \Span(g_1, \ldots, g_k), \quad 
  \code{C}_2^\perp = \code{C}_1 \oplus
  \Span(g_1^*, \ldots, g_k^*)\quad {\rm and}\quad
  \forall i,j, \langle 
  g_i^*, g_j \rangle = \delta_{ij}.$$
  If, for any $j_0 \in \{1, \ldots , k\}$, there exists $\Omega_{j_0} \subseteq
  g_{j_0}^* + \code{C}_1$ and $\Omega'_{j_0} \subseteq g_{j_0} +
  \code{C}_2$, with $|\Omega_{j_0}|,\ |\Omega'_{j_0}| \geq N$
  and $\overlap (\Omega_{j_0}),\ \overlap (\Omega'_{j_0}) \leq K$. Then,
  for any CSS code $\qcode{D}$:
  $$
  d_{\qcode{C} \otimes \qcode{D}},d_{\qcode{C} \boxtimes \qcode{D}} \geq \left\lceil \frac N K d_{\qcode{D}}
  \right\rceil \cdot
  $$
\end{theo}
\addtocounter{theo}{-1}
}

As a simple application of our criterion, we obtain then
{
\renewcommand{\thetheo}{\ref{cor:minor_times_2} and Corollary \ref{cor:mainCorBH}}
\begin{cor}
  If $\qcode{C}$ and $\qcode{D}$ are two CSS codes described by
  matrices which have no columns of zeros, then
\[
2\max(d_{\qcode{C}},d_{\qcode{D}})\leq
d_{\qcode{C}\otimes\qcode{D}},d_{\qcode{C}\boxtimes\qcode{D}}.
\]
\end{cor}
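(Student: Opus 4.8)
The plan is to apply the criterion of Theorem~\ref{thm:main} (resp.\ Theorem~\ref{theo:mainBH}) with the largest conceivable families of representatives, namely whole cosets. First, since $\otimes$ and $\boxtimes$ are symmetric in their two arguments, it suffices to prove $d_{\qcode C\otimes\qcode D}\geq 2d_{\qcode D}$ and $d_{\qcode C\boxtimes\qcode D}\geq 2d_{\qcode D}$ whenever the matrices describing $\qcode C$ have no zero column; exchanging the roles of $\qcode C$ and $\qcode D$ then gives $\geq 2d_{\qcode C}$ as well, hence $\geq 2\max(d_{\qcode C},d_{\qcode D})$. So I fix for $\qcode C$ the data $\code C_2\subseteq\code C_1^\perp$, $g_1,\dots,g_k\in\code C_1^\perp$ and $g_1^*,\dots,g_k^*\in\code C_2^\perp$ of Theorem~\ref{thm:main}, noting that ``no zero column'' translates exactly into $e_i\notin\code C_1^\perp$ and $e_i\notin\code C_2^\perp$ for every coordinate~$i$.

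For each $j_0$ I would take $\Omega_{j_0}:=g_{j_0}^*+\code C_1$ and $\Omega'_{j_0}:=g_{j_0}+\code C_2$, the full cosets of all representatives of the $j_0$-th logical operator. The single point to check is an elementary fact about cosets of a code $\code C$ with no zero column: for every vector $v$ and every coordinate $i$, the linear form $c\mapsto c_i$ is nonzero on $\code C$, hence balanced, so exactly $|\code C|/2$ of the $|\code C|$ vectors of $v+\code C$ have a $1$ in position $i$. Therefore $|v+\code C|=|\code C|$ while $\overlap(v+\code C)=|\code C|/2$ \emph{exactly} (and $|\code C|\geq 2$, as otherwise $\code C^\perp\ni e_i$). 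Applying this to $\code C=\code C_1$ and to $\code C=\code C_2$ shows that, for every $j_0$,
$$
\frac{|\Omega_{j_0}|}{\overlap(\Omega_{j_0})}=\frac{|\code C_1|}{|\code C_1|/2}=2
\qquad\text{and}\qquad
\frac{|\Omega'_{j_0}|}{\overlap(\Omega'_{j_0})}=\frac{|\code C_2|}{|\code C_2|/2}=2 .
$$

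Plugging this into the criterion will give $d_{\qcode C\otimes\qcode D}\geq\lceil 2d_{\qcode D}\rceil=2d_{\qcode D}$, and similarly for $\boxtimes$, which together with the symmetry reduction proves the statement. If $\dim\code C_1=\dim\code C_2$ one takes directly $N=|\code C_1|$, $K=|\code C_1|/2$ in Theorem~\ref{thm:main}; in general $|\code C_1|\neq|\code C_2|$, and one either trims the larger coset down to a coset of a subcode that still has no zero column so as to match $N$ and $K$, or, more robustly, applies the bound direction by direction, $\Omega_{j_0}$ bounding the weight of the logical operators of the product of one type and $\Omega'_{j_0}$ those of the other, and then takes the minimum over directions and types. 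I expect the only real content to be the exact overlap count above --- the recognition that ``no zero column'' is precisely what makes each coordinate functional balanced on $\code C_i$, so that the complete coset of representatives realises the optimal ratio~$2$; the matching of $N$ and $K$ is then routine bookkeeping.
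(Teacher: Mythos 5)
Your construction is essentially the paper's own proof: Proposition \ref{prop:NonSharp} is proved there by taking for $\Omega_{j_0}$ precisely the full coset $g_{j_0}^*+\code{C}_1$ (written $\Ker_{j_0}^{\cperp}=g_{j_0}^*+\Ker(\p_0)^\perp$) and observing that, by the no-zero-column hypothesis, translation by a codeword $f_i$ with $\langle f_i,b_i\rangle=1$ exchanges the elements of the coset containing $b_i$ with those not containing it --- which is exactly your ``nonzero coordinate functional is balanced'' count $\overlap(\Omega)=\tfrac12|\Omega|$; the corollary is then obtained by running this through Lemma \ref{lem:Th} on the product complex and on its dual and taking minima, together with the symmetry in $\qcode C$ and $\qcode D$ (and Theorem \ref{theo:mainBH} for $\boxtimes$), just as you propose. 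Concerning your two fixes for the mismatch $|\code C_1|\neq|\code C_2|$: the ``direction by direction'' option is the correct one and is in fact what the paper does, since the engine Lemma \ref{lem:Th} is applied once to $\chain{C}\otimes\chain{D}$ with the cosets of $\code C_1$ and once to $\chain{C}^*\otimes\chain{D}^*$ with the cosets of $\code C_2$, and nothing forces the two pairs $(N,K)$ to coincide --- only the ratio $N/K=2$ matters in each application. Your first option, trimming the larger coset to a coset of a subcode that still has no zero coordinate, is not justified in general: for instance $\Span\{(1,1,0),(0,1,1)\}$ has no zero coordinate while every one of its $1$-dimensional subcodes does, and a sub-coset of such a subcode may have overlap equal to its full cardinality. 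Dropping that branch and keeping the second, your argument coincides with the paper's.
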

\addtocounter{theo}{-1}
}
This lower bound is twice better than the 
previously known lower bound \cite[Lemma 2]{Hastings}.
It follows that the iterated
tensor powers of any CSS code described by matrices with no zero
column is an LDPC family whose minimum distances have a non trivial growth tending to infinity:
{
\renewcommand{\thetheo}{\ref{cor:NonTrivial}}
  \begin{cor}
      If $\qcode{C}=\left(\mathbf{H}_X,\mathbf{H}_Z\right)$ is
      any CSS code
  such that none of $\mathbf{H}_X$ or $\mathbf{H}_Z$ has a zero column,
  then the family $\left(\qcode{C}^{\otimes\ell}\right)_{\ell\in\N}$
  is
  logarithmically LDPC with
  $d_{\qcode{C}^{\otimes\ell}}\geq
  2^\ell$ for every $\ell\in\N^*$.
  \end{cor}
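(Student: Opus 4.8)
The plan is to induct on $\ell$, feeding $\qcode{C}$ into itself through the criterion of Corollary~\ref{cor:minor_times_2} (itself a consequence of Theorem~\ref{thm:main}), while keeping track of the length and of the maximal row weight of the parity--check matrices of $\qcode{C}^{\otimes\ell}$; throughout, I use that $\qcode{C}^{\otimes(\ell+1)}$ may be computed as $\qcode{C}^{\otimes\ell}\otimes\qcode{C}$, i.e.\ the associativity of $\otimes$.

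First I would check that the standing hypothesis is stable under $\otimes$: if the defining matrices of two CSS codes have no zero column, then neither do those of their tensor product. This is a direct inspection of the explicit block description of $\mathbf{H}_X^{\qcode{C}\otimes\qcode{D}}$ and $\mathbf{H}_Z^{\qcode{C}\otimes\qcode{D}}$ in terms of the matrices of $\qcode{C}$ and $\qcode{D}$: after reindexing, each column of such a matrix is built from a column of one factor tensored with a basis vector (possibly added to a second block of the same shape), hence is nonzero as soon as the relevant columns of $\qcode{C}$ and $\qcode{D}$ are. By induction on $\ell$, every $\qcode{C}^{\otimes\ell}$ then has defining matrices without zero column, so Corollary~\ref{cor:minor_times_2} is applicable to the pair $(\qcode{C}^{\otimes\ell},\qcode{C})$ at each stage. (Alternatively one can bypass this point and apply Theorem~\ref{thm:main} with the first factor fixed to $\qcode{C}$: the data exhibited for $\qcode{C}$ in the proof of Corollary~\ref{cor:minor_times_2} already make the theorem yield $d_{\qcode{C}\otimes\qcode{D}}\geq 2\,d_{\qcode{D}}$ for \emph{every} CSS code $\qcode{D}$, and one specialises $\qcode{D}=\qcode{C}^{\otimes\ell}$.)

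Next comes the distance bound, again by induction on $\ell$. For $\ell=1$ I would note that the hypothesis already forces $d_{\qcode{C}}\geq 2$: a weight--one $X$--type (resp.\ $Z$--type) logical operator would be a coordinate vector lying in $\Ker(\mathbf{H}_Z)$ (resp.\ $\Ker(\mathbf{H}_X)$), that is, a zero column of $\mathbf{H}_Z$ (resp.\ $\mathbf{H}_X$), of which there are none by hypothesis; hence $d_{\qcode{C}}\geq 2=2^1$. For the inductive step, assume $d_{\qcode{C}^{\otimes\ell}}\geq 2^\ell$; then Corollary~\ref{cor:minor_times_2} applied to $\qcode{C}^{\otimes\ell}$ and $\qcode{C}$ gives
$$d_{\qcode{C}^{\otimes(\ell+1)}}=d_{\qcode{C}^{\otimes\ell}\otimes\qcode{C}}\geq 2\max\bigl(d_{\qcode{C}^{\otimes\ell}},d_{\qcode{C}}\bigr)\geq 2\cdot 2^\ell=2^{\ell+1},$$
which closes the induction.

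Finally, for the ``logarithmically LDPC'' assertion I would bound the maximal row weight $w_\ell$ of the matrices of $\qcode{C}^{\otimes\ell}$ against $\log n_\ell$, where $n_\ell$ denotes its length. Reading the block formulas again, a row of $\mathbf{H}_X^{\qcode{C}\otimes\qcode{D}}$ (resp.\ $\mathbf{H}_Z^{\qcode{C}\otimes\qcode{D}}$) has weight at most the sum of a row weight of a matrix of $\qcode{C}$ and a row weight of a matrix of $\qcode{D}$; iterating gives $w_\ell\leq\ell\,w_1=O(\ell)$. Since a minimum distance never exceeds the length, the bound just proved forces $n_\ell\geq d_{\qcode{C}^{\otimes\ell}}\geq 2^\ell$, whence $\ell\leq\log_2 n_\ell$ and therefore $w_\ell=O(\log n_\ell)$ — precisely the logarithmic LDPC condition. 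All of the substance is contained in Theorem~\ref{thm:main}/Corollary~\ref{cor:minor_times_2}, which is granted; the only points requiring attention are the propagation of the ``no zero column'' condition (or its bypass above) and the elementary remark that this condition already yields $d_{\qcode{C}}\geq 2$ — without that base case the induction would only reach $2^{\ell-1}$, and the improvement by a factor of $2$ is exactly what the statement claims.
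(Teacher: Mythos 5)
Your overall strategy --- extract a factor-$2$ gain from the no-zero-column hypothesis and iterate --- is the paper's own (Proposition \ref{prop:NonSharp} fed into Corollary \ref{cor:ParamTensor}), and your base case ($d_{\qcode{C}}\geq 2$ because a weight-one homologically non-trivial vector is exactly a zero column) and the final LDPC bookkeeping via $n_\ell\geq d_\ell\geq 2^\ell$ are fine. Two steps, however, do not hold as written.

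First, $\otimes$ is \emph{not} associative at the level of CSS codes: $\qcode{C}^{\otimes(\ell+1)}$ is by definition the degree-$\{-1,0,1\}$ truncation of the full chain complex $\chain{C}^{\otimes(\ell+1)}$, which differs from the code obtained by truncating after each factor --- and the latter is what $\qcode{C}^{\otimes\ell}\otimes\qcode{C}$ computes. The paper states this non-associativity explicitly when defining $\otimes$ for CSS codes, so your induction bounds the distance of the wrong object. This is repairable in two ways: either run the induction at the chain-complex level, where $\otimes$ is associative and Lemma \ref{lem:Th} applies directly with the fixed balanced factor $\chain{C}$ against $\chain{D}=\chain{C}^{\otimes\ell}$ (the paper's route); or check that truncating early only enlarges $\Ker(\p_0)$ and shrinks $\Im(\p_{-1})$ in degree $0$, hence enlarges the set of homologically non-trivial representatives, so that $d_{\qcode{C}^{\otimes\ell}\otimes\qcode{C}}\leq d_{\qcode{C}^{\otimes(\ell+1)}}$ and your bound still transfers. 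Either way an argument is required where you invoke associativity.

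Second, the claim that the no-zero-column property propagates under $\otimes$ fails on the off-diagonal blocks. The degree-$0$ space of $\chain{C}\otimes\chain{D}$ is $(C_{-1}\otimes D_1)\oplus(C_0\otimes D_0)\oplus(C_1\otimes D_{-1})$, and for a basis vector $a\otimes z$ with $a\in C_{-1}$ and $z\in D_1$ one has $\p(a\otimes z)=\p_{-1}(a)\otimes z$, the other term vanishing because $z$ sits in top degree; this column is therefore zero exactly when $\p_{-1}(a)=0$, i.e.\ when $\mathbf{H}_Z$ has a zero \emph{row} --- a condition your hypothesis does not exclude. So the applicability of Corollary \ref{cor:minor_times_2} to the pair $(\qcode{C}^{\otimes\ell},\qcode{C})$ is not established. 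Your parenthetical bypass --- fix the first factor to $\qcode{C}$, exhibit the sets $\Omega$, $\Omega'$ with $N/K=2$ once and for all as in Proposition \ref{prop:NonSharp}, and apply Theorem \ref{thm:main} (or rather Lemma \ref{lem:Th}) to an arbitrary second factor --- avoids both issues at once and is exactly what the paper does; it should be promoted from a parenthesis to the proof, after first reducing $\chain{C}$ (which changes neither the code nor the no-zero-column property) so that the fixed factor is balanced, as Lemma \ref{lem:Th} requires.
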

\addtocounter{theo}{-1}
}
In particular, the minimum distance grows exponentially fast compared to the row weight of the parity check matrices. So, even if a CSS code has no {\em quantum degeneracy}\footnote{i.e.  its
  quantum minimum distance is not larger than the minimum of the
  distances of the two classical codes defining it.}, for a large
enough $ \ell$, its $\ell$--th iterated power does.

Our criterion for estimating the minimum distance turns out to be
quite efficient when applied with construction involving classical
codes with a large group of automorphism. We give three such examples:
\begin{itemize}
\item binary codes from finite geometry, on which acts
  $\mathbf{PGL}(3,\Fq)$, lead to a CSS code $\QFG(s)$ for every
  $s\in\N^*$;
\item binary cyclic codes of length $n$ on which acts
  $\fract{\Z}/{n\Z}$, lead to a CSS code $\QCC(4^s,2^s)$ for every
  $s\in \N^*$;
\item binary Reed--Muller codes $\RM(r,s)$ on which acts the affine group
  $\mathbf{Aff}(r, \F)$, lead to a CSS code $\QRM(s)$ for each
  $s\in \N^*$.
\end{itemize}
For these three examples, the sequence of iterated $\ell$-th tensor
powers have length $N_\ell$ tending to infinity and minimum distance
which can be larger than $N_\ell^{\alpha}$ for any $\alpha <
\frac{1}{2}$.  Moreover, these codes are {\em logarithmically LDPC},
i.e. they have parity check matrices with row weight in $O(\log
N_\ell)$ and the number of stabilizers acting nontrivially on a qubit
(i.e. the column weight) is in $O(\log N_{\ell})$ too.  The first two
examples provide sequences of CSS codes with constant dimension, while
the third one has a dimension sequence tending to infinity. Moreover,
by diagonal extraction, the latter leads to a family which is almost
LDPC, in the sense that the weights grow slower than $N_\ell^\e$ for
any $\e>0,$ and whose dimensions and minimum distances are
respectively larger than $N_\ell^{\frac{\alpha}2}$ and
$N_\ell^{\alpha}$ for any $\alpha < 1$.

\begin{remarque*}
  One can note that, for all the LDPC families
    provided in this paper, the lower bound for the minimum distance
    culminates at, but does not exceed, the ``square root of the
    length'' barrier. Unfortunately, this is no coincidence, since a
    simple remark (Remark \ref{rk:LaLoose}) shows that if the above
    criterion is sharp for a given code $\qcode C$, then the minimum
    distance of $\qcode C$ \emph{is} at most the square root of the
    length. Without saying anything on the square root barrier
    conjecture in general (even for iterated tensor powers of codes),
    the examples given above are hence somehow optimal as corollaries
    of Theorem \ref{thm:main}.
\end{remarque*}


\subsection*{Organization}
Section \ref{sec:Background} contains a brief review of the needed
definitions from homological algebra (Section
\ref{sec:ChainComplexes}), classical codes (Section
\ref{sec:ClassicalCodes}) and CSS codes (Section \ref{sec:CSS}). In
particular, we recall there the deep connection between CSS codes and
chain
complexes.\\
In Section \ref{sec:Main}, we use the latter connection to transport
the notion of tensor product from chain complexes to CSS
codes (Section \ref{sec:CSSpowers}).  We provide then the main
theorem, which gives a lower bound for the minimum distance of the
product of two CSS codes (Section \ref{sec:Th}), and state a number of
direct consequences for rougher, but general, lower bounds
and for parameters of iterated tensor powers (Section \ref{sec:Consequences}).\\
As examples of applications, we provide in Section
\ref{sec:KnownResults} some elementary interpretations, in term of
tensor products, of known results such as the hypergraph product codes
given by J.-P. Tillich and G. Zemor in \cite{TZ} (Section
\ref{sec:TZ}) or Khovanov codes given by the first author in
\cite{Audoux} (Section \ref{sec:Khovanov}). We also relate our tensor
product for CSS codes to the homological product defined by S. Bravyi and
M. Hastings in \cite{Bravyi,Hastings} (Section
\ref{sec:BH}), and we discuss the
product of Steane codes, already discussed in \cite{Bravyi,Hastings} (Section \ref{sec:Steane}).

Note that the relationship of hypergraph and
homological products with tensor products was already noticed in \cite{Freedman}.\\
Finally, section \ref{sec:NewFamilies} is devoted to the description
of three new families of LDPC CSS codes, based on finite geometry
(Section \ref{sec:QFG}), cyclic codes (Section \ref{sec:QCyclic}) and
Reed--Muller codes (Section \ref{sec:QRM}).\\
The paper ends with two technical appendices with the details of the
computation of lengths for iterated tensor powers (Appendix
\ref{appendix:A}) and iterated reduced tensor powers (Appendix
\ref{appendix:B}).


\subsection*{Notation}
We shall consider $\F$--spaces, which are finite-dimensional vector
spaces over the field $\F$.  All the theoretical material present in
this paper can actually be adapted to work over any field but, in order to simplify notation, and since
it is sufficient for all the applications we consider here, we
restrict this presentation to the $\F$ case.

For any $\F$--space $C$,
we denote by $C^*:=\Hom(C,\F)$ the dual space of $C$.
Every map $f:A\to B$ induces a dual map
$f^*:B^*\to A^*$ defined by $f^*(\varphi)=\varphi\circ f$
for every $\varphi\in B^*$. For every $X\subseteq C$, we denote its {\em orthogonal space} by $X^\perp:=\{\varphi\in C^*\ |\ \varphi_{|X}\equiv
0\}$.

If $C$ is given with a basis $\BB$, then the bijection
$\big(A\subset\BB\mapsto \psum_{b\in A}b\in C\big)$ identifies the
elements of $C$ with the subsets of $\BB$.  We shall use freely this
identification, denoting subsets $\{a_1,\ldots,a_s\}\subset \BB$, and
the related elements of $C$, by concatenations
$a_1a_2\cdots a_s$.\footnote{note that the order of the $a_i$'s in
  this notation is irrelevant} Associated to $\BB$, there is a natural
dual basis $\BB^*:=\left\{b^*\ |\ b\in\BB\right\}$ for $C^*$, where
$b^*$ is defined by $b^*(b')=\delta_{bb'}$ for all $b'\in\BB$. Here,
$\delta$ stands for the Kronecker delta.
Using the subset
identification mentioned above, we shall denote by $b\in x$, where
$x\in C$ and $b\in \BB$, the fact that $b^*(x)\neq0$, which means that
$b$ appears in the decomposition of $x$. In the same spirit, we denote by $|x|$ the Hamming
weight of $x\in C$, that is the number of $b\in\BB$ such that
$b\in x$.  We shall also denote with brackets the usual bilinear form
defined on $C$ by $\langle b_1,b_2\rangle:=\delta_{b_1b_2}$ for all
$b_1,b_2\in\BB$. The following map:
\[
\fonction{C}{x}{C^*}{y\mapsto\langle x,y\rangle}
\]
is then an isomorphism sending $\BB$ on $\BB^*$.  For every
$X\subset C$, it induces an isomorphism between $X^\perp$ and
$\big\{x\in C\ \big|\ \forall y\in X, \langle x,y\rangle=0\big\}$.
In order to reduce the
amount of notation, we shall use freely this identification without
necessarily mentioning it. The dual of a map $f:A\to B$ would hence be
seen as $f^*:B\to A$.

By convention and unless otherwise specified, $\F$--spaces
shall be denoted using roman capital letters, with an index $i$ when it
corresponds to the degree $i$ part of a graded\footnote{see next section for definitions} space; chain complexes\footnotemark[3] using
cursive capital letters; maps of chain complexes by $\p$, possibly with a distinctive
index or exponent; quantum codes\footnotemark[3] using calligraphic capital
letters; and classical codes\footnotemark[3] using calligraphic
capital letters of a slightly modified type.
A same letter shall be used for associated objects: typically
$\qcode{C}$ shall be the CSS code\footnotemark[3] associated to the chain complex
${\chain{C}}$ defined as the $2$--nilpotent\footnotemark[3] map $\p$
(or $\p_{\chain{C}}$) defined on
$C:=\poplus_{i\in\Z}C_i$. The map $\p_i$ shall be then the restricted
map $\p_{|C_i}$. If a classical code is involved in the story, then it
should be $\code{C}$.


 \subsection*{Acknowledgements}
 The present paper is the fruit of several meetings organized by the
TOCQ team and funded by the CNRS (for the PEPS ICQ 2013 grant TOCQ), the
Labex Archimède and the INRIA.
The authors are hence willing to thank the other members of
the team for all the dicussions and all their comments on the paper:
Anthony Leverrier, Jean-Pierre Tillich and Gilles Zémor and especially
Nicolas Delfosse for his relevant observations.
The authors also express their gratitude to Matthew Hastings for his
feedback on a first version of the present article and for interesting
discussions. Finally, they thank the anonymous referre for all its comments.



\section{Some background}
\label{sec:Background}

\subsection{Chain complexes}
\label{sec:ChainComplexes}
\subsubsection{Definitions}

For the sake of self-containedness, we begin by a review of standard
notions of homological algebra (see {\it e.g.} \cite{Weibel} for
further details).

In the literature, chain complexes are often defined as a sequence of
$\F$--spaces ${(C_i)}_{i \in \Z}$ which are all zero but a finite
number of them, together with a collection of maps either all of the
form $\p_i : C_i \rightarrow C_{i+1}$ or all of the form
$\p_{i} : C_i \rightarrow C_{i-1}$.  Another way to describe them is
to consider the direct sum $C :=\poplus_{i \in \Z} C_i$ and regard the
collection of maps ${(\p_i)}_{i \in \Z}$ as a graded endomorphism of
$C$. In the present paper, we shall adopt the latter approach.

\begin{defi}\label{def:ChainComplex}
  A linear map $\p\in\End(C)$, for some $\F$--space $C$, is $2$--nilpotent
  if it satisfies $\p^2=0$.\\
  An $\e$--chain complex ${\chain{C}}$, for $\e=\pm1$, is a $2$--nilpotent
  map $\p\in\End(C)$
  such that
  \begin{itemize}
  \item $C$ is $\Z$--graded, that is decomposes into
    $C:=\poplus_{i\in\Z}C_i$;
  \item $\p$ increases the degree by exactly
    $\e$, that is $\Im(\p_{|C_i})\subset C_{i+\e}$ for every $i\in\Z$.
  \end{itemize}

  If ommited and unless otherwise specified, $\e$ shall be assumed to
  be equal to $1$.\\
  Since $C$ is finite-dimensional,
  there is only a finite
  number of degrees $i$ such that $C_i \neq \{0\}$.
  The {\em support} of a chain
  complex is the smallest interval $\{a,a+1,\ldots,b\}$ of integers such that
  $C_i = \{0\}$ for
  $i<a$ or $i>b$ and the value $b-a+1$ is called the {\em length} of the
  chain complex.\\
A basis $\BB$ for ${\chain{C}}$ is the data of a basis for
  each non zero space $C_i$, that is an identification of $C_i$ with a
  power of $\F$.
\end{defi}

\begin{nota}\label{not:Complex}
  Chain complexes shall be represented as
\[
\xymatrix{
\cdots \ar[r]^-{\p_{i-2}} & C_{i-1} \ar[r]^-{\p_{i-1}} & C_{i} \ar[r]^-{\p_{i}} &
C_{i+1} \ar[r]^-{\p_{i+1}} & C_{i+2} \ar[r]^-{\p_{i+2}} & \cdots 
}.
\]
In explicit cases given with a basis, $C_i$ shall be represented by
dots, one for each generator, and $\p_i$ shall be represented by edges
joining a generator $x$ to the elements of $\p_i(x)$. For instance,
the following picture:
\[
\vcenter{\hbox{$\xymatrix@!0 @R=1cm @C=2.5cm{
&\bullet\ar@{-}[r]\ar@{-}[dr]&\bullet\ar@{-}[dr]&\\
\bullet\ar@{-}[ur]\ar@{-}[r]\ar@{-}[dr]&\bullet\ar@{-}[ur]\ar@{-}[dr]&\bullet\ar@{-}[r]&\bullet\\
&\bullet\ar@{-}[ur]\ar@{-}[r]&\bullet\ar@{-}[ur]&
}$}}
\]
represents the complex $\xymatrix{\Span(w_1) \ar[r]^-{\p_0} &
  \Span(x_1,x_2,x_3) \ar[r]^-{\p_1} & \Span(y_1,y_2,y_3)
  \ar[r]^-{\p_2} & \Span(z_1)}$, where
$\Span$ denotes the vector space spanned by the given generators
and where
$\p_0 (w_1) =
x_1+x_2+x_3$, $\p_1 (x_1,x_2,x_3) = (y_1+y_2, y_1+y_3, y_2+y_3)$ and
$\p_2(y_1)=\p_2(y_2)=\p(y_3) =z_1$.
\end{nota}
\begin{defi}\label{def:Dual}
  For any $\e$--chain complex ${\chain{C}}$, we define its dual ${\chain{C}}^*$
  as the $(-\e)$--chain complex $\p^*\in\End(C^*)$ defined by
  $C^*:=\poplus_{i\in\Z}\Hom(C_i,\F)$ and $\p^*(\varphi)=\varphi\circ\p$
  for every $\varphi\in C^*$.

We say that a chain complex is \emph{symmetric} if it is isomorphic,
as a chain complex, to its dual.
\end{defi}
\begin{prop}\label{prop:DualTranspose}
  If $\BB$ is a basis for an $\e$--chain complex
  ${\chain{C}}$, then $\Mat_{\BB^*}(\p^*)={}^t\Mat_\BB(\p)$, where
  $\Mat_\BB(f)$ denotes the matrix representing the
  linear map in the basis $\BB$, with the convention that columns are
  the images of the generators, and ${}^t\Mat_\BB(\,.\,)$ denotes its
  transpose.
\end{prop}

\begin{remarque}
  If $\chain{C}$ is given with a basis, then the maps $\p_i$ can be given
  by their matrices. Using the identification
  between an $\F$--space and its dual mentioned in the Notation
  section, $\chain{C}^*$ can be seen as the chain complex obtained by
  reversing all the arrows and transposing all the matrices.\\
  Furthermore, over $\F$, Proposition \ref{prop:DualTranspose} is 
  proven by noting that, for every pair of generators $x$ and $y$,
\[
y\in\p(x)\ \Longleftrightarrow\ 
y^*\big(\p(x)\big)\neq0\ \Longleftrightarrow\ \p^*(y^*)(x)\neq0\
\Longleftrightarrow\ 
x^*\in\p^*(y^*).
\]
So, if $\chain{C}$ is given using Notation \ref{not:Complex}, then $\chain{C}^*$ is obtained by reading
the graph from right to left.
\end{remarque}

\begin{defi}
  For any $\e$--chain complex ${\chain{C}}$ and any integer $i\in\Z$, we define its
 $i^\textrm{th}$  homology group as
 $H_i({\chain{C}}):=\fract{\Ker(\p_i)}/{\Im(\p_{i-\e})}$, and set $H_\bullet({\chain{C}}):=\poplus_{i\in\Z}H_i({\chain{C}})$.
  For any $x\in\Ker(\p)$, we denote by $[x]$ its image in $H_\bullet({\chain{C}})$.
\end{defi}

\begin{defi}
  If ${\chain{C}}$ is an $\e$--chain complex given with a basis $\BB$, then, for each $i\in\Z$ we denote by
  \begin{itemize}
  \item $n_i({\chain{C}}):=\dim(C_i)$ and define the \emph{length} of
    $\chain{C}$ as $n_\chain{C}:=n_0(\chain{C})$;
  \item $k_i({\chain{C}}):=\dim\big(H_i({\chain{C}})\big)$ and define the \emph{dimension} of
    $\chain{C}$ as $k_\chain{C}:=k_0(\chain{C})$;
  \item $d_i({\chain{C}}):=\min\big\{|x|\ \big|\ [x]\in
    H_i({\chain{C}})\setminus\{0\}\big\}$ and define the \emph{minimum distance} of
    $\chain{C}$ as $d_\chain{C}:=d_0(\chain{C})$;
  \item $w_i({\chain{C}}):=\max\big\{|x|\ \big|\ x \textrm{ row of }\Mat_\BB(\p_i)\big\}$ and define the \emph{weight} of
    $\chain{C}$ as $w_\chain{C}:=w_0(\chain{C})$.
  \end{itemize}
\end{defi}
\begin{remarque}\label{rem:Dependances}
   The above parameters have only a relative dependency with regard to
  the basis. Indeed, $w_{\chain{C}}$ depends on the entire $\BB$, $d_{\chain{C}}$
  depends only on its restriction $\BB_{|C_{0}}$, $n_{\chain{C}}$ and $k_{\chain{C}}$ are independent of $\BB$. \end{remarque}

\subsubsection{Operations on chain complexes}

\begin{defi}
Let ${\chain{C}}$ and ${\chain{D}}$ be two
$\e$--chain complexes. We define  their direct sum ${\chain{C}}\oplus{\chain{D}}$ as the $\e$--chain
complex $\p_{\chain{C}}\oplus\p_{\chain{D}}\in\End\Big(\poplus_{i\in\Z}\big(C_i\oplus
D_i\big)\Big)$.
\end{defi}

\begin{prop}
    Let ${\chain{C}}$ and ${\chain{D}}$ be two $\e$--chain complexes
    given with basis. Then
  \begin{itemize}
  \item $\big({\chain{C}}\oplus{\chain{D}}\big)^*\cong{\chain{C}}^*\oplus{\chain{D}}^*$;
  \end{itemize}
and for each $i\in\Z$,
  \begin{itemize}
  \item $H_i({\chain{C}}\oplus{\chain{D}})\cong H_i({\chain{C}})\oplus H_i({\chain{D}})$;
  \item $n_i({\chain{C}}\oplus{\chain{D}})=n_i({\chain{C}})+n_i({\chain{D}})$,
    $k_i({\chain{C}}\oplus{\chain{D}})=k_i({\chain{C}})+k_i({\chain{D}})$,
    $d_i({\chain{C}}\oplus{\chain{D}})=\min\big(d_i({\chain{C}}),d_i({\chain{D}})\big)$
    and
  \item $w_i({\chain{C}}\oplus{\chain{D}})=\max\big(w_i({\chain{C}}),w_i({\chain{D}})\big)$. 
  \end{itemize}
\end{prop}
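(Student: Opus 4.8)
The plan is to work throughout with the $\e$-chain complex $\chain{C}\oplus\chain{D}$ equipped with the basis obtained by concatenating, in each degree $i$, the given basis of $C_i$ and the given basis of $D_i$, so that $(C\oplus D)_i = C_i\oplus D_i$ and the restriction of $\p_{\chain{C}\oplus\chain{D}}$ to degree $i$ is $\p_i^\chain{C}\oplus\p_i^\chain{D}$, whose matrix in the concatenated basis is the block-diagonal matrix $\Mat(\p_i^\chain{C})\oplus\Mat(\p_i^\chain{D})$ (with the usual zero padding when one of the summands vanishes in an adjacent degree). Every assertion is then a short computation from the definitions; I describe them in turn.

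For the dual, I would use that $\Hom(-,\F)$ turns finite direct sums into direct sums and that $(f\oplus g)^*=f^*\oplus g^*$; equivalently, by Proposition~\ref{prop:DualTranspose} the matrix of the dual map in the dual basis is the transpose of a block-diagonal matrix, hence again block-diagonal with the blocks transposed. Matching degrees (recalling that the dual reverses the direction of the grading) yields the chain-complex isomorphism $(\chain{C}\oplus\chain{D})^*\cong\chain{C}^*\oplus\chain{D}^*$. For homology, block-diagonality gives $\Ker(\p_i^{\chain{C}\oplus\chain{D}})=\Ker(\p_i^\chain{C})\oplus\Ker(\p_i^\chain{D})$ and $\Im(\p_{i-\e}^{\chain{C}\oplus\chain{D}})=\Im(\p_{i-\e}^\chain{C})\oplus\Im(\p_{i-\e}^\chain{D})$, the latter sitting inside the former summandwise, so the quotient splits: $H_i(\chain{C}\oplus\chain{D})\cong H_i(\chain{C})\oplus H_i(\chain{D})$, with $[(x,y)]=([x],[y])$. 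Taking dimensions then gives $n_i(\chain{C}\oplus\chain{D})=n_i(\chain{C})+n_i(\chain{D})$ and $k_i(\chain{C}\oplus\chain{D})=k_i(\chain{C})+k_i(\chain{D})$.

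For $d_i$, in the concatenated basis an element of $(C\oplus D)_i$ is a pair $(x,y)$ of Hamming weight $|x|+|y|$, it is a cycle iff both $x$ and $y$ are cycles, and its class $([x],[y])$ is nonzero iff $[x]\ne 0$ or $[y]\ne 0$. If $[x]\ne 0$ then $|x|+|y|\geq|x|\geq d_i(\chain{C})$, and if $[x]=0$ but $[y]\ne 0$ then $|x|+|y|\geq|y|\geq d_i(\chain{D})$; hence every homologically nontrivial cycle has weight at least $\min(d_i(\chain{C}),d_i(\chain{D}))$. Conversely a minimum-weight nontrivial cycle of $\chain{C}$ paired with $0$ — which is a cycle with trivial class — realizes $d_i(\chain{C})$, and symmetrically for $\chain{D}$, so the bound is attained and $d_i(\chain{C}\oplus\chain{D})=\min(d_i(\chain{C}),d_i(\chain{D}))$. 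Finally, for $w_i$, each row of $\Mat(\p_i^\chain{C})\oplus\Mat(\p_i^\chain{D})$ is either a row of $\Mat(\p_i^\chain{C})$ extended by zeros or a row of $\Mat(\p_i^\chain{D})$ extended by zeros, so the multiset of row weights is the union of the two, whence $w_i(\chain{C}\oplus\chain{D})=\max(w_i(\chain{C}),w_i(\chain{D}))$.

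The argument is pure bookkeeping and there is no real obstacle; the only point deserving a moment's care is the $d_i$ computation, where one must check both that no "mixed" cycle $(x,y)$ with both components nonzero can beat the one-sided ones and that the minimum is genuinely attained — both of which follow at once from the observation that padding a cycle with $0$ preserves being a cycle and preserves having trivial homology class.
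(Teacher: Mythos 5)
Your proof is correct and follows essentially the same route as the paper's: the dual, homology, length and dimension statements are the standard block-diagonal/direct-sum facts, and your two-sided argument for $d_i$ (a nontrivial class must have a nontrivial component, giving the lower bound; padding a minimum-weight nontrivial cycle with $0$ attains it) together with the row-weight observation for $w_i$ is exactly what the paper does.
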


\begin{proof}
All the statements, except the one on minimum distances and the one on
weights, are classical results of homological algebra.

For the statement on minimum distances, let $x \in C_i$ and $y \in D_i$ be homologically non trivial
elements of minimum weight. Then $(x,0)$ and $(0,y) \in C_i \oplus D_i$
are homologically non trivial, so
$d_i({\chain{C}}\oplus{\chain{D}})\leq\min\big(d_i({\chain{C}}),d_i({\chain{D}})\big)$.
Conversely, every homologically non trivial
 element $(a,b) \in C_i \oplus D_i$ is such that either $a$ or $b$
is homologically non trivial and its weight is hence larger than either
the weigth of
$(x,0)$ or of $(0, y)$.

For the statement on weights, consider the maps $\partial_{\chain{C}, i} : C_i
\rightarrow C_{i+1}$ and $\partial_{\chain{D}, i} : D_i \rightarrow D_{i+1}$.
Let $M_{\chain{C}, i}$ and $M_{\chain{D}, i}$ be their matrix representations.
Then, the map $\partial_{\chain{C} \oplus \chain{D}, i} : C_i \oplus D_i \rightarrow C_{i+1}\oplus D_{i+1}$ is represented by the matrix:
\[
\begin{pmatrix}
  M_{\chain{C}, i} & 0 \\
  0              & M_{\chain{D}, i}
\end{pmatrix}
\]
which yields the result on the maximum weight of the rows.
\end{proof}

In particular, we emphasize the fact that adding a direct
summand given with its own basis and which has null homology does not
affect the parameters except the length which is increased
consequently.
But conversely, detecting and removing a direct summand may alter the
minimum distance if the basis does not respect the direct sum decomposition.

\begin{defi}
Let ${\chain{C}}$ and ${\chain{D}}$ be two
$\e$--chain complexes. We define the tensor product ${\chain{C}}\otimes{\chain{D}}$ as the $\e$--chain
complex
$\Id_C\otimes\p_{\chain{D}}+\p_{\chain{C}}\otimes\Id_D\in\End\Big(\poplus_{i\in\Z}\big(\poplus_{r\in\Z}(C_r\otimes
D_{i-r})\big)\Big)$.
\end{defi}
\begin{prop}\label{prop:ProductParameters}
    Let ${\chain{C}}$ and ${\chain{D}}$ be two $\e$--chain complexes. Then
  \begin{itemize}
  \item $\big({\chain{C}}\otimes{\chain{D}}\big)^*\cong{\chain{C}}^*\otimes{\chain{D}}^*$;
  \end{itemize}
and for each $i\in\Z$,
  \begin{itemize}
  \item $H_i({\chain{C}}\otimes{\chain{D}})\cong\poplus_{r\in\Z}\big(H_r({\chain{C}})\otimes
    H_{i-r}({\chain{D}})\big)$ (K\"unneth formula);
  \item $n_i({\chain{C}}\otimes{\chain{D}})=\disp{\sum_{r\in\Z}}n_r({\chain{C}}).n_{i-r}({\chain{D}})$,
    $k_i({\chain{C}}\otimes{\chain{D}})=\disp{\sum_{r\in\Z}}k_r({\chain{C}}).k_{i-r}({\chain{D}})$;
  \item if ${\chain{C}}$ and ${\chain{D}}$ were given with bases
    $\BB_\chain{C}$ and $\BB_\chain{D}$, then
    $\BB_\chain{C}\otimes\BB_\chain{D}$ provides a basis for
    $\chain{C}\otimes\chain{D}$ such that
    $w_i({\chain{C}}\otimes{\chain{D}})=\max\big\{w_j(\chain{C}) +
    w_k (\chain{D})\big| j+k = i\big\}$.
  \end{itemize}
\end{prop}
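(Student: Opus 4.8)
The plan is to treat the first three bullets as bookkeeping on top of standard homological algebra over the field $\F$, reserving the genuine work for the weight formula. For the duality isomorphism I would unwind both sides: the degree-$i$ component of $\big(\chain{C}\otimes\chain{D}\big)^*$ is $\poplus_{r\in\Z}(C_r\otimes D_{i-r})^*$, which, all spaces being finite-dimensional, is canonically $\poplus_{r\in\Z}C_r^*\otimes D_{i-r}^*$, the degree-$i$ component of $\chain{C}^*\otimes\chain{D}^*$. Under this identification $\big(\Id_C\otimes\p_{\chain{D}}+\p_{\chain{C}}\otimes\Id_D\big)^*$ becomes $\Id_{C^*}\otimes\p_{\chain{D}}^*+\p_{\chain{C}}^*\otimes\Id_{D^*}$, using $(f\otimes g)^*=f^*\otimes g^*$, $\Id^*=\Id$, and the fact that no Koszul sign appears over $\F$; this is exactly the differential of $\chain{C}^*\otimes\chain{D}^*$.

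For the numerical invariants, the $n_i$ identity is immediate since the degree-$i$ component of $\chain{C}\otimes\chain{D}$ is by definition $\poplus_{r\in\Z}C_r\otimes D_{i-r}$, of dimension $\sum_{r\in\Z}\dim(C_r)\cdot\dim(D_{i-r})$. The K\"unneth isomorphism $H_i(\chain{C}\otimes\chain{D})\cong\poplus_{r\in\Z}H_r(\chain{C})\otimes H_{i-r}(\chain{D})$ is the classical one, with no $\mathrm{Tor}$ correction term since we work over a field, and I would simply invoke \cite{Weibel}; alternatively one argues directly, using that over $\F$ every chain complex splits non-canonically as the direct sum of its homology (placed in the relevant degrees, with zero differential) and a family of two-term acyclic complexes, each an isomorphism from a line in some degree $r$ onto a line in degree $r+\e$ — tensor products distribute over $\oplus$, such a two-term complex tensored with anything is acyclic, and tensoring homology summands in degrees $r$ and $s$ produces the summand $H_r\otimes H_s$ in degree $r+s$. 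Taking dimensions of the degree-$i$ component then gives the $k_i$ identity.

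For the weight formula, I would first record that $\BB_{\chain{C}}\otimes\BB_{\chain{D}}:=\{x\otimes y\mid x\in\BB_{\chain{C}},\,y\in\BB_{\chain{D}}\}$ is a basis of $\chain{C}\otimes\chain{D}$ adapted to its grading, the tensors $x\otimes y$ with $x\in\BB_{\chain{C}}\cap C_r$ and $y\in\BB_{\chain{D}}\cap D_{m-r}$, $r\in\Z$, forming a basis of the degree-$m$ part. Now fix such a basis vector $x\otimes y$ of $(\chain{C}\otimes\chain{D})_i$, with $x\in C_j$, $y\in D_k$, $j+k=i$, and look at the row of $\Mat_{\BB_{\chain{C}}\otimes\BB_{\chain{D}}}(\p_i)$ indexed by it, i.e.\ the list of coordinates of $\p(x\otimes y)=\p_{\chain{C}}(x)\otimes y+x\otimes\p_{\chain{D}}(y)$. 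Its two summands lie respectively in $C_{j+\e}\otimes D_k$ and $C_j\otimes D_{k+\e}$, which are \emph{distinct} direct summands of $(\chain{C}\otimes\chain{D})_{i+\e}$ since $\e\neq0$; hence no cancellation occurs, and, $y$ (resp.\ $x$) being a single basis vector, $|\p_{\chain{C}}(x)\otimes y|=|\p_{\chain{C}}(x)|$ (resp.\ $|x\otimes\p_{\chain{D}}(y)|=|\p_{\chain{D}}(y)|$). So this row has weight $|\p_{\chain{C}}(x)|+|\p_{\chain{D}}(y)|$; maximising first over the basis vectors of $C_j$ and of $D_k$, then over the decompositions $i=j+k$, yields $w_i(\chain{C}\otimes\chain{D})=\max\{w_j(\chain{C})+w_k(\chain{D})\mid j+k=i\}$.

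Everything is routine once the K\"unneth theorem is granted; the one step I would flag is the last one, namely checking that the two cross terms $\p_{\chain{C}}(x)\otimes y$ and $x\otimes\p_{\chain{D}}(y)$ really fall into different summands of the target, so that their Hamming weights add with no interference. This is exactly where it matters that $\p$ raises the degree by precisely $\e$ and that $\BB_{\chain{C}}\otimes\BB_{\chain{D}}$ refines the decomposition $(\chain{C}\otimes\chain{D})_m=\poplus_{r\in\Z}C_r\otimes D_{m-r}$. Over an arbitrary ground field one would also have to carry Koszul signs through the duality and K\"unneth steps, but over $\F$ they vanish, so I do not anticipate any real obstacle.
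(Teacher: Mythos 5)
Your proposal is correct and follows essentially the same route as the paper, which likewise dismisses the duality, Künneth, length and dimension statements as classical and reduces the weight formula to the observation that the matrix of $\p_{\chain{C}\otimes\chain{D}}$ is a sum of Kronecker products $\Mat(\p_{\chain{C}})\otimes\Id$ and $\Id\otimes\Mat(\p_{\chain{D}})$ whose supports land in distinct graded summands of the target. Your element-wise verification that $\p_{\chain{C}}(x)\otimes y$ and $x\otimes\p_{\chain{D}}(y)$ cannot interfere is exactly the content of that remark, just spelled out.
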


\begin{proof}
The statements on duals, homologies and lengths are classical results of homological algebra.
The statement on weights follows from
$\Mat_{\BB_\chain{C}\otimes\BB_\chain{D}}(\p_{\chain{C}\otimes\chain{D}})$
being obtained as a sum of Kronecker products of the form
$\Mat_{\BB_\chain{C}}(\p_{\chain{C}}) \otimes \textrm{Id}$ and
$\textrm{Id} \otimes \Mat_{\BB_\chain{D}}(\p_{\chain{D}})$. 
\end{proof}

The isomorphism of K\"unneth formula, proven for instance in \cite[Thm
3.6.3]{Weibel}\footnote{Noting that, since we are working over a
  field, any module is flat and hence 
  Tor is zero.}, is actually induced from maps defined at the chain
complex level. It induces hence the following proposition which shall
be needed further in the proof of Lemma \ref{lem:Th}. 

\begin{prop}\label{prop:TechKunneth}
  If, for each $r\in\Z$, $x^r_1,\ldots x^r_{j_r}\in C_r$ and
  $y^r_1,\ldots y^r_{j'_r}\in D_r$ induce a basis for, respectively,
  $H_r(\chain{C})$ and $H_r(\chain{D})$, then, for every $i\in\Z$, the elements of the form
  $x_j^{r}\otimes y_{j'}^{i-r}$ induce a basis for $H_i(\chain{C}\otimes\chain{D})$. 
\end{prop}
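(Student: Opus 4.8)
The plan is to promote the K\"unneth isomorphism from an abstract statement to one about specified bases, using the fact — recalled in the remark just above — that it is induced by an explicit map defined at the chain level.

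First I would make this chain-level map explicit. Given homogeneous cycles $x\in\Ker(\p_{\chain{C}})\cap C_r$ and $y\in\Ker(\p_{\chain{D}})\cap D_s$, the element $x\otimes y$, which is homogeneous of degree $r+s$ in $C\otimes D$, is again a cycle: by the very definition of $\p_{\chain{C}\otimes\chain{D}}$,
\[
\p_{\chain{C}\otimes\chain{D}}(x\otimes y)=x\otimes\p_{\chain{D}}(y)+\p_{\chain{C}}(x)\otimes y=0.
\]
Moreover, if $x=\p_{\chain{C}}(a)$ is a boundary, then $x\otimes y=\p_{\chain{C}\otimes\chain{D}}(a\otimes y)$ since $\p_{\chain{D}}(y)=0$, and symmetrically in the second variable. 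Hence $([x],[y])\mapsto[x\otimes y]$ is a well-defined bilinear map, and, summing over $r$, it induces a linear map
\[
\mu\colon\poplus_{r\in\Z}\big(H_r(\chain{C})\otimes H_{i-r}(\chain{D})\big)\longrightarrow H_i(\chain{C}\otimes\chain{D}),\qquad[x]\otimes[y]\longmapsto[x\otimes y].
\]
Over $\F$ there are no signs, so all these verifications are immediate. I would then invoke the K\"unneth theorem over the field $\F$ in the form used in the remark above, i.e. \cite[Thm 3.6.3]{Weibel} with vanishing Tor term, which states precisely that this map $\mu$ is an isomorphism.

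The remaining step is pure linear algebra. For each $r$, the classes $[x_j^r]$ form a basis of $H_r(\chain{C})$ and, for each $s$, the classes $[y_{j'}^s]$ form a basis of $H_s(\chain{D})$; since a tensor product of bases is a basis and a direct sum of bases is a basis, the family $\big\{[x_j^r]\otimes[y_{j'}^{i-r}]\big\}_{r,j,j'}$ is a basis of the source of $\mu$. Applying the isomorphism $\mu$, which by construction sends $[x_j^r]\otimes[y_{j'}^{i-r}]$ to $[x_j^r\otimes y_{j'}^{i-r}]$, we conclude that the classes $[x_j^r\otimes y_{j'}^{i-r}]$ form a basis of $H_i(\chain{C}\otimes\chain{D})$, which is the claim.

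There is no real obstacle here beyond bookkeeping: the only non-formal ingredient is the K\"unneth theorem itself, which is quoted, and the one thing that genuinely matters is that the isomorphism it provides is this concrete map $\mu$ built from the cross product rather than some unspecified isomorphism — that is exactly what allows us to read off the images of the chosen basis vectors.
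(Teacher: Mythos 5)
Your proof is correct and follows exactly the route the paper intends: the paper gives no separate proof but justifies the proposition by the remark that the K\"unneth isomorphism of \cite[Thm 3.6.3]{Weibel} is induced by a chain-level map, and your argument simply makes that map $[x]\otimes[y]\mapsto[x\otimes y]$ explicit, checks it is well defined (trivially over $\F$, with Tor vanishing over a field), and transports the evident basis of $\poplus_r H_r(\chain{C})\otimes H_{i-r}(\chain{D})$ through it. Nothing is missing.
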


\begin{remarque}
  Evaluating $d_i({\chain{C}}\otimes{\chain{D}})$ is less
  straightforward and it shall be the aim of Section \ref{sec:Th}.
\end{remarque}

\subsubsection{Short complexes and reduction}
\label{sec:Reduction}

In this paper, we shall be mostly interested in length 3 chain
complexes centered around degree zero. This motivates the following definitions.
\begin{defi}\label{def:ShortEtc}
  A chain complex ${\chain{C}}$ is said to be a \emph{short complex}
  if it has a support contained in $\{-1,0,1\}$. It is hence of the
  following form:
  $\xymatrix{C_{-1}\ar[r]^-{\p_{-1}}&C_{0}\ar[r]^-{\p_{0}}&C_{1}}$.
The chain complex is said to be \emph{balanced}
if it has non trivial
homology only in degree zero.
A balanced short complex is said to be \emph{reduced}. For a short
complex, being reduced is equivalent to require
$\p_{-1}$ to be injective and $\p_{0}$ to be surjective. So the chain complex is of the
form $\xymatrix{C_{-1}\ar@{^(->}[r] & C_{0} \ar@{->>}[r] &
  C_{1}}$. Equivalently, it consists in requiring that
$\dim\big(H_{0}({\chain{C}})\big)=\dim(C_{0})-\dim(C_{-1})-\dim(C_{1})$.

Note that a short complex $\xymatrix{
  C_{-1} \ar[r]^{\p_{-1}} & C_0 \ar[r]^{\p_0} & C_1}$ is symmetric if
and only if $C_{-1}\simeq 
C_1^*$ and $\p_0 = \p_{-1}^*$.
\end{defi}

Any chain complex $\chain{C}$ can be turned into a short one by truncating the
degrees higher than 1 and lower than $-1$. More precisely, by shifting
beforehand the
degree, one can extract any length
3 portion of $\chain{C}$.
However, the result is generally not balanced, even if $\chain{C}$ was.
There is nonetheless a reduction process to turn a short complex into
a reduced one (almost) without altering its parameters.
Indeed, if
  ${\chain{C}}:=\xymatrix{C_{-1}\ar[r]^{\p_{-1}}&C_{0}\ar[r]^(.45){\p_{0}}&C_{1}}$
  is given with a basis $\BB$, and if ${\chain{D}}$
  is obtained from ${\chain{C}}$ by removing all redundant rows
 of ${}^t\Mat_\BB(\p_{-1})$ and/or $\Mat_\BB(\p_{0})$ and by modifying
 $C_{-1}$ and $C_{1}$ consequently, then $\chain{D}$ is reduced and it is mostly a consequence of
 Remark \ref{rem:Dependances} that $n_{\chain{D}}=n_{\chain{C}}$, $k_{\chain{D}}=k_{\chain{C}}$,
  $d_{\chain{D}}=d_{\chain{C}}$ and $w_{\chain{D}}\leq
  w_{\chain{C}}$. From a linear algebraic point of view, it consists in replacing $C_{-1}$
  by a complement space for $\Ker(\p_{-1})$ spanned by vectors of $\BB$, and replacing $C_1$ by
  its quotient under a complement space of $\Im(\p_0)$ spanned by vectors of $\BB$.
  This process is however non canonical
  since it requires the choice of complement spaces, or equivalently, the
  choice of the redundant rows to be removed.

As a conclusion, any length 3 portion of a chain complex can be
grading-shifted so it is centred in
degree zero, and then the above (non canonical) reduction process can turn
it into a reduced complex without altering the
parameters, except the weight which may even be decreased.

\begin{remarque}\label{rem:AltReduc}
  Adding $\Ker(\p_{-1})$ in degree $-2$ and $\Coker(\p_0)$ in degree
  2 is another way to turn a short complex into a balanced one. The
  chain complex is then of length five. This provides a
  balancing 
  process which is canonical
  and which preserves all the parameters. However, for length reasons, we shall consider in this paper, only
  the (non canonical) reduction process, and not the (canonical)
  balancing one.
\end{remarque}


\subsection{Classical codes}
\label{sec:ClassicalCodes}
As they shall play a keyrole in several constructions, we set here some
notation on classical codes.
A classical code $\code{C}$ is a subspace of an
$\F$--space $E$ given with a
basis $\BB_E$. It can be described by either a generating map
$g_{\code{C}}:\xymatrix{A\ar@{^(->}[r]&E}$ such that
$\Im(g_{\code{C}})=\code{C}$ or a parity-check map
$p_{\code{C}}:\xymatrix{E\ar@{->>}[r]&B}$ such that
$\Ker(p_{\code{C}})=\code C$. For any such code, we define:
\begin{itemize}
\item its length $n_{\code{C}}$ as the dimension of $E$;
\item its dimension $k_{\code{C}}$ as the dimension of $\code C$;
\item its minimum distance $d_{\code{C}}$ as the minimum weight for a
  non trivial element of $\code C$, using the basis $\BB_E$;
\item its weight as the maximal weight of a row of
  $\Mat_{\BB_E,\BB_B}(p_{\code{C}})$, where $\BB_{B}$ is a given basis
  for $B$. 
\end{itemize}

We define the {\em dual} of $\code{C}$ as the code $\code{C}^\perp$
defined by $\code{C}^\perp\subset E^*$, which can also be seen,
using the identification mentioned in the Notation section, as
$\big\{x\in E\ \big|\ \forall y\in
\code C, \langle x,y\rangle=0\big\}\subset E$.
It is easily checked that $p^*_{\code{C}}$ and $g^*_{\code{C}}$ are,
respectively, a generating map and a parity-check map for
$\code{C}^\perp$ so that, up to transpose, $\code{C}$ and $\code{C}^\perp$
exchange their generating and parity-check matrices; and that $n_{\code{C}^\perp}=n_{\code{C}}$ and $k_{\code{C}^\perp}=n_{\code{C}}-k_{\code{C}}$.


\subsection{CSS codes}
\label{sec:CSS}
CSS codes were developped in \cite{Calderbank,Steane}. They are a special case of
stabilizer quantum error
correcting codes associated to pairs of orthogonal classical
codes, that is codes $\code{C}_1$ and $\code{C}_2$ such that
$\code{C}_2\subseteq\code{C}^\perp_1$; or equivalently to matrices $\mathbf{H}_X$ and $\mathbf{H}_Z$ such that $\mathbf{H}_X {}^t\mathbf{H}_Z=0$. A quick review can be found in section 1.1 of \cite{Audoux} but
for a more comprehensive treatment, we refer the reader to
\cite{Nielsen,Preskill,Delfosse}.
  A CSS code is said to be {\em symmetric} if $\code{C}_1 = \code{C}_2$ or, equivalently,
  if $\mathbf{H}_X = \mathbf{H}_Z$.

  \begin{remarque}
    The terminology of {\em symmetric CSS} codes is non standard. Such
    codes are sometimes referred to as {\em weakly self dual CSS
      codes} in the literature.  We preferred use the term {\em
      symmetric} since it is coherent with our terminology of
    symmetric chain complexes. Indeed, if a chain complex is
    symmetric, then the corresponding CSS code is symmetric.
  \end{remarque}
  
For a CSS code $\qcode{C}$, some relevant parameters are
\begin{itemize}
\item $n_{\qcode{C}}$ the length of ${\qcode{C}}$,
that is the common length of
the codes $\code{C}_1$ and $\code{C}_2$ ;
\item $k_{\qcode{C}}$ the dimension of ${\qcode{C}}$, that is the dimension of $\fract{\code{C}_1^{\bot}}/{\code{C}_2}$
\item $d_{\qcode{C}}$ the minimum distance of ${\qcode{C}}$, that is
  the minimum weight of an element of $(\code{C}_1^\perp\setminus\code{C}_2)\cup (\code{C}_2^\perp\setminus\code{C}_1)$;
\item $w_{\qcode{C}}$ the weight of ${\qcode{C}}$, that is the highest weight realized by a row of $\mathbf{H}_X$ or $\mathbf{H}_Z$.
\end{itemize}
They shall be gathered in the notation $\llbracket
n_{\qcode{C}};k_{\qcode{C}};d_{\qcode{C}};w_{\qcode{C}}\rrbracket$.

Chain complexes turn out to be efficient for constructing such CSS
codes. Indeed, once equipped with a basis, they not only
naturally provide matrices whose product is zero, but
parameters can also be read from them, their duals and the associated homologies.
The following classical statement reformulates the usual matrix-based
description of CSS codes in terms of chain complexes, and this allows a more intrinsic
description of these objects. Similar statements appear,
for instance, in \cite[Prop. 1.7]{Audoux} or \cite{Delfosse}.

\begin{prop}\label{prop:ChainToCSS}
  To a short complex
  ${\chain{C}}:=\xymatrix{C_{-1}\ar[r]^{\p_{-1}}&C_{0}\ar[r]^(.45){\p_{0}}&C_{1}}$
  given with a basis $\BB$, there is an associated CSS code
  ${\qcode{C}}:=\left(\Mat_\BB(\p_{0}),{}^t\Mat_\BB(\p_{-1})\right)$
  with parameters $n_{\qcode{C}}=n_{\chain{C}}$,
  $k_{\qcode{C}}=k_{\chain{C}}$,
  $d_{\qcode{C}}=\min\big(d_{\chain{C}},d_{{\chain{C}}^*}\big)$ and
  $w_{\qcode{C}}=\max\big(w_{\chain{C}},w_{{\chain{C}}^*}\big)$.
\end{prop}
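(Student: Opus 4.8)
The plan is to unwind the matrix description of CSS codes from Section~\ref{sec:CSS} together with the chain-complex invariants $n_i,k_i,d_i,w_i$ defined above, and then match them one at a time; the statement is essentially a dictionary, and the only place needing care is the grading of the dual complex. Abbreviate $\mathbf{H}_X := \Mat_\BB(\p_0)$ and $\mathbf{H}_Z := {}^t\Mat_\BB(\p_{-1})$, so $\qcode{C} = (\mathbf{H}_X,\mathbf{H}_Z)$, and let $\code{C}_1,\code{C}_2$ be the classical codes generated by the rows of $\mathbf{H}_X$ and of $\mathbf{H}_Z$, so that $\code{C}_1^\perp = \Ker(\mathbf{H}_X) = \Ker(\p_0)$, $\code{C}_2 = \Im(\p_{-1})$ and $\code{C}_2^\perp = \Ker(\mathbf{H}_Z)$. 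The first thing to verify is that $\qcode{C}$ is a bona fide CSS code: since $\p$ is $2$--nilpotent and $\p(C_{-1})\subseteq C_0$, $\p(C_0)\subseteq C_1$, one has $\p_0\circ\p_{-1} = \p^2|_{C_{-1}} = 0$, hence $\mathbf{H}_X\,{}^t\mathbf{H}_Z = \Mat_\BB(\p_0)\Mat_\BB(\p_{-1}) = \Mat_\BB(\p_0\p_{-1}) = 0$; equivalently $\code{C}_2 = \Im(\p_{-1})\subseteq\Ker(\p_0) = \code{C}_1^\perp$, so $(\code{C}_1,\code{C}_2)$ is an orthogonal pair.

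Length and dimension then come straight from the definitions: $n_{\qcode{C}}$ is the common number of columns of $\mathbf{H}_X$ and $\mathbf{H}_Z$, i.e.\ $\dim C_0 = n_0(\chain{C}) = n_{\chain{C}}$, and $k_{\qcode{C}} = \dim(\code{C}_1^\perp/\code{C}_2) = \dim\big(\Ker(\p_0)/\Im(\p_{-1})\big) = \dim H_0(\chain{C}) = k_{\chain{C}}$; by rank--nullity this also equals $\dim H_0(\chain{C}^*)$, which is the symmetry underlying the CSS dimension.

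For the minimum distance and the weight I would bring in the dual complex. By Definition~\ref{def:Dual} and Proposition~\ref{prop:DualTranspose}, and after identifying each $C_i$ with $C_i^*$ through $\BB$ (an identification sending $\BB$ to $\BB^*$, hence preserving Hamming weights), $\chain{C}^*$ is the short complex obtained from $\chain{C}$ by reversing and transposing the arrows, namely $C_1 \xrightarrow{\ {}^t\mathbf{H}_X\ } C_0 \xrightarrow{\ \mathbf{H}_Z\ } C_{-1}$, whose degree-$0$ homology is $H_0(\chain{C}^*) = \Ker(\mathbf{H}_Z)/\Im({}^t\mathbf{H}_X) = \code{C}_2^\perp/\code{C}_1$. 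Now $d_{\qcode{C}}$ is the least weight, in the basis $\BB$ of $C_0$, of an element of $(\code{C}_1^\perp\setminus\code{C}_2)\cup(\code{C}_2^\perp\setminus\code{C}_1)$; an element of $\code{C}_1^\perp\setminus\code{C}_2 = \Ker(\p_0)\setminus\Im(\p_{-1})$ is exactly a representative of a nonzero class of $H_0(\chain{C})$, so the least weight over this part is $d_0(\chain{C}) = d_{\chain{C}}$, and symmetrically the least weight over $\code{C}_2^\perp\setminus\code{C}_1$ is $d_0(\chain{C}^*) = d_{\chain{C}^*}$, whence $d_{\qcode{C}} = \min(d_{\chain{C}},d_{\chain{C}^*})$ (both $+\infty$ when $k_{\qcode{C}}=0$, with the convention $\min\emptyset = +\infty$). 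Finally $w_{\qcode{C}}$ is the largest row weight occurring in $\mathbf{H}_X$ or $\mathbf{H}_Z$: the former is $w_0(\chain{C}) = w_{\chain{C}}$ by the very definition of $w_0$, and $\mathbf{H}_Z = {}^t\Mat_\BB(\p_{-1})$ is, by Proposition~\ref{prop:DualTranspose}, precisely the matrix of the degree-$0$ map of $\chain{C}^*$, so its largest row weight is $w_0(\chain{C}^*) = w_{\chain{C}^*}$; hence $w_{\qcode{C}} = \max(w_{\chain{C}},w_{\chain{C}^*})$.

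There is no real obstacle beyond bookkeeping. The subtlety worth stating explicitly is the grading of the dual: dualizing turns the $+1$--complex $\chain{C}$ into a $-1$--complex, so degree $0$ is preserved but the two arrows get swapped and transposed — this is exactly why the ``$X$--side'' of $\qcode{C}$ is governed by $\chain{C}$ and the ``$Z$--side'' by $\chain{C}^*$, and why the minimum distance and the weight are the $\min$, resp.\ $\max$, of the two. One should also keep in mind that all four CSS parameters are read off $C_0$ (and the adjacent maps), consistently with Remark~\ref{rem:Dependances}.
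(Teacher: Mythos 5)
Your proof is correct and amounts to the same dictionary the paper has in mind: the paper gives no explicit proof, treating the statement as a classical reformulation via the correspondence $\code{C}_1=\Ker(\p_0)^\perp$, $\code{C}_2=\Im(\p_{-1})$ stated right after it, and your argument is precisely the routine unwinding of that correspondence (including the correct identification of $\mathbf{H}_Z={}^t\Mat_\BB(\p_{-1})$ with the degree-$0$ map of $\chain{C}^*$ via Proposition~\ref{prop:DualTranspose}).
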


Compared to the definition of CSS codes given above, the
codes $\code{C}_1$ and $\code{C}_2$ correspond to $\code C_1  = \Ker (\p_{0})^\perp$ and
$\code C_2 = \Im (\p_{-1})$, or equivalently to $\code C_1 = \Im(\p_{0}^*)$
and $\code C_2 = \Ker (\p^*_{-1})^\perp$.
Conversely, two matrices $\mathbf{H}_X$ and $\mathbf{H}_Z$ such that
$\mathbf{H}_X {}^t\mathbf{H}_Z=0$ provide a chain complex
\[
\xymatrix{0\ar[r]&\F^{k_1}\ar[r]^-{{}^t\mathbf{H}_Z}&\F^{n_{\qcode{C}}}\ar[r]^-{\mathbf{H}_X}&\F^{k_2}\ar[r]&0}
\]
\noi where $k_1$ and $k_2$ are, respectively, the numbers of rows in
$\mathbf{H}_Z$ and $\mathbf{H}_X$. There is hence a one-to-one
correspondence between CSS codes and, up to isomorphisms, short complexes given with a
basis. However, as a consequence of the
discussion on reduction given in Section \ref{sec:Reduction}, removing redundant rows in
$\mathbf{H}_Z$ and $\mathbf{H}_X$ does not affect the parameters,
except the weight which may even be decreased. It is hence natural to focus on CSS codes associated to reduced complexes.

\begin{remarque}\label{rem:BH}
The data of a $2$--nilpotent map $\p\in\End(C)$
is actually sufficient to construct a CSS code as $\code{C}_1 =
\Ker(\p)^\perp$ and $\code{C}_2 = \Im(\p)$. This code is actually the code 
associated to the
short complex $\xymatrix{C \ar[r]^{\p} & C\ar[r]^{\p} & C}$.
Note that from every pair of classical codes $\code{C}_1,
\code{C}_2$ such that $\code{C}_2 \subseteq \code{C}_1^{\bot}$,
 one can always construct a 2--nilpotent map $\p : \F^n
\rightarrow \F^n$ whose image is $\code{C}_2$ and kernel is
$\code{C}_1^{\bot}$.
This means that every quantum code can be represented by a $2$--nilpotent
chain complex.
This description of quantum CSS code from $2$--nilpotent
map is used in \cite{Hastings} and shall be discussed in Section \ref{sec:BH}.
\end{remarque}



\section{Tensor products of CSS codes}
\label{sec:Main}

\subsection{Definitions}
\label{sec:CSSpowers}
As mentioned in the previous section, CSS codes are in one-to-one correspondence
with short complexes. As such, they inherit the notions of
direct sum and tensor product. The former is well defined since it
sends short (respectively reduced) complexes to short (respectively reduced)
complexes. The latter requires some more attention.
Indeed, the tensor product of two short complexes is, in general, not
short anymore but of length 5. One way to correct this shortcoming is
to roughly truncate.
\begin{defi}
  For $\big(\qcode{C}_i\big)_{i\in I}$ a finite family of CSS codes, we define
  $\potimes_{i\in I}\qcode{C}_i$ as the CSS code associated to the
  degrees $\{-1,0,1\}$--truncation of $\potimes_{i\in I}\chain{C}_i$,
  where, for each $i\in I$, $\chain{C}_i$ is the short complex
  associated to $\qcode{C}_i$. When the family is made of $\ell$
  copies of the same code
  $\qcode{C}$, we also denote it by $\qcode{C}^{\otimes\ell}$.
\end{defi}

However, the truncation produces two major drawbacks:
\begin{itemize}
\item the operation is not associative in the sense that, in general,
  $(\potimes_{i\in I}\qcode{C}_i)\otimes (\potimes_{i\in
    J}\qcode{C}_i)\ncong (\potimes_{i\in
    I\sqcup J}\qcode{C}_i)$;
\item the result is, in general, not reduced even if all the factors are. 
\end{itemize}

To remedy the second issue, one can use the reduction process described in
Section \ref{sec:Reduction}.

\begin{defi}
  For $\big(\qcode{C}_i\big)_{i\in \Inter{1}{k}}$ a finite family
  of CSS codes, we define recursively
  $\protimes_{i\in \Inter{1}{k}}\qcode{C}_i$ as the CSS code associated to the
  reduction of the
  degrees $\{-1,0,1\}$--truncation of $\chain{C}_{1\cdots (k-1)}\otimes \chain{C}_{k}$,
  where $\chain{C}_{1\cdots(k-1)}$ is the reduced complex
  associated to $\protimes_{i\in \Inter{1}{k-1}}\qcode{C}_i$ and $\chain{C}_{k}$ the short complex
  associated to $\qcode{C}_{k}$. When the family is made of $\ell$
  copies of the same code
  $\qcode{C}$, we also denote it by $\qcode{C}^{\rotimes\ell}$.
\end{defi}

\begin{warning}
  The notation $\rotimes$ is an abuse of notation since it is not
  canonically defined and requires, at each step, the choice of
  redundant rows to be removed.
\end{warning}

\begin{remarque}
  In order to get a canonical notion of somehow reduced tensor product, one
  can use the balancing process mentioned in Remark
  \ref{rem:AltReduc}. However, it ends with slightly longer codes.
\end{remarque}


\subsection{A minimum distance result}

\label{sec:Th}
Let ${\chain{C}}$ be a chain complex given with bases
$(a_1,\ldots,a_{n_{-1}})$ for $C_{-1}$, $(b_1,\ldots,b_{n_0})$ for
$C_0$ and $(c_1,\ldots,c_{n})$ for
$\poplus_{i\in\Z\setminus\{-1,0\}} C_i$.  We denote the matrix
associated to $\p_{-1}$ by
\[
  \left(\begin{array}{ccc}\lambda_{11}&\cdots&\lambda_{1n_{-1}}\\\vdots&&\vdots\\\lambda_{n_01}&\cdots&\lambda_{n_0n_{-1}}\end{array}\right)=\Bigg(\Lambda_1,\ldots,\Lambda_{n_{-1}}\Bigg).
\]
We fix $g_{1},\ldots,g_{r}$ elements in $\Ker(\p_0)$ which
generate a basis of $H_0({\chain{C}})$, and set
$g_{j}:=\disp{\sum_{i=1}^{n_0}}\gamma_i^jb_i$ for
all $j\in\Inter{1}{r}$. Then we complete $g_{1},\ldots,g_{r}$ in two steps, first into a
basis of $\Ker(\p_0)$, and then into a basis of $C_0$.
Finally we define, for all $j_0\in\Inter{1}{r}$,
\[
\Ker_{j_0}^\cperp:=g^*_{j_0}+\Ker(\p_0)^\perp
\]
which, roughly speaking, corresponds to the
elements in $C_0$ which are orthogonal to any generator of
$\Ker(\p_0)$ but $g_{j_0}$.\footnote{Note that, since we are working
  over $\F$, $\Ker_{j_0}^\cperp$ is an affine subspace, but over any
  other field, it should be the union of affine subspaces
  $\pcup_{\lambda\in\mathds{K}^*}\lambda g^*_{j_0}+\Ker(\p_0)^\perp$.}

  \begin{remarque}\label{rk:CohomNonTriv}
    Elements of $\Ker_{j_0}^\cperp$ are in
    $\Im(\p_{-1})^\perp=\Ker(\p_{-1}^*)$, but not in
    $\Ker(\p_0)^\perp=\Im(\p_0^*)$, they are hence cohomologically non
    trivial elements.
  \end{remarque}
  
\begin{defi}\label{def:overlap}
For every subset $\Omega\subset C_0$, we define $\overlap(\Omega):=\displaystyle{\max_{i\in\Inter{1}{n_0}}}\Big|\big\{p\in\Omega\ \big|\
  b_i\in p\big\}\Big|$. If stacking, as rows of a matrix, the elements of
$\Omega$, it corresponds to the maximum weight of a column.
\end{defi}

\begin{lemme}\label{lem:Th}
  Let $N,K\in\N^*$.
  If, for any $j_0\in\Inter{1}{r}$, there exists $\Omega_{j_0}\subset
  \Ker^{\cperp}_{j_0}$ such that $|\Omega_{j_0}|\geq N$ and
  $\overlap(\Omega_{j_0})\leq K$,
  then, for every chain complex ${\chain{D}}$ such that either $\chain{C}$
  or $\chain{D}$ is balanced,
\[
d_{{\chain{C}}\otimes{\chain{D}}}\geq \left\lceil\frac{N}{K}d_{\chain{D}}\right\rceil.
\]
\end{lemme}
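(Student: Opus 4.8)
The plan is to bound the weight of an arbitrary homologically nontrivial element $z \in (\chain{C}\otimes\chain{D})_0$ from below, and then do the same for $\chain{C}^*\otimes\chain{D}^*$ by symmetry (via Proposition~\ref{prop:ProductParameters}), since $d_{\chain{C}\otimes\chain{D}}$ in the CSS sense is the minimum of the two. Recall $(\chain{C}\otimes\chain{D})_0 = \bigoplus_{r\in\Z}(C_r\otimes D_{-r})$; the balanced hypothesis on $\chain{C}$ or $\chain{D}$ ensures, via the K\"unneth formula (Proposition~\ref{prop:TechKunneth}), that $H_0(\chain{C}\otimes\chain{D})$ is carried by tensors $x_j^r\otimes y_{j'}^{-r}$ of homology representatives, and in the balanced case only the $r=0$ term survives, i.e.\ $H_0$ is generated by $g_j\otimes [\text{cycles of }\chain D]$. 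So first I would reduce to: every cycle $z$ whose class is nontrivial must pair nontrivially, via cohomology, with some $g_{j_0}\otimes \eta$ where $\eta$ is a cohomologically nontrivial element of $\chain{D}$ realizing a class dual to a $d_{\chain D}$-witnessing cycle.

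Second, I would fix such a $j_0$ and use the family $\Omega_{j_0}\subset \Ker^{\cperp}_{j_0}$ supplied by hypothesis. Each $\omega\in\Omega_{j_0}$ is, by Remark~\ref{rk:CohomNonTriv}, a cohomologically nontrivial element of $\chain C$ that pairs to $1$ with $g_{j_0}$ and to $0$ with the other basis cycles $g_j$. The key construction is to build, for each $\omega\in\Omega_{j_0}$, a cocycle of $\chain{C}\otimes\chain{D}$ of the form $\omega\otimes\xi$ (with $\xi$ a cocycle of $\chain D^*$ living in the degree dual to $\eta$) that detects the class of $z$; concretely $\langle z,\ \omega\otimes\xi\rangle\neq 0$ must hold for \emph{at least one} $\omega$, but in fact I want to argue it holds for many. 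The decisive point is a counting/averaging argument: writing $z$ in the basis $\{b_i\otimes (\text{basis of }D)\}$, the pairing $\langle z, \omega\otimes\xi\rangle$ is a sum over the support of $z$; summing over all $\omega\in\Omega_{j_0}$ and using $\overlap(\Omega_{j_0})\le K$, each basis element $b_i$ (hence each coordinate direction of $C_0$) is "hit" by at most $K$ of the $\omega$'s. Meanwhile the $\xi$ part contributes a fixed nonzero cohomology pairing with the $\chain D$-component, whose support has size $\ge d_{\chain D}$ by minimality. Combining: the number of $\omega\in\Omega_{j_0}$ that detect $z$ is at least $|\Omega_{j_0}|/K \ge N/K$ only if $z$ is heavy — more precisely, each detecting $\omega$ forces at least $d_{\chain D}$ distinct basis coordinates of $z$ to be nonzero, and the $\overlap\le K$ condition limits how much these supports can overlap across different $\omega$'s, yielding $|z|\ge \lceil (N/K)\, d_{\chain D}\rceil$.

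Third, for the $\chain{C}^*\otimes\chain{D}^*$ side I would invoke that $(\chain C\otimes\chain D)^*\cong\chain C^*\otimes\chain D^*$ and that the hypotheses of the lemma are not symmetric in $\chain C$ between $\chain C$ and $\chain C^*$ — but here the CSS minimum distance is $\min(d_{\chain C\otimes\chain D}, d_{(\chain C\otimes\chain D)^*})$; however the lemma as stated only claims $d_{\chain C\otimes\chain D}\ge\lceil (N/K)d_{\chain D}\rceil$ at the chain-complex level (degree $0$ homology), so actually only the first two paragraphs are needed and no dualization step is required — the statement is about $d_{\chain C\otimes\chain D}$, i.e.\ $d_0$, not the CSS distance. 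Good: I would drop the symmetry remark.

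The main obstacle I anticipate is making the averaging argument in step two fully rigorous: one must show that the \emph{same} cochain $\xi$ on the $\chain D$-side can be used uniformly for all $\omega\in\Omega_{j_0}$ (this needs $\chain D$ balanced, or $\chain C$ balanced to force $r=0$), and one must carefully track that "$\omega$ detects $z$" translates into a genuine lower bound of $d_{\chain D}$ \emph{new} coordinates in the support of $z$ rather than merely a nonzero pairing. Handling the case where $\chain C$ (not $\chain D$) is the balanced one — so that higher-degree K\"unneth terms $C_r\otimes D_{-r}$ with $r\neq 0$ a priori appear in $z$ but cannot contribute to a nontrivial class — will require an extra argument that the projection of $z$ to $C_0\otimes D_0$ already carries the class, or else a more global pairing computation. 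I expect this bookkeeping, rather than any deep idea, to be where the real work lies.
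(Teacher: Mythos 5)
Your overall architecture matches the paper's: reduce via K\"unneth to a class of the form $\sum_j g_j\otimes\mathfrak{g}_j$ with some $[\mathfrak{g}_{j_0}]\neq 0$ in $H_0(\chain D)$, then average over $\Omega_{j_0}$ and use $\overlap(\Omega_{j_0})\leq K$ to control double counting. You are also right that no dualization is needed and that the $r\neq 0$ K\"unneth summands can be discarded (the paper simply projects onto the $b_i\otimes\cdot$ components of $x_0=\sum_i a_i\otimes\mathfrak a_i+\sum_i b_i\otimes\mathfrak b_i+\sum_i c_i\otimes\mathfrak c_i$ and throws the rest away, since $|x_0|\geq\sum_i|\mathfrak b_i|$).

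However, the central mechanism you propose — detecting $z$ by pairing it against cocycles $\omega\otimes\xi$ for a \emph{fixed} cochain $\xi$ on the $\chain D$ side — cannot produce the factor $d_{\chain D}$. A nonzero scalar pairing $\langle z,\omega\otimes\xi\rangle\neq 0$ only certifies that the supports intersect in at least one coordinate, so this route yields $|z|\geq N/K$ (which is exactly Corollary~\ref{cor:DMinBound}), not $|z|\geq (N/K)\,d_{\chain D}$. You flag the missing step yourself (``one must carefully track that `$\omega$ detects $z$' translates into a genuine lower bound of $d_{\chain D}$ new coordinates''), but this is not bookkeeping: it is the heart of the proof, and it requires a different operation. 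The paper contracts $z$ against $\omega$ \emph{only in the $C$-direction}: writing $\mathfrak b_i=\sum_j\gamma_i^j\mathfrak g_j+\sum_j\lambda_{ij}\mathfrak x_j+\p_{\chain D}(\mathfrak y_i)$ (obtained by projecting the K\"unneth identity onto the $b_i\otimes\cdot$ terms), one gets for $\omega\in\Ker_{j_0}^{\cperp}$ that $\sum_{i\in\omega}\mathfrak b_i=\mathfrak g_{j_0}+\p_{\chain D}(\text{something})$ — here the conditions $\omega\perp\Im(\p_{-1})$ and $\langle\omega,g_j\rangle=\delta_{jj_0}$ kill the $\mathfrak x_j$ terms and all $\mathfrak g_j$ with $j\neq j_0$. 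This partial sum is therefore a homologically nontrivial \emph{element of $D_0$}, whence $\sum_{i\in\omega}|\mathfrak b_i|\geq\bigl|\sum_{i\in\omega}\mathfrak b_i\bigr|\geq d_{\chain D}$, and the overlap bound then gives $K\sum_i|\mathfrak b_i|\geq\sum_{\omega\in\Omega_{j_0}}\sum_{i\in\omega}|\mathfrak b_i|\geq N d_{\chain D}$. Without replacing your scalar-pairing step by this vector-valued contraction, the argument as planned does not close.
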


\begin{proof}
  We consider
  $x_0=\disp{\sum_{i=1}^{n_{-1}}a_i\otimes\mathfrak{a}_i}+\disp{\sum_{i=1}^{n_0}b_i\otimes\mathfrak{b}_i}+\disp{\sum_{i=1}^{n}c_i\otimes\mathfrak{c}_i}$,
  with $\mathfrak{a}_i,\mathfrak{b}_i,\mathfrak{c}_i\in{\chain{D}}$, a
  minimally weighted representative of a non trivial class in
  $H_0({\chain{C}}\otimes{\chain{D}})$. As a consequence of K\"unneth
  formula, and since either $\chain{C}$ or $\chain{D}$ is balanced,
  $H_0({\chain{C}}\otimes{\chain{D}})\cong H_0(\chain{C})\otimes
  H_0(\chain{D})$ and there exist
  $\mathfrak{g}_1,\ldots,\mathfrak{g}_r\in\Ker(\p_{\chain{D}})$ and
  elements
  $\mathfrak{x}_j,\mathfrak{y}_i,\mathfrak{z}_j\in{\chain{D}}$, such
  that $[\mathfrak{g}_{j_0}]$ is non trivial in $H_0(\chain{D})$ for
  at least one given $j_0\in\Inter{1}{r}$ and

\begin{eqnarray*}
    x_0&=&\sum_{j=1}^rg_{j}\otimes
           \mathfrak{g}_{j}+\p_{{\chain{C}}\otimes{\chain{D}}}\left(\sum_{j=1}^{n_{-1}}a_j\otimes
           \mathfrak{x}_j+\sum_{i=1}^{n_0}b_i\otimes
           \mathfrak{y}_i+\sum_{i=1}^{n}c_i\otimes \mathfrak{z}_i\right)\\
&=&\sum_{j=1}^r\sum_{i=1}^{n_0}\gamma_i^jb_i\otimes \mathfrak{g}_{j}+\sum_{j=1}^{n_{-1}}\sum_{i=1}^{n_0}\lambda_{ij}b_i\otimes \mathfrak{x}_j+\sum_{i=1}^{n_0}b_i\otimes\p_{{\chain{D}}}(\mathfrak{y}_i)+\textrm{ non relevant terms},
  \end{eqnarray*}
where the non relevant terms are of the form $a_i\otimes \cdot\ $ or
$c_i\otimes \cdot\ $ for some $i$.

For every $i\in\Inter{1}{n_0}$, we project on terms of the form
$b_i\otimes \cdot\ $ and obtain then
\[
\mathfrak{b}_i=\disp{\sum_{j=1}^r}\gamma_i^j \mathfrak{g}_{\!j}+\disp{\sum_{j=1}^{n_{-1}}\lambda_{ij}\mathfrak{x}_j}+\p_{\chain{D}}(\mathfrak{y}_i).
\]

For any $p=(p_1,\ldots,p_{n_0})\in\F^{n_0}$, we have then 
\[
\sum_{i=1}^{n_0}p_i \mathfrak{b}_i=\sum_{j=1}^r\sum_{i=1}^{n_0}p_i\gamma_i^j \mathfrak{g}_{j}+\sum_{j=1}^{n_{-1}}\sum_{i=1}^{n_0}p_i\lambda_{ij}\mathfrak{x}_j +\p_{\chain{D}}(\textrm{something}),
\]
which can be reformulated as
\[
\sum_{i\in p}\mathfrak{b}_i=\sum_{j=1}^r\left\langle
  p,g_{j}\right\rangle \mathfrak{g}_{j}+\sum_{j=1}^{n_{-1}}\langle
p,\Lambda_j\rangle \mathfrak{x}_j+\p_{\chain{D}}(\textrm{something}).
\]
If $p\in\Ker^{\cperp}_{j_0}$, we obtain $\disp{\sum_{i\in p}}\ \mathfrak{b}_i=\mathfrak{g}_{j_0}+\p_{\chain{D}}(\textrm{something})$ which is non trivial in $H_0(\chain{D})$, and hence $\disp{\sum_{i\in p}}|\mathfrak{b}_i|\geq \Big|\disp{\sum_{i\in p}}\mathfrak{b}_i\Big|\geq d_{\chain{D}}$.

Now we consider $\Omega_{j_0}$ as in the statement of the lemma. We obtain then
\[
Kd_{{\chain{C}}\otimes{\chain{D}}}=K|x_0|=K\left(\sum_{i=1}^{n_{-1}}|\mathfrak{a}_i|+\sum_{i=1}^{n_0}|\mathfrak{b}_i|+\sum_{i=1}^{n}|\mathfrak{c}_i|\right)\geq K\sum_{i=1}^{n_0}|\mathfrak{b}_i|\geq\sum_{p\in\Omega}\sum_{i\in p}|\mathfrak{b}_i|\geq Nd_{\chain{D}},
\]
and since $d_{{\chain{C}}\otimes{\chain{D}}}$ is an integer, it is bounded below by $\left\lceil\frac{N}{K}d_{\chain{D}}\right\rceil$.
\end{proof}

Using the correspondence established in Proposition \ref{prop:ChainToCSS},
this leads to the following statement, formulated in terms of
classical codes and matrices.

\begin{theo} \label{thm:main}
  Let $\qcode{C}$ be a CSS code defined as a pair of classical
  codes $\code{C}_2\subseteq \code{C}_1^\perp$ given by full rank parity--check matrices.
  Let $g_1, \ldots, g_k\in \code{C}_1^\perp$ and $g_1^*, \ldots, g_k^*\in\code{C}_2^\perp$
  be such that
  $$\code{C}_1^\perp = \code{C}_2 \oplus
  \Span(g_1, \ldots, g_k), \quad 
  \code{C}_2^\perp = \code{C}_1 \oplus
  \Span(g_1^*, \ldots, g_k^*)\quad {\rm and}\quad
  \forall i,j, \langle 
  g_i^*, g_j \rangle = \delta_{ij}.$$
  If, for any $j_0 \in \{1, \ldots , k\}$, there exists $\Omega_{j_0} \subseteq
  g_{j_0}^* + \code{C}_1$ and $\Omega'_{j_0} \subseteq g_{j_0} +
  \code{C}_2$, with $|\Omega_{j_0}|,\ |\Omega'_{j_0}| \geq N$
  and $\overlap (\Omega_{j_0}),\ \overlap (\Omega'_{j_0}) \leq K$. Then,
  for every CSS code $\qcode{D}$, the minimum distance of $\qcode{C}
  \otimes \qcode{D}$ satisfies
  $$
  d_{\qcode{C} \otimes \qcode{D}} \geq \left\lceil \frac N K d_{\qcode{D}}
  \right\rceil \cdot
  $$
\end{theo}

Lemma \ref{lem:Th} implies actually a stronger result since
\begin{enumerate}
\item the matrices $\mathbf{H}_X$ and $\mathbf{H}_Z$ may not have full
  rank as long as the matrices of $\qcode{D}$ do;
\item one may use distinct bases to define the sets $\Omega$'s and $\Omega'$'s.
\end{enumerate}
However, our applications shall use only the statement of Theorem
\ref{thm:main}, and mostly in its one dimensional case, which is even
simpler to state. 
  \begin{cor}\label{cor:cor_of_main}
    Let $\CC$ be a CSS code of dimension $1$ associated to a
    pair of classical codes $\code{C}_2 \subseteq \code{C}_1^\perp$
    given by full rank parity-check matrices. Then,
    if there exists a subset $\Omega \subseteq \code{C}_1^\perp \setminus
    \code{C}_2$
    and a subset $\Omega' \subseteq \code{C}_2^\perp \setminus
    \code{C}_1$ with $|\Omega|,|\Omega'| \geq N$ and
    $\overlap (\Omega),\overlap (\Omega') \leq K$. Then for any
    CSS code $\qcode D$, the minimum distance of
    $\CC \otimes \qcode D$ satisfies
    $$
    d_{\CC \otimes \qcode D} \geq \left\lceil \frac{N}{K} d_{\qcode D} \right\rceil \cdot
    $$
  \end{cor}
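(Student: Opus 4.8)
The plan is to read this off Theorem~\ref{thm:main} by specializing to $k=1$, so the only real work is to produce, from the hypotheses of the corollary, the data $g_1,g_1^*$ and the sets $\Omega_1,\Omega_1'$ that Theorem~\ref{thm:main} asks for.

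First I would record the relevant dimension counts. Since $\code{C}_2\subseteq\code{C}_1^\perp$ is equivalent to $\code{C}_1\subseteq\code{C}_2^\perp$, and since $k_\CC=\dim(\code{C}_1^\perp/\code{C}_2)=1$ by hypothesis, the identity $\dim\code{C}_2^\perp-\dim\code{C}_1=\dim\code{C}_1^\perp-\dim\code{C}_2$ forces $\dim(\code{C}_2^\perp/\code{C}_1)=1$ as well. Working over $\F$, a one-dimensional quotient means that for any $g_1\in\code{C}_1^\perp\setminus\code{C}_2$ one has $\code{C}_1^\perp=\code{C}_2\oplus\Span(g_1)$ and in fact $\code{C}_1^\perp\setminus\code{C}_2=g_1+\code{C}_2$; symmetrically, for any $g_1^*\in\code{C}_2^\perp\setminus\code{C}_1$ one has $\code{C}_2^\perp=\code{C}_1\oplus\Span(g_1^*)$ and $\code{C}_2^\perp\setminus\code{C}_1=g_1^*+\code{C}_1$. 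Next I would arrange the normalisation $\langle g_1^*,g_1\rangle=1$: fix any $g_1\in\code{C}_1^\perp\setminus\code{C}_2$ and consider the linear form $x\mapsto\langle x,g_1\rangle$ on $\code{C}_2^\perp$; it vanishes on $\code{C}_1$ because $g_1\in\code{C}_1^\perp$, but it is not identically zero on $\code{C}_2^\perp$, since otherwise $g_1\in(\code{C}_2^\perp)^\perp=\code{C}_2$, contradicting the choice of $g_1$. Hence there is $g_1^*\in\code{C}_2^\perp$ with $\langle g_1^*,g_1\rangle=1$, and necessarily $g_1^*\notin\code{C}_1$. Thus $g_1,g_1^*$ satisfy the three conditions of Theorem~\ref{thm:main} with $k=1$, the Kronecker-delta condition reducing to $\langle g_1^*,g_1\rangle=1$.

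Finally I would feed in the sets. By the previous paragraph, the hypotheses $\Omega\subseteq\code{C}_1^\perp\setminus\code{C}_2$ and $\Omega'\subseteq\code{C}_2^\perp\setminus\code{C}_1$ read exactly as $\Omega\subseteq g_1+\code{C}_2$ and $\Omega'\subseteq g_1^*+\code{C}_1$, so setting $\Omega_1:=\Omega'$ and $\Omega_1':=\Omega$ gives valid choices for Theorem~\ref{thm:main} (the two families are merely interchanged, which is harmless since the conclusion is symmetric in them). The bounds $|\Omega_1|,|\Omega_1'|\geq N$ and $\overlap(\Omega_1),\overlap(\Omega_1')\leq K$ are immediate from the hypotheses, and Theorem~\ref{thm:main} then delivers $d_{\CC\otimes\code D}\geq\lceil\frac{N}{K}d_{\code D}\rceil$ for every CSS code $\code D$.

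There is essentially no analytic obstacle here; it is all bookkeeping. The one point deserving genuine, if brief, care is the non-degeneracy step producing $g_1^*$ with $\langle g_1^*,g_1\rangle=1$: one must not pick $g_1$ and $g_1^*$ independently, since a generic such pair may well be orthogonal, and the choice of $g_1^*$ has to be made after $g_1$ using the pairing $\code{C}_1^\perp/\code{C}_2\times\code{C}_2^\perp/\code{C}_1\to\F$. (Alternatively one could bypass Theorem~\ref{thm:main} and apply Lemma~\ref{lem:Th} directly to $\chain C$ and to its dual, noting that in the dimension-one case $\Ker_1^\cperp=g_1^*+\code{C}_1$; but routing through Theorem~\ref{thm:main} is cleaner.)
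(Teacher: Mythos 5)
Your proposal is correct and follows exactly the route the paper intends: the corollary is just the $k=1$ specialization of Theorem~\ref{thm:main}, and you correctly supply the bookkeeping the paper leaves implicit, namely that over $\F$ the one-dimensional quotients give $\code{C}_1^\perp\setminus\code{C}_2=g_1+\code{C}_2$ and $\code{C}_2^\perp\setminus\code{C}_1=g_1^*+\code{C}_1$, and that $g_1^*$ can be chosen after $g_1$ so that $\langle g_1^*,g_1\rangle=1$ (with the roles of $\Omega$ and $\Omega'$ swapped to match the theorem's notation). No gaps.
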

  \begin{remarque}
If dealing with a symmetric code described by a symmetric chain
    complex, then it is sufficient to consider the sets $\Omega$
    in Theorem \ref{thm:main} or Corollary \ref{cor:cor_of_main}, 
    since they can be used again as sets $\Omega'$.
  \end{remarque}


\subsection{Some direct consequences}
\label{sec:Consequences}
In this section, we state some direct consequences of Lemma \ref{lem:Th}.

\subsubsection{General lower bounds for the minimum distance of tensor
  products}

\begin{cor}\label{cor:DMinBound}
  If ${\chain{C}}$ is a chain complex satisfying the hypothesis of Lemma \ref{lem:Th}, then $d_{\chain{C}}\geq \left\lceil\frac{N}{K}\right\rceil$.
\end{cor}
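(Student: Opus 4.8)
The plan is to read this off Lemma~\ref{lem:Th} by choosing the trivial chain complex for the second factor. Let $\chain{D}_0$ be the chain complex with $(\chain{D}_0)_0 = \F$ and all other graded pieces zero, equipped with its one-element basis. It is balanced, since its homology is concentrated in degree $0$, so the hypothesis of Lemma~\ref{lem:Th} (``$\chain{C}$ or $\chain{D}$ balanced'') is satisfied; and $d_{\chain{D}_0} = 1$.

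Next I would note that $\chain{C}\otimes\chain{D}_0$ is isomorphic to $\chain{C}$ as a chain complex with basis, via $b\mapsto b\otimes e$ where $e$ generates $(\chain{D}_0)_0$; in particular $d_{\chain{C}\otimes\chain{D}_0} = d_{\chain{C}}$. Applying Lemma~\ref{lem:Th} to $\chain{C}$ (which satisfies its hypothesis by assumption) with $\chain{D} = \chain{D}_0$ then gives
\[
d_{\chain{C}} = d_{\chain{C}\otimes\chain{D}_0} \geq \left\lceil\frac{N}{K}\,d_{\chain{D}_0}\right\rceil = \left\lceil\frac{N}{K}\right\rceil ,
\]
which is the assertion.

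A self-contained variant, if one prefers not to invoke the lemma, is to specialize its proof to the trivial second factor: take a minimally weighted representative $x_0 = \sum_i\beta_i b_i$ of a nonzero class in $H_0(\chain{C})$, write $x_0 = \sum_{j=1}^r\mu_j g_j + \p_{-1}(w)$ with $\mu_{j_0}\neq 0$ for some $j_0$, and use $\langle p, g_j\rangle = \delta_{jj_0}$ together with $p\in\Im(\p_{-1})^\perp$ (Remark~\ref{rk:CohomNonTriv}) to get $\langle p, x_0\rangle = \mu_{j_0}\neq 0$ for every $p\in\Omega_{j_0}$. Hence every $p\in\Omega_{j_0}$ meets the support of $x_0$, and counting the incidences between $\Omega_{j_0}$ and $\{\,b_i\in x_0\,\}$ in two ways yields $N\leq|\Omega_{j_0}|\leq K\,|x_0|$, i.e.\ $d_{\chain{C}} = |x_0|\geq\lceil N/K\rceil$.

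I do not expect a real obstacle here: the corollary is simply the ``$d_{\chain{D}}=1$'' specialization of Lemma~\ref{lem:Th}. The only points deserving a line of justification are the identification $\chain{C}\otimes\chain{D}_0\cong\chain{C}$ in the first route, or the computation $\langle p, x_0\rangle = \mu_{j_0}$ in the second, and both follow directly from the definitions and Remark~\ref{rk:CohomNonTriv}.
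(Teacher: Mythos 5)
Your first route is exactly the paper's own proof: apply Lemma~\ref{lem:Th} with the second factor equal to the length-one reduced complex $\xymatrix{0\ar[r]&\F\ar[r]&0}$, noting $d=1$ there and that tensoring with it does not change $\chain{C}$. This is correct (your explicit identification $\chain{C}\otimes\chain{D}_0\cong\chain{C}$ and the self-contained counting variant are fine, the latter essentially reproducing the remark following the corollary in the paper), so there is nothing to add.
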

\begin{proof}
  Apply Lemma \ref{lem:Th} with the reduced complex
  ${\chain{D}}:=\xymatrix{0\ar[r]&\F\ar[r]&0}$.
\end{proof}

\begin{remarque}
  Applied to the case $\Im(\p_{-1})=0$, Corollary \ref{cor:DMinBound}
  provides also a lower bound for the minimum distance of a classical
  code. In this context, the statement essentially follows from the
  remark that, if an element $c$ of a code has a non trivial scalar
  product with a vector $\omega$, then $c$ has at least one non
  trivial entry shared with $\omega$. Now, with the notation of Section
  \ref{sec:Th} and for $j_0$ such that $g_{j_0}\in c$, $c$ has at least one non trivial entry in common with
  each of the $N$ elements of $\Omega_{j_0}$. But each of these entries can appear at
  most $K$ times. It follows that $c$ has at least $\frac{N}{K}$ non
  trivial entries.
\end{remarque}

\begin{remarque}\label{rk:LaLoose}
One may cherish the hope to provide an LDPC family of CSS codes with
  minimum distances growing faster than the square root of the
  lengths by applying Corollary \ref{cor:DMinBound} to evaluate
  the minimum distance of a CSS code $\qcode C$ as some $\frac{N}{K}$. This is however doomed to fail. Indeed, if stacking the
  elements of $\Omega$ into a matrix and counting, column by column,
  the number $V$ of non trivial entries, we obtain $V\leq n_\qcode
  CK$; but on the other hand, counting $V$ row by row, we obtain
  $V\geq Nd_{\chain C^*}$ since, as noticed in Remark
  \ref{rk:CohomNonTriv}, elements of $\Omega$ are non trivial
  cohomology classes. It follows that $\frac{N}{K}d_{\chain C^*}\leq
  n_\qcode C$. In particular, if $\frac{N}{K}$ is close enough to
  $d_\chain C$, we obtain $d_\chain Cd_{\chain C^*}\lesssim n_\qcode C$
  and hence $d_\qcode C=\min(d_\chain C,d_{\chain
    C^*})\lesssim\sqrt{n_\qcode C}$.
\end{remarque}

\begin{cor}\label{cor:dProd}
  If ${\chain{C}}$ and ${\chain{D}}$ are two chain complexes such that one
  of them is balanced, then
\[
\max(d_{\chain{C}},d_{\chain{D}})\leq
d_{{\chain{C}}\otimes{\chain{D}}}\leq d_{\chain{C}} d_{\chain{D}}.
\]
\end{cor}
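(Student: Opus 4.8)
The plan is to split the claim into its two inequalities. The upper bound $d_{\chain{C}\otimes\chain{D}}\leq d_{\chain{C}}d_{\chain{D}}$ I would get directly from the K\"unneth formula (and it needs no balancedness hypothesis), while the lower bound $\max(d_{\chain{C}},d_{\chain{D}})\leq d_{\chain{C}\otimes\chain{D}}$ I would obtain by feeding Lemma \ref{lem:Th} its cheapest admissible input. Honestly, neither half is hard: the only point requiring a moment's thought is to notice that a one-element set already makes Lemma \ref{lem:Th} bite, plus the bookkeeping in the degenerate case where a minimum distance is infinite.

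For the upper bound, fix bases $\BB_{\chain{C}}$, $\BB_{\chain{D}}$ and choose $x\in C_0$ and $y\in D_0$ of minimal weight among representatives of nonzero classes in $H_0(\chain{C})$ and $H_0(\chain{D})$, so $|x|=d_{\chain{C}}$ and $|y|=d_{\chain{D}}$. Since $\p_{\chain{C}}(x)=0=\p_{\chain{D}}(y)$, the defining formula of $\p_{\chain{C}\otimes\chain{D}}$ gives $\p_{\chain{C}\otimes\chain{D}}(x\otimes y)=\p_{\chain{C}}(x)\otimes y+x\otimes\p_{\chain{D}}(y)=0$, so $x\otimes y$ is a degree-$0$ cycle. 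After completing $x$ and $y$ to bases of $H_0(\chain{C})$ and $H_0(\chain{D})$, Proposition \ref{prop:TechKunneth} tells us that $[x\otimes y]$ is one of the vectors of a basis of $H_0(\chain{C}\otimes\chain{D})$, hence nonzero; and in the basis $\BB_{\chain{C}}\otimes\BB_{\chain{D}}$ one has $|x\otimes y|=|x|\cdot|y|$, whence $d_{\chain{C}\otimes\chain{D}}\leq d_{\chain{C}}d_{\chain{D}}$.

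For the lower bound, I would first use that $\otimes$ is commutative on chain complexes, $\chain{C}\otimes\chain{D}\cong\chain{D}\otimes\chain{C}$, and that the hypothesis ``one of $\chain{C},\chain{D}$ is balanced'' is symmetric; it is therefore enough to prove $d_{\chain{D}}\leq d_{\chain{C}\otimes\chain{D}}$, the inequality $d_{\chain{C}}\leq d_{\chain{C}\otimes\chain{D}}$ following by swapping $\chain{C}$ and $\chain{D}$. To get $d_{\chain{D}}\leq d_{\chain{C}\otimes\chain{D}}$, apply Lemma \ref{lem:Th} with $N=K=1$: with the notation of Section \ref{sec:Th}, set $\Omega_{j_0}:=\{g_{j_0}^*\}$ for each $j_0\in\Inter{1}{r}$. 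This single vector lies in $\Ker_{j_0}^{\cperp}=g_{j_0}^*+\Ker(\p_0)^\perp$ by construction and is nonzero (it is a member of the dual basis of a chosen completion of $g_1,\ldots,g_r$ to a basis of $C_0$), so $|\Omega_{j_0}|=1$ and $\overlap(\Omega_{j_0})=1$. Lemma \ref{lem:Th} then gives $d_{\chain{C}\otimes\chain{D}}\geq\lceil\frac{N}{K}d_{\chain{D}}\rceil=\lceil d_{\chain{D}}\rceil=d_{\chain{D}}$, since $d_{\chain{D}}$ is an integer.

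Finally, the degenerate situation $H_0(\chain{C})=\{0\}$ or $H_0(\chain{D})=\{0\}$, so that $d_{\chain{C}}$ or $d_{\chain{D}}$ equals $+\infty$ by convention, has to be treated apart: there the balancedness assumption together with the K\"unneth formula forces $H_0(\chain{C}\otimes\chain{D})=\{0\}$ as well, hence $d_{\chain{C}\otimes\chain{D}}=+\infty$, and both inequalities hold trivially. This is the only place where I expect to have to be a little careful, and even there it is routine.
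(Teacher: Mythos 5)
Your proof is correct and follows essentially the same route as the paper: the upper bound via Proposition \ref{prop:TechKunneth} applied to a tensor product of minimal-weight representatives, and the lower bound by feeding Lemma \ref{lem:Th} singleton sets $\Omega_{j_0}$ with $N=K=1$. Your separate treatment of the degenerate case of trivial homology (infinite minimum distance) is a minor extra precaution the paper leaves implicit, but it changes nothing in substance.
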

  \begin{proof}
    The right hand side is a direct consequence of Proposition
    \ref{prop:TechKunneth}. The left hand one is obtained by applying
    Lemma \ref{lem:Th}
    with $N = K = 1$; this always holds when $\Omega_{j_0}$ is a singleton.
  \end{proof}

 As we shall see in several examples, the upper bound is
 sharp.
On the contrary, the lower bound is not, except in some trivial cases.

\begin{prop}\label{prop:NonSharp}
  If ${\chain{C}}$ and ${\chain{D}}$ are two chain complexes such
  that
  $\chain{C}:\xymatrix{C_{-1}\ar@{^(->}[r]^-{\p_{-1}}&C_0\ar@{->>}[r]^-{\p_0}&C_1}$
  is balanced and $\p_0$ is non zero on every element of the basis of
  $C_0$, then $d_{{\chain{C}}\otimes{\chain{D}}}\geq 2d_{\chain{D}}$.
\end{prop}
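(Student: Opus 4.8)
The plan is to apply Lemma~\ref{lem:Th} with a ratio $\frac{N}{K}=2$, so that it yields $d_{\chain{C}\otimes\chain{D}}\geq\left\lceil 2\, d_{\chain{D}}\right\rceil=2\, d_{\chain{D}}$. Concretely, I need to produce, for every index $j_0$, a subset $\Omega_{j_0}\subseteq\Ker_{j_0}^\cperp$ with $|\Omega_{j_0}|\geq N$ and $\overlap(\Omega_{j_0})\leq K$ for some $N,K\in\N^*$ with $N=2K$. The natural candidate, which I expect to work, is to take $\Omega_{j_0}$ to be the whole affine subspace $\Ker_{j_0}^\cperp=g_{j_0}^*+\Ker(\p_0)^\perp$ itself.

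First I would set $W:=\Ker(\p_0)^\perp=\Im(\p_0^*)$, that is the row space of $\Mat_\BB(\p_0)$ seen inside $C_0$ via the identification of the Notation section, and $d:=\dim W=\rk(\p_0)$. The hypothesis that $\p_0$ is nonzero on each basis vector $b_i$ of $C_0$ says exactly that, for every $i$, some row of $\Mat_\BB(\p_0)$ has a nonzero $i$-th entry; equivalently the coordinate form $x\mapsto\langle x,b_i\rangle$ is a nonzero $\F$-linear form on $W$, hence surjective onto $\F$. Consequently, on the coset $\Ker_{j_0}^\cperp=g_{j_0}^*+W$ the affine map $x\mapsto\langle x,b_i\rangle$ is non-constant, so it equals $1$ on exactly half of the $2^d$ elements of $\Ker_{j_0}^\cperp$. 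Stacking the elements of $\Ker_{j_0}^\cperp$ as the rows of a matrix, each of its columns then has weight exactly $2^{d-1}$, so that $|\Ker_{j_0}^\cperp|=2^d$ and $\overlap(\Ker_{j_0}^\cperp)=2^{d-1}$.

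It then remains to check that $d\geq 1$, so that $2^{d-1}\in\N^*$: we may assume $H_0(\chain{C})\neq 0$, otherwise $H_0(\chain{C}\otimes\chain{D})=0$ by the K\"unneth formula (since $\chain{C}$ is balanced) and the statement holds trivially; then $C_0\neq 0$, so applying the hypothesis to any basis vector forces $C_1\neq 0$, and since $\p_0$ is surjective $d=\rk(\p_0)=n_1\geq 1$. Taking $\Omega_{j_0}:=\Ker_{j_0}^\cperp$, $N:=2^d$ and $K:=2^{d-1}$ for every $j_0$, and using that $\chain{C}$ is balanced, Lemma~\ref{lem:Th} applies and gives $d_{\chain{C}\otimes\chain{D}}\geq\left\lceil\frac{N}{K}\, d_{\chain{D}}\right\rceil=2\, d_{\chain{D}}$. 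The one mildly delicate point is obtaining the \emph{exact} column weight $2^{d-1}$ (a mere upper bound would not pin the ratio down to $2$), together with the verification that $d\geq 1$ so that $K\in\N^*$; apart from that the argument is just an invocation of Lemma~\ref{lem:Th}.
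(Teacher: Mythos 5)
Your proposal is correct and takes essentially the same route as the paper: choose $\Omega_{j_0}:=\Ker_{j_0}^\cperp$, use the hypothesis on $\p_0$ to produce, for each $i$, an element $f_i\in\Ker(\p_0)^\perp$ with $\langle f_i,b_i\rangle=1$, so that (translation by $f_i$, equivalently your non-constant affine form argument) exactly half of the coset contains $b_i$, giving $\overlap(\Omega_{j_0})=\frac12|\Omega_{j_0}|$, and then invoke Lemma~\ref{lem:Th}. Your additional verification that $\rk(\p_0)\geq1$ (handling the degenerate case $H_0(\chain{C})=0$) is a harmless refinement the paper leaves implicit.
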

\begin{proof}
 With the notation of Section \ref{sec:Th}, we consider, for each
 $j_0\in\Inter{1}{r}$, $\Omega_{j_0}:=\Ker_{j_0}^\cperp$. It can actually be
 described as $\omega_{j_0}+\Ker(\p_0) ^\perp$
 with $\omega_{j_0}$
 any element in $\Ker_{j_0}^\cperp$.\\
Now, for any $i\in\Inter{1}{n_0}$, there is at least one element
$f_i\in\Ker(\p_0)^\perp$ such that $\langle f_i,b_i\rangle=1$, otherwise $b_i$
would be contained in
$\big(\Ker(\p_0) ^\perp\big)^\perp\cong\Ker(\p_0)$ and $\p_0(b_i)$
would be zero.
Furthermore, as a generator, $b_i\in
f_i$, and the map $\big(x\mapsto x+f_i\big)$
induces a bijection between the elements in $\Omega_{j_0}$ which
contain $b_i$ and those who do not. It follows that
$\overlap(\Omega_{j_0})=\frac{1}{2}|\Omega_{j_0}|$ and the statement is proved
using Lemma \ref{lem:Th}.
\end{proof}

\begin{remarque}
  Similar results can be obtained for $q$--ary CSS codes. In this case
  we would get $d_{\chain C \otimes \chain D} \geq \frac{q}{q-1} \max (d_{\chain C}, d_{\chain D})$.
\end{remarque}

\begin{remarque}
  The lower bound in Corollary \ref{cor:dProd} can hence be sharp only
  if $\p_0$ vanishes on some generator. If this generator is not in
  the image of $\p_{-1}$, then $d_\chain{C}=1$ and
  $d_{\chain{C}}d_{\chain{D}}=\max(d_{\chain{C}},d_{\chain{D}})$. If
  it is in the image of $\p_{-1}$, then $\chain{C}$ is the direct sum of a chain complex with
  a (useless) summand
  of the form
  $\xymatrix{\Span\big(b'_i\big)\ar@{^(->>}[r]&\Span(b_i)\ar[r]&0}$,
  where $b'_i$ is the unique preimage of $b_i$; and this summand can be removed
  without altering the minimum distance. 
It follows that the lower bound of Corollary \ref{cor:dProd} is sharp if and only if it is equal to
the upper bound.
\end{remarque}

Corollary \ref{cor:dProd} and Proposition \ref{prop:NonSharp} have the
following consequence for the (reduced) tensor product of CSS codes.
\begin{cor}\label{cor:minor_times_2}
  If $\qcode{C}$ and $\qcode{D}$ are two CSS codes described by
  matrices which have no columns of zeros, then
\[
2\max(d_{\qcode{C}},d_{\qcode{D}})\leq
d_{\qcode{C}\otimes_{r}\qcode{D}}.
\]
\end{cor}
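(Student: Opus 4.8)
The plan is to strip off the truncation and the reduction that are built into the definition of $\rotimes$, pass to a symmetric situation involving honest tensor products of chain complexes, and then feed everything into Proposition~\ref{prop:NonSharp}. Let $\chain{C}$ denote the reduced complex associated to $\qcode{C}$ and $\chain{D}$ the short complex associated to $\qcode{D}$, so that $\qcode{C}\rotimes\qcode{D}$ is, by definition, the CSS code of the reduced complex $\chain{E}$ obtained from the $\{-1,0,1\}$--truncation $\chain{T}$ of $\chain{C}\otimes\chain{D}$. First I would check that neither of these two operations changes the quantities we care about: truncating at degrees $\pm2$ does not affect degree--$0$ homology, and the reduction $\chain{T}\rightsquigarrow\chain{E}$ removes redundant rows of $\mathbf{H}_X$ and $\mathbf{H}_Z$, hence leaves $\code{C}_1$, $\code{C}_2$ and their orthogonals --- so the minimum weights attained on $\code{C}_1^\perp\setminus\code{C}_2$ and on $\code{C}_2^\perp\setminus\code{C}_1$ --- unchanged. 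Since dualizing commutes with truncation and $(\chain{C}\otimes\chain{D})^*\cong\chain{C}^*\otimes\chain{D}^*$ by Proposition~\ref{prop:ProductParameters}, writing $d_\bullet$ for the degree--$0$ minimum distance this would give, via Proposition~\ref{prop:ChainToCSS}, $d_{\qcode{C}\rotimes\qcode{D}}=\min\left(d_{\chain{C}\otimes\chain{D}},\,d_{\chain{C}^*\otimes\chain{D}^*}\right)$.

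Next I would symmetrize the two factors. The short complex $\chain{D}$ decomposes as $\chain{D}'\oplus\chain{K}$ with $\chain{D}'$ reduced and $\chain{K}$ acyclic (the ``useless summand'' of Section~\ref{sec:Reduction}); as $\chain{C}$ has homology in degree $0$ only, the K\"unneth formula makes $\chain{C}\otimes\chain{K}$ and $\chain{C}^*\otimes\chain{K}^*$ acyclic, so replacing $\chain{D}$ by $\chain{D}'$ changes none of $d_{\chain{C}\otimes\chain{D}}$, $d_{\chain{C}^*\otimes\chain{D}^*}$, $d_{\chain{D}}$, $d_{\chain{D}^*}$. I may therefore assume $\chain{C}$ and $\chain{D}$ both reduced, hence balanced, and $\chain{C}^*$, $\chain{D}^*$ likewise. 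The hypothesis that none of $\mathbf{H}_X^{\qcode{C}},\mathbf{H}_Z^{\qcode{C}},\mathbf{H}_X^{\qcode{D}},\mathbf{H}_Z^{\qcode{D}}$ has a zero column then says --- and this persists after discarding redundant rows --- that $\p_0$ is nonzero on every basis element of its source, for each of $\chain{C},\chain{C}^*,\chain{D},\chain{D}^*$. Hence each of the four ordered pairs $(\chain{C},\chain{D})$, $(\chain{D},\chain{C})$, $(\chain{C}^*,\chain{D}^*)$, $(\chain{D}^*,\chain{C}^*)$ satisfies the hypotheses of Proposition~\ref{prop:NonSharp}, which together with the commutativity $\chain{X}\otimes\chain{Y}\cong\chain{Y}\otimes\chain{X}$ would yield
\[
d_{\chain{C}\otimes\chain{D}}\geq 2\max(d_{\chain{C}},d_{\chain{D}})\qquad\text{and}\qquad d_{\chain{C}^*\otimes\chain{D}^*}\geq 2\max(d_{\chain{C}^*},d_{\chain{D}^*}).
\]
Combining this with the previous paragraph and with $d_{\qcode{C}}=\min(d_{\chain{C}},d_{\chain{C}^*})$, $d_{\qcode{D}}=\min(d_{\chain{D}},d_{\chain{D}^*})$ from Proposition~\ref{prop:ChainToCSS}, one gets $2d_{\qcode{C}}\leq\min(2d_{\chain{C}},2d_{\chain{C}^*})\leq d_{\qcode{C}\rotimes\qcode{D}}$ and, symmetrically, $2d_{\qcode{D}}\leq d_{\qcode{C}\rotimes\qcode{D}}$, which is the asserted inequality.

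The hard part, I think, is not the distance estimate itself --- that is a direct and symmetric appeal to Proposition~\ref{prop:NonSharp} (equivalently, to Lemma~\ref{lem:Th} with $N/K=2$) --- but the bookkeeping of the first two paragraphs: one must be careful that the non--canonical truncation and reduction hidden in $\rotimes$ preserve the minimum distance \emph{and} the minimum distance of the dual, and that replacing $\chain{D}$ by its reduction is legitimate precisely because the discarded summand stays acyclic after tensoring by $\chain{C}$ and after dualizing. Once this is in place, the last step is only the trivial remark that $2a\leq c$ forces $2\min(a,b)\leq c$.
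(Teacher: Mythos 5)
Your strategy is the one the paper intends (it states the corollary as a consequence of Corollary \ref{cor:dProd} and Proposition \ref{prop:NonSharp} without writing the bookkeeping), and most of your bookkeeping is correct: truncation and removal of redundant rows do not change the pair $(\code{C}_1,\code{C}_2)$ of the product code, so $d_{\qcode{C}\rotimes\qcode{D}}=\min\big(d_{\chain{C}\otimes\chain{D}},d_{\chain{C}^*\otimes\chain{D}^*}\big)$, the no-zero-column hypothesis does survive the discarding of redundant rows, and the final min/max juggling is fine. The genuine gap is in your symmetrization step, i.e.\ the claim that replacing $\chain{D}$ by its reduction $\chain{D}'$ leaves $d_{\chain{C}\otimes\chain{D}}$ and $d_{\chain{C}^*\otimes\chain{D}^*}$ unchanged. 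First, the summand you discard is not acyclic: writing $\chain{D}\cong\chain{D}'\oplus\chain{K}$ with $\chain{K}=\big(\Ker(\p_{-1})\to 0\to W\big)$, one only has $H_0(\chain{K})=0$, not $H_\bullet(\chain{K})=0$; that slip is harmless. What is not harmless is that neither $\Ker(\p_{-1})$ nor $\Im(\p_0)$ is spanned by basis vectors, so this direct-sum decomposition is not adapted to the basis, and the degrees $\pm1$ of $\chain{D}$ feed precisely into degree $0$ of $\chain{C}\otimes\chain{D}$ through $C_{\mp1}\otimes D_{\pm1}$. K\"unneth controls the degree-$0$ homology but says nothing about Hamming weights there, and the paper's formula $d_0(\chain{A}\oplus\chain{B})=\min\big(d_0(\chain{A}),d_0(\chain{B})\big)$ presupposes a basis compatible with the splitting. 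Concretely, the only chain map $\chain{D}\to\chain{D}'$ extending $\Id_{D_0}$ is, in degree $-1$, the projection along $\Ker(\p_{-1})$, and such a projection can increase Hamming weight; so a minimal nontrivial degree-$0$ cycle of $\chain{C}\otimes\chain{D}$ does not obviously produce one of $\chain{C}\otimes\chain{D}'$ of no larger weight, and the equality you invoke --- which is exactly what was supposed to deliver $d_{\chain{C}\otimes\chain{D}}\geq 2d_{\chain{C}}$ --- is left unproven. (By contrast $d_{\chain{D}}=d_{\chain{D}'}$ is unproblematic, since reduction does not touch degree $0$ of $\chain{D}$.)

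The repair is short and stays inside the paper's toolbox, and you almost name it yourself: do not reduce $\chain{D}$ at all, and apply Lemma \ref{lem:Th} with the (possibly unreduced) $\chain{D}$ as first factor and $\chain{C}$ as second factor. The lemma only requires \emph{one} of the two factors to be balanced, and $\chain{C}$, being reduced, is; balancedness of the first factor enters Proposition \ref{prop:NonSharp} only through this invocation, while the construction of $\Omega_{j_0}:=\Ker^\cperp_{j_0}$ with $\overlap(\Omega_{j_0})=\frac{1}{2}|\Omega_{j_0}|$ uses nothing beyond the absence of zero columns in $\Mat(\p_0^{\chain{D}})$, i.e.\ in $\mathbf{H}_X^{\qcode{D}}$. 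Together with the swap $\chain{D}\otimes\chain{C}\cong\chain{C}\otimes\chain{D}$ this gives $d_{\chain{C}\otimes\chain{D}}\geq 2d_{\chain{C}}$, and the same argument applied to $\chain{C}^*,\chain{D}^*$ (using $\mathbf{H}_Z$'s columns) gives $d_{\chain{C}^*\otimes\chain{D}^*}\geq 2d_{\chain{C}^*}$; the bounds $\geq 2d_{\chain{D}}$ and $\geq 2d_{\chain{D}^*}$ are Proposition \ref{prop:NonSharp} verbatim. With these four inequalities your concluding paragraph goes through unchanged.
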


In Section \ref{sec:Steane}
we shall see an example where $d_{{\qcode{C}}\otimes{\qcode{D}}}=\left(2+\frac{1}{3}\right)\max(d_{\qcode{C}},d_{\qcode{D}})< d_{\qcode{C}} d_{\qcode{D}}$.

\begin{remarque}
  For chain complexes, the minimum distance of a tensor product is
  bounded above by the product of the minimum distances of the
  summands but, because of the interplay with dual chain complexes,
  this does not hold anymore for CSS codes. An example is given by the
  Tillich--Zémor
  construction, presented in Section \ref{sec:TZ}.
\end{remarque}

\subsubsection{Parameters of iterated tensor powers}

\begin{cor}\label{cor:ParamTensor}
 Let $\qcode{C}$ be the CSS code associated to ${\chain{C}}:=\xymatrix{\F^{b_1}\ar@{^(->}[r]& \F^a\ar@{->>}[r]&\F^{b_2}}$,
  where $a,b_1,b_2\in\N$.
  If $\chain{C}$ and ${\chain{C}}^*$ satisfy both 
  the hypothesis of Lemma \ref{lem:Th} for the same integers
  $N,K\in\N^*$, then $\big({\qcode{C}}^{\otimes\ell}\big)_{\ell\in\N}$
  is a family of CSS codes with
parameters 
\[
\left\llbracket\ \sim\frac{1}{2}\sqrt{\frac{a+2\sqrt{b_1b_2}}{\pi\ell \sqrt{b_1b_2}}}\left(a+2\sqrt{b_1b_2}\right)^\ell \ ;\ 
(a-b_1-b_2)^\ell \ ;\ \geq \left(\frac{N}{K}\right)^\ell \ ;\ \leq a\ell \ \right\rrbracket.
\]
In particular, the family is logarithmically LDPC and the minimum
distance grows strictly faster than the
$\frac{\log N-\log K}{\log(a+2\sqrt{b_1b_2})}$-th power of the length.
\end{cor}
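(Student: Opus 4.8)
The plan is to obtain the statement as a direct consequence of the structural facts about tensor products already collected in Proposition~\ref{prop:ProductParameters} (length, dimension, weight) together with the distance estimate of Lemma~\ref{lem:Th}, and then to extract the asymptotics of the length via a saddle-point/Stirling estimate. First I would fix the short complex $\chain{C}:\xymatrix{\F^{b_1}\ar@{^(->}[r]& \F^a\ar@{->>}[r]&\F^{b_2}}$; since it is reduced it is in particular balanced, and so is every iterated tensor power $\chain{C}^{\otimes\ell}$ by the K\"unneth formula (each factor being balanced forces the product to have homology concentrated in degree~$0$). This is exactly the hypothesis needed to apply Lemma~\ref{lem:Th} and Corollary~\ref{cor:dProd} at each stage.

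For the four parameters I would argue as follows. The dimension is immediate from the K\"unneth clause of Proposition~\ref{prop:ProductParameters}: $k_0(\chain{C}^{\otimes\ell})=k_0(\chain{C})^\ell=(a-b_1-b_2)^\ell$, the last equality because $\chain{C}$ is reduced. For the weight, the weight clause of Proposition~\ref{prop:ProductParameters} gives $w_0(\chain{C}\otimes\chain{D})=\max\{w_j(\chain{C})+w_k(\chain{D}):j+k=0\}$; iterating and using that the only nonzero weights of $\chain{C}$ are $w_0=\|\p_0\|$ and $w_{-1}=\|\p_{-1}\|$, both bounded by $a$, one gets $w_0(\chain{C}^{\otimes\ell})\le a\ell$ after a short induction (the degree-$0$ part of the $\ell$-fold product receives a contribution from at most $\ell$ of the boundary maps, each of weight $\le a$). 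For the minimum distance, I would iterate Lemma~\ref{lem:Th}: applying it with the hypotheses holding for both $\chain{C}$ and $\chain{C}^*$ for the same $N,K$, one first gets $d_{\chain{C}\otimes\chain{D}}\ge\lceil\frac NK d_{\chain D}\rceil$ for balanced $\chain D$, hence by induction $d_{\chain{C}^{\otimes\ell}}\ge\lceil\frac NK d_{\chain{C}^{\otimes(\ell-1)}}\rceil\ge(\frac NK)^{\ell-1}d_{\chain C}\ge(\frac NK)^\ell$, the last step using Corollary~\ref{cor:DMinBound}; one must also note $\chain{C}^{\otimes\ell}$ is itself described by a short complex after the degrees $\{-1,0,1\}$ truncation, and that truncation does not decrease $d_0$ nor increase $w_0$.

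The one genuinely computational step — and the main obstacle — is the asymptotic formula for the length $n_0(\chain{C}^{\otimes\ell})=\sum_{r}n_r(\chain{C}^{\otimes(\ell-1)})\,n_{-r}(\chain{C})$, i.e. the degree-$0$ coefficient of the $\ell$-th power of the generating polynomial $b_1 t^{-1}+a+b_2 t$ (up to the grading convention). Writing $n_0(\chain C^{\otimes\ell})$ as a contour integral $\frac1{2\pi i}\oint (b_1 z^{-1}+a+b_2 z)^\ell\,\frac{dz}{z}$ and applying the saddle-point method, the saddle sits at $z_0=\sqrt{b_1/b_2}$, where $b_1 z_0^{-1}+a+b_2 z_0=a+2\sqrt{b_1b_2}$, giving the exponential growth rate $(a+2\sqrt{b_1b_2})^\ell$; the Gaussian correction produces the prefactor $\frac12\sqrt{\tfrac{a+2\sqrt{b_1b_2}}{\pi\ell\sqrt{b_1b_2}}}$. (Equivalently one can diagonalize, write $n_0(\chain C^{\otimes\ell})$ as a sum of products of binomial-type coefficients and invoke Stirling; this is the route I would actually spell out, deferring the bookkeeping to Appendix~\ref{appendix:A} as the paper announces.) Once the length asymptotics and the bound $d\ge(N/K)^\ell$ are in hand, the final ``in particular'' sentence is a one-line comparison: $w_0\le a\ell=O(\log n_0)$ since $n_0$ grows like $C^\ell$, so the family is logarithmically LDPC; and $d\ge (N/K)^\ell$ while $n_0\le (a+2\sqrt{b_1b_2})^\ell$ (up to the sub-exponential prefactor) gives $d\gtrsim n_0^{\,\beta}$ with $\beta=\frac{\log(N/K)}{\log(a+2\sqrt{b_1b_2})}=\frac{\log N-\log K}{\log(a+2\sqrt{b_1b_2})}$, the strictness coming from the polynomially decaying prefactor in $n_0$.
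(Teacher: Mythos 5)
Your proposal is correct, and three of the four parameters are handled exactly as in the paper: dimension via K\"unneth, weight by iterating the weight clause of Proposition~\ref{prop:ProductParameters} (the paper additionally flags that the bound $w_i(\chain C)\leq a$ rests on $b_1,b_2\leq a$, i.e.\ on reducedness---worth one line in your write-up), and the minimum distance by induction on Lemma~\ref{lem:Th}, using that $\chain C$ (hence each tensor power) is balanced and that the same hypothesis holds for $\chain C^*$ so the dual side is covered. Where you genuinely diverge is the length asymptotics. The paper (Appendix~\ref{appendix:A}) stays algebraic: it rewrites the constant term of $(b_1t^{-1}+a+b_2t)^\ell$ as $\bigl(a-2\sqrt{b_1b_2}\bigr)^\ell\,T_\ell\bigl(\tfrac{\sqrt{b_1b_2}}{a-2\sqrt{b_1b_2}}\bigr)$ with $T_\ell(x)=\sum_r\binom{\ell}{r}\binom{2r}{r}x^r$, proves a closed-form identity for $T_\ell$ via a three-term recurrence (Lemmas~\ref{lem:Comb} and~\ref{lem:T}), and imports the asymptotics from Proposition~A.1 of \cite{Audoux}. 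Your contour-integral route is self-contained and arguably cleaner: on the circle $|z|=\sqrt{b_1/b_2}$ the integrand becomes the real function $\bigl(a+2\sqrt{b_1b_2}\cos\theta\bigr)^\ell$, and Laplace's method at $\theta=0$ (with $g''(0)=-2\sqrt{b_1b_2}$) yields precisely the prefactor $\frac12\sqrt{\tfrac{a+2\sqrt{b_1b_2}}{\pi\ell\sqrt{b_1b_2}}}$, so no external asymptotic lemma is needed; the paper's route, by contrast, produces the exact combinatorial identity $c_\ell=(a-2\sqrt{b_1b_2})^\ell T_\ell(x)$ as a by-product, not just the asymptotics. One last point in your favour: your exponent $\frac{\log N-\log K}{\log\bigl(a+2\sqrt{b_1b_2}\bigr)}$ is the correct one; the fraction printed in the corollary's ``in particular'' clause is its reciprocal (typically $>1$, which is impossible for $d\leq n$), so the statement contains a misprint and your reading, with strictness supplied by the $\ell^{-1/2}$ factor in the length, is the intended conclusion.
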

\begin{proof}
  The statement on
  \begin{itemize}
  \item the length follows from an adaptation of the proofs of Prop. 4.1,
A.1 and A.2 in \cite{Audoux}, details can be found in Appendix \ref{appendix:A};
  \item the dimension is a direct consequence of the Künneth formula;
  \item the bound on the minimum distance follows from an inductive use of
Lemma \ref{lem:Th};
  \item the weight follows from an inductive use of Proposition
    \ref{prop:ProductParameters} and the fact that $b_1,b_2\leq a$,
    which is a consequence of the reducedness of $\chain{C}$.
  \end{itemize}
\vspace{-.4cm}
\end{proof}

Corollary \ref{cor:ParamTensor} can be improved by using the reduced
notion of tensor powers defined in Section \ref{sec:CSSpowers}:
parameters $k$ and $d$ are kept untouched, parameter $w$ is possibly
reduced and parameter $n$ is significantly reduced. To avoid making
the text cumbersome, we only state here the 
case $b=b_1=b_2$, but the
general statement is given in Appendix \ref{appendix:B}.

\begin{cor}\label{cor:ParamTensorReduc}
  Let $\qcode{C}$ be the CSS code associated to
  ${\chain{C}}:=\xymatrix{\F^{b}\ar@{^(->}[r]&
    \F^a\ar@{->>}[r]&\F^{b}}$, where $a,b\in\N$. If $\chain{C}$ and
  ${\chain{C}}^*$ satisfy both the hypothesis of Lemma \ref{lem:Th}
  for the same integers $N,K\in\N^*$, then for every $\ell\in\N$,
  ${\qcode{C}}^{\rotimes\ell}$ is a CSS code with parameters
\[
\left\llbracket\ \frac{2(a+b)^\ell+(a-2b)^\ell}{3}\ ;\  (a-2b)^\ell \
  ;\ \geq \left(\frac{N}{K}\right)^\ell  \ ;\ \leq a\ell \ \right\rrbracket.
\]
This provides a family 
logarithmically LDPC with a minimum distance which grows at least as the $\frac{\log N-\log K}{\log(a+b)}$-th power of the length.
\end{cor}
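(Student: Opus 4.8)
The plan is to deduce Corollary \ref{cor:ParamTensorReduc} from the general apparatus already built, treating each of the four parameters in the bracket $\llbracket n\,;\,k\,;\,d\,;\,w\rrbracket$ separately, exactly as in the proof of Corollary \ref{cor:ParamTensor} but keeping track of what the reduction process $\rotimes$ does. First I would observe that by definition $\qcode C^{\rotimes\ell}$ is obtained from $\qcode C^{\rotimes(\ell-1)}\otimes\qcode C$ by truncating to degrees $\{-1,0,1\}$ and then reducing; by the discussion in Section \ref{sec:Reduction} (in particular Remark \ref{rem:Dependances} and the paragraph following it) reduction leaves $k$ and $d$ untouched and can only decrease $w$, so for those three parameters it suffices to run the same inductive argument as in Corollary \ref{cor:ParamTensor} on the un-reduced truncated product and then note the reduction does no harm.

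For the dimension, the Künneth formula (Proposition \ref{prop:ProductParameters}, bottom line) gives $k_{\qcode C^{\otimes\ell}}=k_{\qcode C}^\ell=(a-2b)^\ell$ in degree zero, and since truncation to $\{-1,0,1\}$ and reduction both preserve $H_0$, the same value holds for $\qcode C^{\rotimes\ell}$. For the minimum distance, I would argue by induction on $\ell$: the chain complex $\chain C$ underlying $\qcode C$ is a reduced short complex with $a,b\le a$, it satisfies the hypotheses of Lemma \ref{lem:Th} for $N,K$ by assumption, and so does its dual $\chain C^*$; hence by Corollary \ref{cor:cor_of_main} (or directly Theorem \ref{thm:main}), $d_{\qcode C\otimes\qcode D}\ge\lceil\tfrac NK d_{\qcode D}\rceil$ for every CSS code $\qcode D$, and in particular for $\qcode D=\qcode C^{\rotimes(\ell-1)}$, whose distance is $\ge(N/K)^{\ell-1}$ by the inductive hypothesis; the truncation-and-reduction step producing $\qcode C^{\rotimes\ell}$ does not change the degree-zero homology, so $d_{\qcode C^{\rotimes\ell}}\ge\lceil\tfrac NK (N/K)^{\ell-1}\rceil\ge (N/K)^\ell$. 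For the weight, Proposition \ref{prop:ProductParameters} gives $w_{\qcode C^{\otimes\ell}}\le a\ell$ using $w_{\qcode C}=\max(w_{\chain C},w_{\chain C^*})\le a$ (reducedness forces $b_1,b_2\le a$), and reduction only lowers it, so $w_{\qcode C^{\rotimes\ell}}\le a\ell$.

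The genuinely new computation is the length. The length of the un-reduced truncation grows like $(a+2b)^\ell$, but reduction at each step cancels the degree $\pm1$ pieces against kernels and cokernels, and one must solve the recursion governing $n_{\qcode C^{\rotimes\ell}}$. I would set up this recursion exactly as in Appendix A of \cite{Audoux}: writing $\chain C^{\rotimes\ell}$ as a reduced short complex $\F^{b_\ell}\hookrightarrow\F^{a_\ell}\twoheadrightarrow\F^{b_\ell}$ (symmetry is preserved since $\chain C$ is symmetric when $b_1=b_2=b$, cf.\ the remark after Proposition \ref{prop:ChainToCSS}), one gets from the structure of the tensor product of short complexes that $a_\ell$ and $b_\ell$ satisfy a linear two-term recursion whose characteristic values are $a+b$ and $a-2b$ (the eigenvalues of the relevant $2\times2$ transfer matrix, which are $a+b$ with multiplicity coming from the ``bulk'' and $a-2b=k_{\qcode C}$ from the homology). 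Solving with the initial data $a_1=a$, $b_1=b$ yields the closed form $n_{\qcode C^{\rotimes\ell}}=a_\ell=\tfrac{2(a+b)^\ell+(a-2b)^\ell}{3}$. The main obstacle is precisely this bookkeeping: one must verify that the non-canonical reduction, performed recursively, yields a short complex whose degree $-1$ and degree $1$ parts have the dimensions predicted by the recursion — i.e.\ that the redundant rows removed at stage $\ell$ are exactly accounted for — and this is the content I would relegate to Appendix \ref{appendix:B}, where the general $b_1\ne b_2$ case is treated. Finally, from $n_\ell\sim\tfrac23(a+b)^\ell$ and $d_\ell\ge(N/K)^\ell$ one reads off $d_\ell\gtrsim n_\ell^{\log(a+b)/(\log N-\log K)}$, and $w_\ell\le a\ell=O(\log n_\ell)$ gives the logarithmically LDPC claim.
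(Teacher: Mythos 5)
Your proposal matches the paper's proof: the paper likewise notes that only the length requires a new argument (the other three parameters being handled exactly as in Corollary \ref{cor:ParamTensor}, with reduction preserving $k$ and $d$ and not increasing $w$), and it computes the length by setting up the recursion $\bigl(\begin{smallmatrix}a_{\ell+1}\\ b_{\ell+1}\end{smallmatrix}\bigr)=\bigl(\begin{smallmatrix}a & 2b\\ b & a-b\end{smallmatrix}\bigr)\bigl(\begin{smallmatrix}a_{\ell}\\ b_{\ell}\end{smallmatrix}\bigr)$ from the Künneth-justified reduction of the truncated product, then diagonalizing with eigenvalues $a+b$ and $a-2b$ to get the closed form. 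Your argument is correct and essentially identical, just leaving the explicit transfer matrix implicit.
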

\begin{proof}
  Compared to Corollary \ref{cor:ParamTensor}, only the statement on
  lengths needs a further proof.
  We set ${\chain{C}}_1:={\chain{C}}$ and define recursively ${\chain{C}}_\ell$
  as the tensor product of ${\chain{C}}$ with the reduction of
  ${\chain{C}}_{\ell-1}$. We define the sequences of integers $(a_\ell)_{\ell\in\N^*}$ and
$(b_\ell)_{\ell\in\N^*}$ by
${\chain{C}}_\ell=:\xymatrix{\F^{b_\ell}\ar@{^(->}[r]&\F^{a_\ell}\ar@{->>}[r]&\F^{b_\ell}}$. Developping
${\chain{C}}_\ell\otimes {\chain{C}}$ and using Künneth formula to say that the
homology is trivial except in degree zero, we obtain the
following --- for simplicity, we have written only the dimensions of the
different spaces --- where the second line corresponds to the result
after reduction:
\[
\vcenter{\hbox{$\xymatrix@!0 @R=1cm @C=2.2cm{
0\ar[r]&bb_\ell\ar[r]&ab_\ell+ba_\ell\ar[r]&aa_\ell+2bb_\ell\ar[r]\ar@{->>}[rd]&ab_\ell+ba_\ell\ar[r]&bb_\ell\ar[r]&0\\
&&ab_\ell+ba_\ell-bb_\ell\ar@{^(->}[ru]&&ab_\ell+ba_\ell-bb_\ell&&
}$}}.
\]
It follows that
$\left(\!\!\begin{array}{c}a_{\ell+1}\\b_{\ell+1}\end{array}\!\!\right)=A
\left(\!\!\begin{array}{c}a_{\ell}\\b_{\ell}\end{array}\!\!\right)$
with
$A=\left(\!\!\begin{array}{cc}a&2b\\b&a-b \end{array}\!\!\right)=\left(\!\!\begin{array}{cc}1&-2\\1&1\end{array}\!\!\right)^{-1}\left(\!\!\begin{array}{cc}a-2b&0\\0&a+b\end{array}\!\!\right)
\left(\!\!\begin{array}{cc}1&-2\\1&1\end{array}\!\!\right)$, and hence
that
$\left(\!\!\begin{array}{c}a_{\ell}\\b_{\ell}\end{array}\!\!\right)=A^{\ell-1}\left(\!\!\begin{array}{c}a\\b\end{array}\!\!\right)=\disp{\frac{1}{3}}\left(\!\!\begin{array}{c}2(a+b)^\ell+(a-2b)^\ell
                                                                                                                                                                 \\(a+b)^\ell-(a-2b)^\ell\end{array}\!\!\right)$.
\end{proof}

\begin{remarque}
  The value $\left(\frac{N}{K}\right)^\ell$ given as a lower bound for
  $d_\ell$ in
  Corollaries \ref{cor:ParamTensor} and \ref{cor:ParamTensorReduc} can
  actually be sharpened into
  $\bigg\lceil\!\!\cdots\!\!\Big\lceil\big\lceil\frac{N}{K}\big\rceil\frac{N}{K}\Big\rceil\frac{N}{K}\cdots\bigg\rceil$. It
  provides, in general, a slightly better constant.
\end{remarque}

Proposition \ref{prop:NonSharp} together
with either Corollary \ref{cor:ParamTensor} or
\ref{cor:ParamTensorReduc} imply that, by considering its tensor
powers, any CSS code, even the poorest one (as soon as it is not
defined with matrices containing columns of zeros),
provides a logarithmically LDPC family with a
minimum distance tending
to infinity:
\begin{cor}\label{cor:NonTrivial}
  If $\qcode{C}=\left(\mathbf{H}_X,\mathbf{H}_Z\right)$ is a CSS code
  such that none of $\mathbf{H}_X$ or $\mathbf{H}_Z$ has a zero column,
  then the families $\left(\qcode{C}^{\otimes\ell}\right)_{\ell\in\N}$
  and $\left(\qcode{C}^{\rotimes\ell}\right)_{\ell\in\N}$ are
  logarithmically LDPC with
  $d_{\qcode{C}^{\otimes\ell}},d_{\qcode{C}^{\rotimes\ell}}\geq
  2^\ell$ for every $\ell\in\N^*$.
\end{cor}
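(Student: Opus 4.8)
The plan is to reduce the whole statement to an inductive application of Lemma~\ref{lem:Th} with the constant $N/K=2$, the point being that the two ``no zero column'' hypotheses on $\mathbf{H}_X$ and $\mathbf{H}_Z$ are exactly what is needed to run the argument of Proposition~\ref{prop:NonSharp} for, respectively, the chain complex $\chain C$ and its dual $\chain C^*$.

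First I would normalize the presentation. Replacing $\mathbf H_X$ and $\mathbf H_Z$ by maximal families of linearly independent rows leaves the CSS code $\qcode C$ and all its parameters unchanged (Section~\ref{sec:Reduction}) and cannot create a zero column: if a column became zero after deleting a redundant row, it was already zero, since that row is a combination of the surviving ones. So we may assume the associated short complex $\chain C:\xymatrix{C_{-1}\ar@{^(->}[r]^-{\p_{-1}}&C_0\ar@{->>}[r]^-{\p_0}&C_1}$ is reduced, hence balanced, and so is $\chain C^*$. Under the correspondence of Proposition~\ref{prop:ChainToCSS}, ``$\mathbf H_X$ has no zero column'' means that $\p_0$ does not vanish on any basis vector of $C_0$, while ``$\mathbf H_Z$ has no zero column'' means, via $\Mat_{\BB^*}(\p_{-1}^*)={}^t\Mat_\BB(\p_{-1})$, that $\p_{-1}^*$ — which plays the role of ``$\p_0$'' for $\chain C^*$ — does not vanish on any basis vector of $C_0^*$. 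Thus the argument in the proof of Proposition~\ref{prop:NonSharp}, applied to $\chain C$, provides for each $j_0$ a set $\Omega_{j_0}=\Ker_{j_0}^\cperp$ with $\overlap(\Omega_{j_0})=\tfrac12|\Omega_{j_0}|$; taking $N=|\Omega_{j_0}|$ and $K=N/2$ shows that $\chain C$ satisfies the hypothesis of Lemma~\ref{lem:Th} with $N/K=2$, and the same reasoning gives the same for $\chain C^*$. In particular Corollary~\ref{cor:DMinBound} yields $d_{\chain C},d_{\chain C^*}\geq 2$, hence $d_{\qcode C}=\min(d_{\chain C},d_{\chain C^*})\geq 2$.

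Then I would run the induction. For the plain tensor powers, write $\chain C^{\otimes(\ell+1)}=\chain C\otimes\chain C^{\otimes\ell}$; since $\chain C$ is balanced, Lemma~\ref{lem:Th} applies regardless of $\chain C^{\otimes\ell}$ and gives $d_0(\chain C^{\otimes(\ell+1)})\geq 2\,d_0(\chain C^{\otimes\ell})$. As truncation to degrees $\{-1,0,1\}$ leaves the degree-zero homology unchanged and commutes with dualization, $d_{\qcode C^{\otimes\ell}}=\min\!\bigl(d_0(\chain C^{\otimes\ell}),d_0((\chain C^*)^{\otimes\ell})\bigr)$, and carrying out the same induction for $\chain C^*$ gives $d_{\qcode C^{\otimes\ell}}\geq 2^{\ell-1}d_{\qcode C}\geq 2^\ell$. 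For the reduced tensor powers one argues identically: if $\chain D$ is the reduced complex of $\qcode C^{\rotimes\ell}$, applying Lemma~\ref{lem:Th} to $\chain C\otimes\chain D$ and to $\chain C^*\otimes\chain D^*$ (both $\chain C$ and $\chain C^*$ being balanced and satisfying the hypothesis with ratio $2$), and using that reduction and truncation preserve minimum distance, yields $d_{\qcode C^{\rotimes(\ell+1)}}\geq 2\,d_{\qcode C^{\rotimes\ell}}$, hence $d_{\qcode C^{\rotimes\ell}}\geq 2^{\ell-1}d_{\qcode C}\geq 2^\ell$. That both families are logarithmically LDPC is exactly the length-and-weight content of Corollaries~\ref{cor:ParamTensor} and~\ref{cor:ParamTensorReduc}: the parity-check row weights grow linearly in $\ell$ (Proposition~\ref{prop:ProductParameters}) while the lengths grow exponentially in $\ell$, so all weights are $O(\log n)$.

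The part requiring care — more a bookkeeping point than a real obstacle — is the interplay between the truncation defining $\otimes$ (which is neither associative nor reducedness-preserving) and the hypotheses of Lemma~\ref{lem:Th}: the induction must always feed the \emph{fixed, reduced} factor $\chain C$ (the balanced one) into the lemma, with the possibly non-balanced partial product as the other factor, and one must check that truncating back to length $3$ does not disturb the degree-zero homology through which $d$ is read off. Once this is set up, the statement is a direct unwinding of Proposition~\ref{prop:NonSharp} and Lemma~\ref{lem:Th}.
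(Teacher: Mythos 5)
Your proposal is correct and follows essentially the same route as the paper: the ``no zero column'' hypotheses are converted, via the argument of Proposition~\ref{prop:NonSharp} applied to both $\chain C$ and $\chain C^*$, into the hypothesis of Lemma~\ref{lem:Th} with ratio $N/K=2$, and the conclusion is then read off from the iterated application of that lemma together with the length and weight estimates of Corollaries~\ref{cor:ParamTensor} and~\ref{cor:ParamTensorReduc}. The only difference is presentational: you unwind the induction and the reduction/truncation bookkeeping explicitly (correctly noting that removing redundant rows cannot create zero columns), whereas the paper simply cites Proposition~\ref{prop:NonSharp} together with Corollary~\ref{cor:ParamTensor} or~\ref{cor:ParamTensorReduc}.
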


In particular, and even if a CSS code has no quantum degeneracy,
i.e. its quantum minimum distance is not larger than the minimum
of the distances of the two classical codes defining it,
for a large enough $ \ell$, its $\ell$--th iterated power does.

\begin{remarque}
  The previous statement asserts that the row weight of the $\ell$--th
  iterated tensor power $\CC^{\otimes \ell}$ (or $\CC^{\otimes_r
    \ell}$) is linear in $\ell$ and hence
  logarithmic in the code length. Very similar arguments would show
  that the column weights, which are related to the number of
  stabilizers acting non trivially on a given qubit, are also in $O
  (\ell)$ and hence logarithmic in the code length.
\end{remarque}



\section{Reinterpretation of known results}
\label{sec:KnownResults}

\subsection{Tillich--Zemor codes}
\label{sec:TZ}
In \cite{TZ}, J.-P. Tillich and G. Z\'emor 
give a construction of a CSS
code from any two classical codes. Their construction is based on a
graph point of view. In this section, we give an
alternative approach of their construction based on tensor products.

Any linear map between two $\F$--spaces can be seen as a chain complex
of length 2 and, by adding a null space on the right or on the left,
as a short complex. If interested only in reduced complexes, one can
apply the reduction process described in Section \ref{sec:Reduction}
or, equivalently, consider only injective maps (with a null space on
their right) and surjective map (with a null space on their left). The
following two propositions can be straightforwardly verified.
\begin{prop}
  If $\chain{C}:=\xymatrix{C_{-1}\ar@{^(->}[r]^-{g}&C_0\ar@{->>}[r]&0}$ is given
  with a basis $\BB$, then
  \begin{itemize}
  \item $n_{\chain{C}}=\dim(C_0)$;
  \item $H_0(\chain{C})=\Coker(g)$ so
    $k_{\chain{C}}=\dim(C_0)-\dim(C_{-1})$ and, if $\Coker(g)\ncong0$, $d_{\chain{C}}=1$;
  \item $w_{\chain{C}}=0$.
  \end{itemize}
\end{prop}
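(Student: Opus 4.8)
The plan is to read off each of the three items directly from the definitions in Section~\ref{sec:ChainComplexes}, using that for this particular short complex the boundary maps are $\p_{-1}=g$ and $\p_0=0$ (the only map $C_0\to 0$), and that $C_1=\{0\}$.

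For the length, $n_{\chain{C}}:=n_0(\chain{C})=\dim(C_0)$ is the definition, so there is nothing to do. For the homology, since $\p_0=0$ we have $\Ker(\p_0)=C_0$, and since $\p_{-1}=g$ we have $\Im(\p_{-1})=\Im(g)$; hence $H_0(\chain{C})=\Ker(\p_0)/\Im(\p_{-1})=C_0/\Im(g)=\Coker(g)$. Taking dimensions and using that $g$ is injective gives $k_{\chain{C}}=\dim\Coker(g)=\dim(C_0)-\dim\Im(g)=\dim(C_0)-\dim(C_{-1})$. For the minimum distance, assume $\Coker(g)\ncong 0$ and argue by contradiction: if every basis vector $b\in\BB_{|C_0}$ lay in $\Im(g)$, then, since the $b$'s span $C_0$, we would get $\Im(g)=C_0$ and hence $\Coker(g)=0$. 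So some $b\in\BB_{|C_0}$ has $[b]\neq 0$ in $H_0(\chain{C})$; since $|b|=1$ and $1$ is the least possible weight of a nonzero vector, $d_{\chain{C}}=1$. Finally, $\Mat_\BB(\p_0)$ is the matrix of the zero map with target $C_1=\{0\}$: it has no rows (equivalently, all its entries vanish), so the maximal row weight $w_{\chain{C}}$ is $0$.

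There is no genuine obstacle here; the statement is a direct unwinding of the definitions, which is presumably why the authors call it a straightforward verification. The only point that is not purely mechanical is the equality $d_{\chain{C}}=1$, which requires the small observation that a nonzero cokernel forces at least one basis element of $C_0$ to be homologically nontrivial; everything else is immediate.
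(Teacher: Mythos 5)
Your proof is correct and is exactly the direct verification the paper has in mind: the authors give no written proof, merely asserting that the proposition ``can be straightforwardly verified,'' and your unwinding of the definitions (including the small observation that a nonzero cokernel forces some basis vector of $C_0$ to be homologically nontrivial, whence $d_{\chain{C}}=1$) supplies precisely that verification.
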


\begin{prop}
  If $\chain{C}:=\xymatrix{0\ \ar@{^(->}[r]&C_0\ar@{->>}[r]^(.4){p}&C_1}$ is given
  with a basis $\BB$, then
  \begin{itemize}
  \item $n_{\chain{C}}=\dim(C_0)$;
  \item $H_0(\chain{C})=\Ker(p)$ so
    $k_{\chain{C}}=\dim(C_0)-\dim(C_1)$ and, if $\Ker(p)\ncong0$,
    $d_{\chain{C}}=\pmin_{x\in\Ker(p)\setminus\{0\}}|x|$;
  \item $w_{\chain{C}}$ is the
    maximal weight of a row in $\Mat_\BB(p)$.
  \end{itemize}
\end{prop}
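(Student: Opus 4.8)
The plan is simply to unwind the definitions of Section~\ref{sec:ChainComplexes}: here $\chain{C}$ is the $1$--chain complex carried by $C=C_0\oplus C_1$ whose boundary map has $\p_{-1}$ equal to the (necessarily zero) map out of $C_{-1}=\{0\}$ and $\p_0=p$, so that its support is $\{0,1\}$. I expect no genuine obstacle: each of the three items is a one-line substitution, and the proposition is indeed ``straightforwardly verified'' — the only thing worth being careful about is whether the claimed identities are literal equalities or merely isomorphisms.

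First the two outer bullets. By definition $n_{\chain{C}}:=n_0(\chain{C})=\dim(C_0)$, which is the length bullet. And $w_{\chain{C}}:=w_0(\chain{C})$ is the maximum Hamming weight of a row of $\Mat_\BB(\p_0)=\Mat_\BB(p)$; the map $\p_{-1}$ contributes nothing since $C_{-1}=\{0\}$, so $w_{\chain{C}}$ is exactly the maximal weight of a row of $\Mat_\BB(p)$.

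Next the homology. Since $\Im(\p_{-1})=\{0\}$, the definition $H_0(\chain{C})=\Ker(\p_0)/\Im(\p_{-1})$ collapses to the \emph{equality} $H_0(\chain{C})=\Ker(p)$, the class map $x\mapsto[x]$ being literally the identity of $\Ker(p)$. Hence $k_{\chain{C}}=\dim H_0(\chain{C})=\dim\Ker(p)$, and since $p$ is onto we have $\dim\Im(p)=\dim(C_1)$, so rank--nullity gives $\dim\Ker(p)=\dim(C_0)-\dim(C_1)$. For the minimum distance, $d_{\chain{C}}:=\min\{\,|x|\ :\ [x]\in H_0(\chain{C})\setminus\{0\}\,\}$; as the class map is the identity, $[x]\neq 0\iff x\neq 0$, and $\Ker(p)\setminus\{0\}$ is nonempty precisely under the stated hypothesis $\Ker(p)\ncong 0$, whence $d_{\chain{C}}=\pmin_{x\in\Ker(p)\setminus\{0\}}|x|$.

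The only point worth flagging is that this last formula is \emph{literal} and not just valid up to isomorphism: it is the vanishing of $\Im(\p_{-1})$ that makes the quotient trivial, so that no choice of coset representatives intervenes (cf.\ Remark~\ref{rem:Dependances}, which isolates the dependence of $d_{\chain{C}}$ on $\BB_{|C_0}$ only). Every remaining step being pure bookkeeping, the above is the verification.
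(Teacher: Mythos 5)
Your verification is correct, and it is exactly the routine unwinding of the definitions that the paper has in mind when it states the proposition "can be straightforwardly verified" without giving a proof: length and weight by definition, $H_0=\Ker(p)$ because $\Im(\p_{-1})=0$, rank--nullity with surjectivity of $p$ for the dimension, and the literal identification of classes with kernel elements for the minimum distance. No gaps; your flag that the equalities are literal (no coset representatives intervene) is a fair, if optional, refinement.
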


Now, noting that
\[
\left(\xymatrix@!0@C=1.25cm{C_{-1}\ar@{^(->}[r]^-{g}&C_0\ar@{->>}[r]&\ 0}\right)^*=\xymatrix@!0@C=1.25cm{C_{-1}&C_0\ar@{->>}[l]_(.3){g^*}&0\ar@{_(->}[l]}
\textrm{ and }
\left(\xymatrix@!0@C=1.25cm{0\ \ar@{^(->}[r]&C_0\ar@{->>}[r]^(.4){p}&C_1}\right)^*=\xymatrix@!0@C=1.25cm{0&C_0\ar@{->>}[l]&C_1\ar@{_(->}_-{p^*}[l]},
\]
and recalling from Section \ref{sec:ClassicalCodes} that a classical
code $\code{C}$ can be given either by an injective generating map
$g_\code{C}$ or by a surjective parity-check map $p_\code{C}$, and
that, up to transpose, $\code{C}$ and $\code{C}^\perp$ exchange their
generating and parity-check maps, we obtain as a corollary of Proposition \ref{prop:ChainToCSS}:
\begin{prop}
  If $\code{C}$ is a classical code, then the CSS code $\qcode{C}_p$
  associated to $\xymatrix{0\
    \ar@{^(->}[r]&C_0\ar@{->>}[r]^(.4){p_\code{C}}&C_1}$ has
  parameters $n_{\qcode{C}_p}=n_{\code{C}}$,
  $k_{\qcode{C}_p}=k_{\code{C}}$, $d_{\qcode{C}_p}=1$ and
  $w_{\qcode{C}_p}=w_{\code{C}}$; and the CSS code $\qcode{C}_g$
  associated to
  $\xymatrix{C_{-1}\ar@{^(->}[r]^-{g_\code{C}}&C_0\ar@{->>}[r]&0}$ has
  parameters $n_{\qcode{C}_g}=n_{\code{C}^\perp}$,
  $k_{\qcode{C}_g}=k_{\code{C}^\perp}$, $d_{\qcode{C}_g}=1$
  and $w_{\qcode{C}_g}=w_{\code{C}^\perp}$.
\end{prop}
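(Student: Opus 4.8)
The plan is to derive this proposition directly from Proposition \ref{prop:ChainToCSS}, which computes the parameters of a CSS code from its associated short complex, together with the two displayed adjoint identities relating the short complex $\xymatrix@C=1cm{C_{-1}\ar@{^(->}[r]^-{g}&C_0\ar@{->>}[r]&0}$ to its dual $\xymatrix@C=1cm{C_{-1}&C_0\ar@{->>}[l]_-{g^*}&0\ar@{_(->}[l]}$, and similarly for the parity-check version. The only real work is bookkeeping: identify which classical code plays the role of $\code{C}_1$ and which plays $\code{C}_2$, and then read off $n,k,d,w$ from the two preceding propositions and their duals.

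First I would treat $\qcode{C}_p$, associated to $\chain{C}_p:=\xymatrix{0\ar@{^(->}[r]&C_0\ar@{->>}[r]^-{p_\code{C}}&C_1}$. By the second preceding proposition, $n_{\chain{C}_p}=\dim(C_0)=n_\code{C}$, $H_0(\chain{C}_p)=\Ker(p_\code{C})=\code{C}$ so $k_{\chain{C}_p}=k_\code{C}$ and $d_{\chain{C}_p}=\min_{x\in\code C\setminus\{0\}}|x|=d_\code{C}$, while $w_{\chain{C}_p}=w_\code{C}$. Dually, using the displayed identity $\bigl(\xymatrix@C=1cm{0\ar@{^(->}[r]&C_0\ar@{->>}[r]^-{p}&C_1}\bigr)^*=\xymatrix@C=1cm{0&C_0\ar@{->>}[l]&C_1\ar@{_(->}[l]_-{p^*}}$ and the \emph{first} preceding proposition applied to the complex $\xymatrix{C_1\ar@{^(->}[r]^-{p^*_\code{C}}&C_0\ar@{->>}[r]&0}$ (noting $p^*_\code{C}$ is a generating map for $\code C^\perp$ up to transpose, as recalled at the end of Section \ref{sec:ClassicalCodes}), one gets $d_{\chain{C}_p^*}=1$ because $\Coker(p^*_\code{C})=\fract{C_0}/{\code{C}^\perp}\ncong 0$ (assuming $\code{C}^\perp\neq C_0$, i.e. $\code C\neq 0$), and $w_{\chain{C}_p^*}=0$. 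Proposition \ref{prop:ChainToCSS} then gives $n_{\qcode{C}_p}=n_\code{C}$, $k_{\qcode{C}_p}=k_\code{C}$, $d_{\qcode{C}_p}=\min(d_\code{C},1)=1$ and $w_{\qcode{C}_p}=\max(w_\code{C},0)=w_\code{C}$, as claimed.

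The case of $\qcode{C}_g$, associated to $\chain{C}_g:=\xymatrix{C_{-1}\ar@{^(->}[r]^-{g_\code{C}}&C_0\ar@{->>}[r]&0}$, is symmetric: here $g_\code{C}$ generates $\code C$, so by the first preceding proposition $n_{\chain{C}_g}=\dim(C_0)=n_\code{C}$, $H_0(\chain{C}_g)=\Coker(g_\code{C})=\fract{C_0}/{\code C}$, hence $k_{\chain{C}_g}=n_\code{C}-k_\code{C}=k_{\code{C}^\perp}$ and $d_{\chain{C}_g}=1$ (as $\Coker(g_\code{C})\ncong 0$ whenever $\code C\neq C_0$), while $w_{\chain{C}_g}=0$. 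For the dual, the identity $\bigl(\xymatrix@C=1cm{C_{-1}\ar@{^(->}[r]^-{g}&C_0\ar@{->>}[r]&0}\bigr)^*=\xymatrix@C=1cm{C_{-1}&C_0\ar@{->>}[l]_-{g^*}&0\ar@{_(->}[l]}$ together with the second preceding proposition applied to $\xymatrix{0\ar@{^(->}[r]&C_0\ar@{->>}[r]^-{g^*_\code{C}}&C_{-1}}$ (where $g^*_\code{C}$ is, up to transpose, a parity-check map for $\code{C}^\perp$) yields $d_{\chain{C}_g^*}=\min_{x\in\code{C}^\perp\setminus\{0\}}|x|=d_{\code{C}^\perp}$ and $w_{\chain{C}_g^*}=w_{\code{C}^\perp}$. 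Proposition \ref{prop:ChainToCSS} then gives $d_{\qcode{C}_g}=\min(1,d_{\code{C}^\perp})=1$ and $w_{\qcode{C}_g}=\max(0,w_{\code{C}^\perp})=w_{\code{C}^\perp}$, while $n_{\qcode{C}_g}=n_\code{C}=n_{\code{C}^\perp}$ and $k_{\qcode{C}_g}=k_{\code{C}^\perp}$, as claimed.

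I do not expect any serious obstacle here; this is precisely why the paper says the two propositions ``can be straightforwardly verified.'' The one point that needs a word of care is the degenerate situation $\code C=0$ or $\code C=C_0$, where the relevant cokernel or kernel vanishes and the claimed value $d=1$ is vacuous (there is no nonzero homology class); in that case the code is trivial and the statement on $d$ should be read with the usual convention. Apart from that caveat, everything is a direct substitution into Proposition \ref{prop:ChainToCSS}, using only the elementary facts about generating and parity-check maps recalled in Section \ref{sec:ClassicalCodes}, namely that $\code C$ and $\code{C}^\perp$ swap their generating and parity-check matrices up to transpose and that $n_{\code{C}^\perp}=n_\code{C}$, $k_{\code{C}^\perp}=n_\code{C}-k_\code{C}$.
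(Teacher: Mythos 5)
Your proof is correct and follows exactly the route the paper intends: the paper offers no written proof beyond declaring the statement a corollary of Proposition \ref{prop:ChainToCSS} combined with the two preceding propositions and the duality identities, and your write-up is precisely that substitution carried out explicitly (including the sensible caveat about the degenerate cases $\code{C}=0$ or $\code{C}=C_0$).
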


It can hence be noted that CSS codes associated to classical codes
have very poor minimum distances. However, combining Proposition
\ref{prop:ChainToCSS} with Proposition \ref{prop:ProductParameters}
and Corollary \ref{cor:dProd}, we obtain
Tillich--Zemor result which can be stated as:
\begin{theo}\cite{TZ}
  If $\code{C}$ and $\code{D}$ are two classical codes given,
  respectively, by a parity-check map $p_\code{C}$ and a generating
  map $g_\code{D}$, then the CSS code $\qcode{C}\otimes\qcode{D}$
  associated to
  $\left(\xymatrix{C_0\ar@{->>}[r]^(.4){p_\code{C}}&C_1}\right)\otimes\left(\xymatrix{D_{-1}\ar@{^(->}[r]^-{g_\code{D}}&D_0}\right)$
  has parameters
  $n_{{\qcode{C}}\otimes{\qcode{D}}}=n_{\code{C}}n_{\code{D}^\perp}+k_{\code{C}}k_{\code{D}^\perp}$,
  $k_{{\qcode{C}}\otimes{\qcode{D}}}=k_{{\code{C}}}k_{\code{D}^\perp}$,
  $d_{{\qcode{C}}\otimes{\qcode{D}}}=\min(d_{{\code{C}}},d_{\code{D}^\perp})$
  and
  $w_{{\qcode{C}}\otimes{\qcode{D}}}=\max\big(w_{{\code{C}}}+w_{{{\code{D}}}^\perp},w_{{\code{C}}^\perp}+w_{{\code{D}}}\big)$. 
\end{theo}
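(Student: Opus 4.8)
The plan is to transport everything to short chain complexes via Proposition~\ref{prop:ChainToCSS} and then invoke the computations already recorded for $\otimes$. Pad the two length~$2$ complexes into short ones: let $\chain{C}$ denote $0\to C_0\xrightarrow{p_\code{C}}C_1$ and $\chain{D}$ denote $D_{-1}\xrightarrow{g_\code{D}}D_0\to 0$. By Proposition~\ref{prop:ChainToCSS} together with the identifications of $\qcode{C}_p$ and $\qcode{C}_g$ recalled just above, the CSS codes attached to $\chain{C}$ and $\chain{D}$ are exactly the $\qcode{C}$ and $\qcode{D}$ of the statement. The first thing to notice is that $\chain{C}$ has support $\{0,1\}$ and $\chain{D}$ has support $\{-1,0\}$, so $\chain{C}\otimes\chain{D}$ is already supported in $\{-1,0,1\}$: the truncation appearing in the definition of the tensor product of CSS codes is therefore vacuous, and $\qcode{C}\otimes\qcode{D}$ is literally the CSS code of $\chain{C}\otimes\chain{D}$, namely $C_0\otimes D_{-1}\to(C_0\otimes D_0)\oplus(C_1\otimes D_{-1})\to C_1\otimes D_0$ with differential $\Id\otimes\p_\chain{D}+\p_\chain{C}\otimes\Id$.

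Length, dimension and weight are then bookkeeping through Proposition~\ref{prop:ProductParameters}. The length is $n_0(\chain{C}\otimes\chain{D})=\dim(C_0)\dim(D_0)+\dim(C_1)\dim(D_{-1})$; the dimension follows from the Künneth clause, using that $\chain{C}$ and $\chain{D}$ are balanced with $H_0(\chain{C})\cong\code{C}$ and $H_0(\chain{D})\cong\Coker(g_\code{D})$, so that $k_{\qcode{C}\otimes\qcode{D}}=\dim H_0(\chain{C})\cdot\dim H_0(\chain{D})$; and the weight is obtained from the weight clause of Proposition~\ref{prop:ProductParameters}, applied both to $\chain{C}\otimes\chain{D}$ and to its dual $\chain{C}^*\otimes\chain{D}^*\cong(\chain{C}\otimes\chain{D})^*$, and then combined through Proposition~\ref{prop:ChainToCSS}. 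It remains to rewrite $\dim C_0,\dim C_1,\dim D_0,\dim D_{-1}$ and the weights of $p_\code{C}$, $g_\code{D}$ and their transposes in terms of the four classical codes $\code{C},\code{C}^\perp,\code{D},\code{D}^\perp$, using $n_{\code{E}^\perp}=n_{\code{E}}$, $k_{\code{E}^\perp}=n_{\code{E}}-k_{\code{E}}$, and the facts that $p_\code{C}^*$ generates $\code{C}^\perp$ while $g_\code{D}^*$ is a parity-check map of $\code{D}^\perp$; this produces the announced length, dimension and weight. Keeping track of which of these weights each summand of the (co)differential contributes is the only slightly delicate point of this part.

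The one genuinely non-mechanical ingredient is the minimum distance, which I would read off Corollary~\ref{cor:dProd} rather than estimate directly. By Proposition~\ref{prop:ChainToCSS}, $d_{\qcode{C}\otimes\qcode{D}}=\min\big(d_{\chain{C}\otimes\chain{D}},\,d_{\chain{C}^*\otimes\chain{D}^*}\big)$. Each of $\chain{C},\chain{D},\chain{C}^*,\chain{D}^*$ is balanced — by surjectivity of $p_\code{C}$, injectivity of $g_\code{D}$, and the fact that the homology of a dual complex is the dual of the homology — so Corollary~\ref{cor:dProd} applies to both products and gives $\max(d_{\chain{C}},d_{\chain{D}})\le d_{\chain{C}\otimes\chain{D}}\le d_{\chain{C}}d_{\chain{D}}$, and likewise for the dual. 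Now $d_{\chain{C}}=d_{\code{C}}$, whereas $d_{\chain{D}}$ is the least weight of a nonzero coset of $\code{D}$ in $D_0$, which equals $1$ as soon as $\code{D}\neq D_0$ (the non-degenerate case); the two inequalities then pinch to $d_{\chain{C}\otimes\chain{D}}=d_{\code{C}}$. Dually $d_{\chain{C}^*}=1$ and $d_{\chain{D}^*}=d_{\code{D}^\perp}$, so $d_{\chain{C}^*\otimes\chain{D}^*}=d_{\code{D}^\perp}$, and therefore $d_{\qcode{C}\otimes\qcode{D}}=\min(d_{\code{C}},d_{\code{D}^\perp})$. The crux here — and essentially the only step that is not dictionary-chasing between chain complexes and classical codes — is that the Künneth upper bound $d_{\chain{C}}d_{\chain{D}}$ coincides with the lower bound $\max(d_{\chain{C}},d_{\chain{D}})$, which happens precisely because one factor in each of the two products has (co)distance $1$.
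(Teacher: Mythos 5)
Your proposal is correct and follows exactly the route the paper intends: it combines Proposition~\ref{prop:ChainToCSS}, the two preceding propositions identifying $\qcode{C}_p$ and $\qcode{C}_g$ (both of quantum distance $1$), Proposition~\ref{prop:ProductParameters} for length, dimension and weight, and Corollary~\ref{cor:dProd} for the distance, with the pinching of the upper and lower bounds occurring because one factor in each of the two products has (co)distance $1$. The paper leaves all of these details implicit, so your write-up is simply a fleshed-out version of the same argument.
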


\begin{remarque}
  {The fact that one has to combine two classical codes described,
    respectively, by a parity-check and a generating matrix should be
    compared to the necessity, in Tillich--Zemor construction, to deal
  with a classical code and the dual of another one. This is pictured
  by the butterfly crossed polygon in the right-hand side of \cite[Figure 5]{TZ}.}
\end{remarque}


\subsection{Khovanov codes}
\label{sec:Khovanov}
Chain complexes arise naturally in the context of topology, and in
particular in the framework of knot and link theory.
Khovanov homology is an example of link invariant which is defined as the homology of a
chain complex $\Ch(D)$ associated to any link
diagram $D$.
In
\cite{Audoux}, the first author used it to define CSS codes associated to link
diagrams.
Khovanov homology is related to tensor products via the following proposition:
\begin{prop}
 For any pointed link diagrams $D_1$ and $D_2$, $\Ch(D_1\#
 D_2)=\Ch(D_1)\otimes\Ch(D_2)$ where $\#$ denotes the pointed connected sum.
\end{prop}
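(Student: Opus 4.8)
The plan is to realise both sides as total complexes of cubes of resolutions and to exhibit a degreewise identification coming from the way circles behave under the connected sum. Recall that, for a link diagram $D$ with crossing set $\mathcal X$, the complex $\Ch(D)$ splits as $\bigoplus_{u\in\{0,1\}^{\mathcal X}}\Ch(D)_u$, where $\Ch(D)_u$ assigns the rank--two Frobenius algebra $V$ underlying Khovanov homology to each circle of the complete resolution $D_u$, with the distinguished circle carrying the basepoint constrained to a fixed one--dimensional summand $L\subset V$, and where the differential is the (sign--free, since we work over $\F$) sum, over the edges of the cube, of the elementary merge and comultiplication maps $m,\Delta$ of $V$. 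The first step is to note that the pointed connected sum is performed along an arc avoiding all crossings, so $\mathcal X(D_1\#D_2)=\mathcal X_1\sqcup\mathcal X_2$ and $\{0,1\}^{\mathcal X(D_1\#D_2)}=\{0,1\}^{\mathcal X_1}\times\{0,1\}^{\mathcal X_2}$; a complete resolution of $D_1\#D_2$ is thus a pair $(u,v)$ of complete resolutions of $D_1$ and $D_2$.

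The second step is the combinatorial heart. In the resolution $(D_1\#D_2)_{(u,v)}$, the set of circles is the disjoint union of the non--distinguished circles of $(D_1)_u$, the non--distinguished circles of $(D_2)_v$, and a single further circle $\widehat M$ obtained by splicing the distinguished circle of $(D_1)_u$ to the distinguished circle of $(D_2)_v$ along the connecting arc; moreover $\widehat M$ carries the basepoint of $D_1\#D_2$, hence is the distinguished circle there. Consequently the number of circles drops by exactly one, the two copies of $L$ glue to a single copy of $L$, and one obtains a canonical isomorphism $\Ch(D_1\#D_2)_{(u,v)}\cong\Ch(D_1)_u\otimes\Ch(D_2)_v$. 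Summing over $(u,v)$, and using that circle counts and crossing signs are additive under $\#$, gives the asserted identification of underlying graded spaces $\Ch(D_1\#D_2)\cong\Ch(D_1)\otimes\Ch(D_2)$, compatibly with the gradings.

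The third step is to match the differentials. An edge of the cube for $D_1\#D_2$ flips one crossing $c$, belonging to $\mathcal X_1$ or to $\mathcal X_2$; say $c\in\mathcal X_1$. Since $c$ is localised inside the $D_1$--part of the diagram, the elementary map attached to this edge is a merge or a split among circles coming from $D_1$ only — the circle $\widehat M$ possibly among them, playing the role of the distinguished circle of $(D_1)_u$ — and it acts as the identity on every circle coming from $D_2$. Inspecting the local picture near $c$ (which the connected sum leaves unchanged, the basepoint remaining on the spliced side) shows that, through the isomorphism above, this map is precisely the corresponding component of $\p_{\Ch(D_1)}\otimes\Id$; symmetrically for $c\in\mathcal X_2$. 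Over $\F$ there are no Koszul signs to reconcile, so the full differential on $\Ch(D_1\#D_2)$ equals $\p_{\Ch(D_1)}\otimes\Id+\Id\otimes\p_{\Ch(D_2)}$, that is $\p_{\Ch(D_1)\otimes\Ch(D_2)}$, which completes the proof.

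The step I expect to be the main obstacle is this last one, and within it the edges whose merge or split involves the distinguished circle: one must check that the reduction convention by which the basepoint circle contributes only the line $L$ is transported consistently by the splicing, i.e. that absorbing the distinguished circle of $D_1$ into a merge, or splitting off its non--basepoint part, corresponds after the connected sum to the same operation performed on $\widehat M$ with the basepoint on the correct resulting component. This amounts to a short case analysis of how $m$ and $\Delta$ restrict to the $L$--constrained summands; everything else is bookkeeping on circles and on the product cube.
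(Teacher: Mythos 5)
Your argument is correct, but there is nothing in the paper to compare it with: the proposition is stated in Section 3.2 without proof, as a standard multiplicativity property of the pointed (reduced) Khovanov complex under connected sum, implicitly deferred to the Khovanov-homology literature and to \cite{Audoux}. What you supply is exactly the standard verification: the crossing sets are disjoint, so the cube of resolutions of $D_1\# D_2$ is the product cube; in each resolution $(u,v)$ the circles are the unmarked circles of $(D_1)_u$, the unmarked circles of $(D_2)_v$, and the single spliced circle $\widehat M$ through the new basepoint (note it traverses both connecting arcs, so exactly one circle crosses the band); the canonical identification $L\otimes L\cong L$ then matches the underlying spaces, and since each edge map is a local merge or split whose participating circles correspond under marked~$\leftrightarrow\widehat M$, the differentials agree, with no Koszul signs to track over $\F$. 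Two small points deserve to be made explicit. First, your ``fixed one--dimensional summand $L\subset V$'' must be the ideal spanned by the nilpotent generator $x$ (or, in the dual convention, the quotient $V/xV$): an arbitrary one--dimensional summand is not preserved by the edge maps, and the case analysis you defer to the end rests precisely on $m(x\otimes 1)=x$, $m(x\otimes x)=0$ and $\Delta(x)=x\otimes x$. Second, the ``basepoint stays on the spliced side'' check you flag is indeed the only nontrivial case (a split of $\widehat M$ at a crossing of $\mathcal X_1$ separates it into the circle containing the connecting band, which inherits the basepoint, and an unmarked circle entirely inside the $D_1$ part), and once written out it is as short as you predict. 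So the proof is sound and fills in a step the paper leaves to the reader.
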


\subsubsection{Unknot codes}

The diagrams used in \cite{Audoux} to define the unknot codes are not
iterated connected sums of a given diagram. However, their Khovanov
chain complexes is isomorphic to that of the following diagrams
\[
\dessin{2.5cm}{NewUnknot}
\]
which are iterated connected sums.
It follows that the chain complexes underlying unknot codes are the $\ell$-th tensor power of
\[
\vcenter{\hbox{$\xymatrix@!0 @R=.35cm @C=2.5cm{
&\bullet\ar@{-}[rddd]&\\
&&\\
&\bullet\ar@{-}[rddd]&\\
\bullet\ar@{-}[ruuu]\ar@{-}[ru]\ar@{-}[rd]&&\bullet\\
&\bullet\ar@{-}[ru]\ar@{-}[rd]&\\
\bullet \ar@{-}[ru]\ar@{-}[rd]\ar@{-}[rddd]&&\bullet\\
&\bullet\ar@{-}[ruuu]&\\
&&\\
&\bullet\ar@{-}[ruuu]&
}$}}.
\]
We denote the generators in the middle degree, from top to bottom, by
positive integers from 1 to 5.
Using the subset
notation described in the Notation section, the homology is generated by
$14$. Since
the chain complex is symmetric, it is sufficient to use Corollary \ref{cor:ParamTensor} with
$\Omega_{14}:=\big\{12,45\big\}$ to obtain
back the parameters
$\param{\frac{3^{2\ell+1}}{\sqrt{8\pi\ell}}}{1}{2^\ell}{3\ell}$.
Using Corollary \ref{cor:ParamTensorReduc}, it can be improved into a
family with asymptotical parameters $\param{\frac{2 \cdot 7^\ell}{3}}{1}{2^\ell}{3\ell}$.

\subsubsection{Unlink codes}

The diagrams considered in \cite{Audoux} to define unlink codes are
iterated connected sums of the following diagram
\[
\dessin{1.95cm}{Unlink0}
\]
so the associated chain complexes are iterated tensor powers of
\[
\vcenter{\hbox{$\xymatrix@!0 @R=.35cm @C=2.5cm{
&\bullet\ar@{-}[rddd]&\\
&&\\
&\bullet\ar@{-}[rd]&\\
\bullet\ar@{-}[ruuu]\ar@{-}[ru]\ar@{-}[rd]\ar@{-}[rddd]&&\bullet\\
&\bullet\ar@{-}[ru]&\\
&&\\
&\bullet\ar@{-}[ruuu]&
}$}}.
\]

We denote the generators in the middle degree, from top to bottom, by
positive integers from 1 to 4.
Using the subset
notation, 
the homology is generated by
$12$ and $13$, and using Corollary \ref{cor:ParamTensor} with
$\Omega_{12}:=\big\{24,13\big\}$ and $\Omega_{13}:=\big\{34,12\big\}$, we obtain
back the parameters
$\param{\sqrt{\frac{3}{2\pi\ell}}6^\ell}{2^\ell}{2^\ell}{4\ell}$.
Using Corollary \ref{cor:ParamTensorReduc}, it can be improved into a
family with asymptotical parameters $\param{\frac{2}{3}5^\ell}{2^\ell}{2^\ell}{4\ell}$.

\begin{remarque}
  Forgetting its Khovanov origin, the above family can
  be extended to a two-parameters family defined as the $\ell$-th tensor power
  of
  \[
  \vcenter{\hbox{$\xymatrix@!0 @R=.35cm @C=2.5cm{
        &\bullet\ar@{-}[rddddd]&\\
        &&\\
        &\bullet\ar@{-}[rddd]&\\
        &&\\
        &\raisebox{.1cm}{$\vdots$}&\\
        \bullet\ar@{-}[ruuuuu]\ar@{-}[ruuu]\ar@{-}[rddd]\ar@{-}[rddddd]&2r\textrm{
          generators}&\bullet\\
        &\raisebox{-.2cm}{$\vdots$}&\\
        &&\\
        &\bullet\ar@{-}[ruuu]&\\
        &&\\
        &\bullet\ar@{-}[ruuuuu]& }$}}.
  \]
  From a coding theoretic point of view, the corresponding
CSS code is symmetric and associated to the
code $\code{C} \subseteq \code{C}^{\perp}$ where $\code{C}$
is the repetition code and $\code{C}^{\perp}$ is the parity
code. 
  The homology is generated by $\big\{1i\ |\ i\in\Inter{2}{2r-1}\big\}$ and, using
  Corollary \ref{cor:ParamTensorReduc} with
  $\Omega_{1i}:=\bigg\{i(2r),\pprod_{\substack{j=1\\j\neq i}}^{2r-1}j\bigg\}$,
  we obtain codes with asymptotical parameters
  $\param{\frac{2}{3}(2r+1)^\ell}{(2r-2)^\ell}{2^\ell}{2r\ell}$ when
  $r$ is fixed and $\ell$ tends to infinity. 
\end{remarque}


\subsection{Product of Steane $\llbracket7;1;3\rrbracket$ codes}
\label{sec:Steane}
In \cite[Section V.A]{Bravyi}, which is the extended version of
\cite{Hastings}, Bravyi and Hastings study in details the Steane
code with parameter $\llbracket7;1;3\rrbracket$. In its principal symmetric form, it
can be described as the CSS code $\qcode{S}_{7;1;3}$ associated to
\[
\vcenter{\hbox{$\xymatrix@!0 @R=.7cm @C=3cm{
&\bullet \ar@{-}[rd]&\\
\bullet \ar@{-}[ru]\ar@{-}[r]\ar@{-}[rd]\ar@{-}[rddd]& \bullet \ar@{-}[r]\ar@{-}[rdd]& \bullet \\
&\bullet \ar@{-}[ru]\ar@{-}[rd]\ar@{-}[rddd]&\\
\bullet \ar@{-}[ruu]\ar@{-}[ru]\ar@{-}[r]\ar@{-}[rdd]& \bullet \ar@{-}[r]& \bullet\\
&\bullet \ar@{-}[ruuu]\ar@{-}[rd]&\\
\bullet \ar@{-}[ruuu]\ar@{-}[ru]\ar@{-}[r]\ar@{-}[rd]& \bullet \ar@{-}[ruu]\ar@{-}[r]& \bullet\\
&\bullet \ar@{-}[ru]&
}$}}.
\]
We denote the generators in the middle degree, from top to bottom, by
positive integers from 1 to 7.
Using the subset notation, it is easily computed that
$\Ker(\p_0)=\F\langle1235,2346,3567,124\rangle$ and $\Im(\p_{-1})=\F\langle1235,2346,3567\rangle$. The homology
is hence generated by $124$.

Bravyi and Hastings computed that
$d_{\qcode{S}_{7;1;3}^{\otimes2}}=7$, and indeed, using Theorem
\ref{lem:Th} with
\[
\Omega_{124}:=\big\{124,136,157,237,256,345,467\big\},
\]
we obtain that $d_{\qcode{S}_{7;1;3}^{\otimes2}}\geq \frac{7}{3}d_{\qcode{S}_{7;1;3}}=7$. This an example where Theorem
\ref{lem:Th} gives a sharp lower bound whereas $K\neq1$ and
$\frac{N}{K}\notin\N$.

We shall see in Section \ref{sec:QFG} a generalization of
$\qcode{S}_{7;1;3}$.


\subsection{Bravyi--Hastings homological product}
\label{sec:BH}
In \cite{Hastings}, Bravyi and Hastings present a notion of
homological product for CSS codes which are described by 2--nilpotent
maps. This product is closely related to the tensor product of codes.
\begin{defi}
Let $C, D$ be two $\F$--spaces, and $\p_C\in\End(C),\p_D\in\End(D)$
be two $2$--nilpotent maps\footnote{Recall from Definition
  \ref{def:ChainComplex} that a 2--nilpotent map is an endomorphism
  $\p$ satisfying $\p^2=0$.}.
As recalled in Remark \ref{rem:BH}, these data provide two 
CSS codes $\qcode{C}, \qcode{D}$
and the {\em homological product} $\qcode C \boxtimes \qcode D$
is defined as the 
CSS code associated to the $2$--nilpotent
map
\[
\p_C\boxtimes\p_D:=\p_C\otimes\Id_D+\Id_C\otimes\p_D\in\End(C\otimes
D).
\]
\end{defi}
\begin{prop}[\cite{Hastings}]\label{prop:BH}
  If ${\qcode{C}}$ and ${\qcode{D}}$ are two CSS codes described by
  2--nilpotent maps, then $k_{{\qcode{C}} \boxtimes{\qcode{D}}}=k_{{\qcode{C}}}k_{{\qcode{D}}}$ and $\max(d_{{\qcode{C}}},d_{{\qcode{D}}})\leq d_{{\qcode{C}} \boxtimes{\qcode{D}}}\leq d_{{\qcode{C}}}d_{{\qcode{D}}}$.
\end{prop}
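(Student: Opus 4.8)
The plan is to dispatch the two assertions — multiplicativity of the dimension, and the double inequality for the minimum distance — one after the other, each time using that $\p_C\boxtimes\p_D$ is again $2$--nilpotent and hence, by Remark~\ref{rem:BH}, describes the CSS code attached to the $2$--periodic complex $C\otimes D\xrightarrow{\p_C\boxtimes\p_D}C\otimes D\xrightarrow{\p_C\boxtimes\p_D}C\otimes D$. The key technical ingredient I would first isolate is a Künneth isomorphism for such $2$--periodic complexes (equivalently, for $\F$--spaces equipped with a square-zero endomorphism), namely $\Ker(\p_C\boxtimes\p_D)/\Im(\p_C\boxtimes\p_D)\cong\bigl(\Ker\p_C/\Im\p_C\bigr)\otimes\bigl(\Ker\p_D/\Im\p_D\bigr)$, with the isomorphism induced on cycles by $x\otimes y\mapsto x\otimes y$. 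Over $\F$ this is a consequence of the structure theorem for nilpotent endomorphisms: $(C,\p_C)$ splits as a direct sum of $k_{\qcode C}=\dim(\Ker\p_C/\Im\p_C)$ trivial blocks $(\F,0)$ and $\rk(\p_C)$ acyclic blocks $\bigl(\F^2,\bigl(\begin{smallmatrix}0&1\\0&0\end{smallmatrix}\bigr)\bigr)$, and similarly for $(D,\p_D)$; since $\boxtimes$ is additive in each argument, it suffices to compute it on the four pairings of blocks and to note that the product has nonzero homology precisely when both factors are trivial blocks (trivial $\boxtimes$ trivial $=$ trivial, and every other pairing is acyclic). Counting blocks yields both the displayed isomorphism and $k_{\qcode C\boxtimes\qcode D}=k_{\qcode C}k_{\qcode D}$.

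The lower bound $\max(d_{\qcode C},d_{\qcode D})\le d_{\qcode C\boxtimes\qcode D}$ I would obtain from the $\boxtimes$--version of the main criterion, Theorem~\ref{theo:mainBH}, in the trivial case $N=K=1$: for every index one may take the sets $\Omega$ and $\Omega'$ to be singletons, which is always admissible, and this already gives $d_{\qcode C\boxtimes\qcode D}\ge d_{\qcode D}$; interchanging the two factors gives $\ge d_{\qcode C}$. This is the verbatim transcription of the proof of the left inequality in Corollary~\ref{cor:dProd}.

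For the upper bound I would write down explicit low-weight logical operators via the Künneth map. Pick $x\in\Ker\p_C$ representing a nonzero homology class of minimal weight $d_{\chain C}$ and $y\in\Ker\p_D$ one of minimal weight $d_{\chain D}$. Then $(\p_C\otimes\Id+\Id\otimes\p_D)(x\otimes y)=\p_Cx\otimes y+x\otimes\p_Dy=0$, so $x\otimes y$ is a cycle of $\p_C\boxtimes\p_D$, and its class is $[x]\otimes[y]\neq0$ by the Künneth isomorphism; hence $x\otimes y$ is a nontrivial $X$--type logical operator of $\qcode C\boxtimes\qcode D$ of weight $|x|\,|y|=d_{\chain C}d_{\chain D}$. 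Applying the same argument to the dual map $(\p_C\boxtimes\p_D)^*=\p_C^*\boxtimes\p_D^*$ produces a nontrivial $Z$--type logical of weight $d_{\chain C^*}d_{\chain D^*}$. Thus $d_{\qcode C\boxtimes\qcode D}\le\min\bigl(d_{\chain C}d_{\chain D},\,d_{\chain C^*}d_{\chain D^*}\bigr)$.

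The last step — deducing the stated inequality $d_{\qcode C\boxtimes\qcode D}\le d_{\qcode C}d_{\qcode D}$, with $d_{\qcode C}=\min(d_{\chain C},d_{\chain C^*})$ and $d_{\qcode D}=\min(d_{\chain D},d_{\chain D^*})$, from the bound just obtained — is where I expect the real difficulty to lie, since it is not a formal consequence of that bound ($\min(ab,a'b')$ can strictly exceed $\min(a,a')\min(b,b')$). Resolving it seems to require either exploiting additional structure of the chosen $2$--nilpotent presentations (for a symmetric presentation $d_{\chain C}=d_{\chain C^*}$ and the issue evaporates) or exhibiting a more economical representative of some nontrivial class of weight at most $d_{\qcode C}d_{\qcode D}$. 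The dimension count and the lower bound, by contrast, are routine once the $2$--periodic Künneth formula and Theorem~\ref{theo:mainBH} are available.
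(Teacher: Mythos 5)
Your dimension count and your lower bound are correct, and there is nothing in the paper to compare them against: Proposition~\ref{prop:BH} is quoted from \cite{Hastings} without an internal proof. Your block-decomposition argument (splitting a square-zero endomorphism over $\F$ into trivial and acyclic blocks) does yield the Künneth isomorphism $H(\p_C\boxtimes\p_D)\cong H(\p_C)\otimes H(\p_D)$, which is the same isomorphism the paper obtains by a different route in Section~\ref{sec:2nilpo->chain} (discarding the acyclic summands $\chain{P}_1,\chain{P}_3$), so $k_{\qcode{C}\boxtimes\qcode{D}}=k_{\qcode{C}}k_{\qcode{D}}$ is fine. Invoking Theorem~\ref{theo:mainBH} with $N=K=1$ is also legitimate and not circular (that theorem is proved from Lemma~\ref{lem:Th} and Section~\ref{sec:2nilpo->chain}, independently of Proposition~\ref{prop:BH}), and it is exactly the singleton argument used for Corollary~\ref{cor:dProd}.

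The difficulty you flag in the last step is, however, not just a hard step you failed to see: it cannot be closed, because the inequality $d_{\qcode{C}\boxtimes\qcode{D}}\leq d_{\qcode{C}}d_{\qcode{D}}$ is false in general when the minima defining $d_{\qcode{C}}$ and $d_{\qcode{D}}$ are attained on opposite sides. What you actually proved, $d_{\qcode{C}\boxtimes\qcode{D}}\leq\min\big(d_{\chain{C}}d_{\chain{D}},\,d_{\chain{C}^*}d_{\chain{D}^*}\big)$, is the correct general bound. For a counterexample to the stronger claim, fold the reduced complex $\chain{C}:\ 0\to\F^3\to\F^2$ (parity-check map of the $[3,1,3]$ repetition code) and its dual $\chain{D}:=\chain{C}^*$ into $2$--nilpotent maps on $\F^5$: each resulting CSS code has distance $1$ (a weight-one cocycle for $\qcode{C}$, a weight-one cycle for $\qcode{D}$), yet $\qcode{C}\boxtimes\qcode{D}$ has no weight-one logical operator (every weight-one cycle is $u\otimes v$ with $u\in\Ker\p_C$, $v\in\Ker\p_D$ basis vectors, and all such are boundaries; dually for cocycles). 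Indeed, by Section~\ref{sec:chain->nilpo} its distance equals that of $\qcode{C}\otimes\qcode{D}$, which by the Tillich--Z\'emor computation of Section~\ref{sec:TZ} is $\min(d_{\code{C}},d_{\code{D}^\perp})=3>1=d_{\qcode{C}}d_{\qcode{D}}$; this is the same phenomenon the paper records, for $\otimes$, in the remark following Corollary~\ref{cor:minor_times_2}. So the upper bound of Proposition~\ref{prop:BH} should be read either in the form you proved, or under an alignment hypothesis --- e.g.\ symmetric presentations, where $d_{\chain{C}}=d_{\chain{C}^*}$ and $d_{\chain{D}}=d_{\chain{D}^*}$, which is the situation relevant in \cite{Hastings} --- in which case your simple tensors $x\otimes y$ and their dual analogues finish the proof immediately.
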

Bravyi and Hastings show moreover that for a random CSS code $\qcode{C}$
of length $n$, the minimum distance of $\qcode{C}^{\boxtimes 2}$ is larger than
$cn^2$ for some positive constant $c$ with a probability tending to $1$
when $n$ tends to infinity.

In the coming section, we explain how Bravyi and Hastings' homological
product can be understood as extracted from the tensor product. It
shall follow that our criterion for a lower bound on the minimal
distance, as well as all its corollaries, apply in the same way to
homological products. From this perspective, homological products
appear as an improvement of tensor products since they reduce the
length of the outputs while preserving the dimension. Minimum
distances are however more difficult to compare, even if they
share a same lower bound.

Conversely, we then show that the situation is
inverted when starting from chain complexes: 
tensor products can be understood as extracted from the homological
products of the associated ungraded 2--nilpotent maps. In this
situation, minimum distances for tensor and homological products are
equal, so the tensor product has globally better relative parameters.

\subsubsection{From 2--nilpotent maps to chain complexes}\label{sec:2nilpo->chain}
Let $\qcode{C}$ be a CSS code described by a 2--nilpotent map
$\p_C\in\End(C)$. From the chain complex point of view, $\qcode{C}$ is
also the CSS code associated to
\[
\chain{C}:=\xymatrix{C\ar[r]^-{\p_C}&C\ar[r]^-{\p_C}&C},
\]
which can be reduced into
\[
\xymatrix{C_-\ar@{^(->}[r]^-{\p_C}&C\ar@{->>}[r]^-{\p_C}&C_+},
\]
where $C=:C_-\oplus \Ker(\p_C)$ and $C_+:=\fract{C}/{C'_+}$ with
$C:=C'_+\oplus \Im(\p_C)$. 
We set $\pi_C:C\to C_+$ the
canonical projection and $\p_C^{-1}$ the inverse map of
$\p_C:C_-\to C_+$.  We set similar notation for $\qcode{D}$, another
CSS code described by a 2--nilpotent map.

The length 3 middle part of $\chain{C}\rotimes\chain{D}$ is equal to
\[
\vcenter{\hbox{$\xymatrix@!0 @R=.7cm @C=3cm{
&C_-\otimes D_+\ar[rd]^-{\p_C\otimes\Id_D}\ar@{}[dd]|\bigoplus&\\
C_-\otimes D\ar[ru]^-{\Id_C\otimes\p_D}\ar[rd]^-{\p_C\otimes\Id_D}\ar@{}[dd]|\bigoplus&&C\otimes D_+\ar@{}[dd]|\bigoplus\\
&C\otimes D\ar[ru]^-{\Id_C\otimes\p_D}\ar[rd]^-{\p_C\otimes\Id_D}\ar@{}[dd]|\bigoplus&\\
C\otimes D_-\ar[ru]^-{\Id_C\otimes\p_D}\ar[rd]^-{\p_C\otimes\Id_D}&&C_+\otimes D\\
&C_+\otimes D_-\ar[ru]^-{\Id_C\otimes\p_D}&
}$}}.
\]
It can be decomposed as the direct sum
$\chain{P}_1\oplus\chain{P}_2\oplus\chain{P}_3$, where 
\begin{itemize}
\item $\chain{P}_1$ is the chain subcomplex defined as $\Span\left(
\vcenter{\hbox{$\xymatrix@!0 @R=.7cm @C=3cm{
&w\otimes x\ar[rd]\ar@{}[dd]|\oplus&\\
u\otimes v\ar[ru]\ar[rd]\ar@{}[dd]|\oplus&&0\ar@{}[dd]|\oplus\\
&\p_C(w)\otimes{\p_D^{-1}}(x)\ar[ru]\ar[rd]\ar@{}[dd]|\oplus&\\
0\ar[ru]\ar[rd]&&0\\
&0\ar[ru]&
}$}}
\right)$,
with $u\otimes v\in C_-\otimes D_-$ and $w\otimes x\in C_-\otimes
D_+$. In other words, $\chain P_1$ is defined as
  $C_-\otimes D_-$ in degree $-1$; as the space spanned by elements of the
  form $\big(w\otimes
  x \big)\oplus \big (\p_C(w)\otimes\p_D^{-1}(x) \big)\oplus0\in \big (C_-\otimes
  D_+\big)\oplus \big (C\otimes D \big)\oplus \big (C_+\otimes
  D_-\big)$ for some $w\in C_-$ and $x\in D_+$, in degree 0; and as zero
  in degree 1;
\vspace{.5cm}
\item $\chain{P}_2$ is the chain subcomplex defined as $\Span\left(\vcenter{\hbox{$\xymatrix@!0 @R=.7cm @C=3cm{
&0\ar[rd]\ar@{}[dd]|\oplus&\\
u_1\otimes v_1\ar[ru]\ar[rd]\ar@{}[dd]|\oplus&&y_1\otimes
z_1\ar@{}[dd]|\oplus\\
&w\otimes x\ar[ru]\ar[rd]\ar@{}[dd]|\oplus&\\
u_2\otimes v_2\ar[ru]\ar[rd]&&y_2\otimes z_2\\
&\p_C(w)\otimes \p_D^{-1}\big(\pi_D(x)\big)\ar[ru]&
}$}}\right)$
with $u_1\otimes v_1\in C_-\otimes\Ker(\p_D)$,
$u_2\otimes v_2\in C\otimes D_-$, $w\otimes x\in C\otimes D$,
$y_1\otimes z_1\in C\otimes D_+$ and $y_2\otimes z_2\in C_+\otimes D'_+$;
\vspace{.5cm}
\item $\chain{P}_3$ is the chain subcomplex defined as $\Span\left(\vcenter{\hbox{$\xymatrix@!0 @R=.7cm @C=3cm{
&0\ar[rd]\ar@{}[dd]|\oplus&\\
0\ar[ru]\ar[rd]\ar@{}[dd]|\oplus&&0\ar@{}[dd]|\oplus\\
&0\ar[ru]\ar[rd]\ar@{}[dd]|\oplus&\\
0\ar[ru]\ar[rd]&&y\otimes z\\
&w\otimes x\ar[ru]&
}$}}\right)$ with $w\otimes x\in C_+\otimes D_-$ and $y\otimes z\in C_+\otimes\Im(\p_D)$.
\end{itemize}

It is easily checked that $H_0(\chain{P}_1)\cong
H_0(\chain{P}_3)\cong\{0\}$ and that $\chain{P}_2$ is isomorphic, as a
chain complex, to 
\[
\xymatrix@!0@C=5cm{
\big (C\otimes D_-\big)\oplus \big (C_-\otimes \Ker(\p_D)\big)\ar[r]^-{\Id_C\otimes\p_D+\p_C\otimes\Id_D}&C\otimes D
\ar[r]^-{\Id_C\otimes\p_D+\p_C\otimes\Id_D}&\fract{C\otimes
  D}/{C'_+\otimes D'_+}
},
\]
which is a partially reduced
form of the chain complex associated to $\p_C\boxtimes\p_D$. 
In degree
0, the isomorphism is nothing but the projection $\psi$ onto the
central summand $C\otimes D$.  As a consequence, we obtain that
$H(\p_C\boxtimes\p_D)\cong H_0(\chain{C}\rotimes\chain{D}) \cong
H_0(\chain{C}\otimes\chain{D}) \cong H_0(\chain{C})\otimes
H_0(\chain{D})\cong H(\p_C)\otimes H(\p_D)$. The homological product
can hence be seen as a subcomplex of the tensor product that contains
all the homology. This provides a substantial reduction of the length,
but the variation of the minimum distance is, again, more difficult to
estimate. However, the criterion for a lower bound given in Theorem
\ref{thm:main} still holds.

\begin{theo}\label{theo:mainBH}
  Let $\qcode{C}$ be a CSS code defined by a $2$--nilpotent map
  $\p_{\qcode{C}}$, and let $g_1, \ldots, g_k\in \Ker(\p)$ and $g_1^*, \ldots, g_k^*\in\Im(\p)^\perp$
  be such that
  \[
\Ker(\p) = \Im(\p) \oplus
  \Span(g_1, \ldots, g_k)
  {\rm ,} \quad
  \Im(\p)^\perp=\Ker(\p)^\perp \oplus
  \Span(g_1^*, \ldots, g_k^*)\quad {\rm and}\quad
  \forall i,j, \langle 
  g_i^*, g_j \rangle = \delta_{ij}.
\]
  If, for any $j_0 \in \{1, \ldots , k\}$, there exist $\Omega_{j_0} \subseteq
  g_{j_0}^* + \Ker(\p)^\perp$ and $\Omega'_{j_0} \subseteq g_{j_0} +
  \Im(\p)$, with $|\Omega_{j_0}|,\ |\Omega'_{j_0}| \geq N$
  and $\overlap (\Omega_{j_0}),\ \overlap (\Omega'_{j_0}) \leq K$. Then,
  for any CSS code $\qcode{D}$ defined by a $2$--nilpotent map, we have 
  $$
  d_{\qcode{C} \boxtimes \qcode{D}} \geq \left\lceil \frac N K d_{\qcode{D}}
  \right\rceil \cdot
  $$
\end{theo}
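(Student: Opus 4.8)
The plan is to transcribe the proof of Lemma~\ref{lem:Th} to the ungraded $2$--nilpotent setting, working directly with the short complex $\chain{E}:=\xymatrix{C\otimes D\ar[r]^-{\delta}&C\otimes D\ar[r]^-{\delta}&C\otimes D}$, where $\delta:=\p_C\boxtimes\p_D=\p_C\otimes\Id_D+\Id_C\otimes\p_D$ and $\p_C,\p_D$ are the $2$--nilpotent maps defining $\qcode{C},\qcode{D}$. By Proposition~\ref{prop:ChainToCSS} and Remark~\ref{rem:BH}, the CSS code associated to $\chain{E}$ is exactly $\qcode{C}\boxtimes\qcode{D}$, so $d_{\qcode{C}\boxtimes\qcode{D}}=\min\big(d_{\chain{E}},d_{\chain{E}^*}\big)$, where $\chain{E}^*$ is the analogous complex built from $\delta^*=\p_C^*\boxtimes\p_D^*$. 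Exactly as in Theorem~\ref{thm:main} the sets $\Omega$ control the tensor product while the sets $\Omega'$ control its dual, here the $\Omega_{j_0}$ will bound $d_{\chain{E}}$ and the $\Omega'_{j_0}$ will bound $d_{\chain{E}^*}$; since $(\p_C\boxtimes\p_D)^*=\p_C^*\boxtimes\p_D^*$, $d_{\qcode{D}^*}=d_{\qcode{D}}$, and the whole set of hypotheses is invariant under $\p\leftrightarrow\p^*$ (which swaps $\Ker(\p)\leftrightarrow\Im(\p)^\perp$, $\Im(\p)\leftrightarrow\Ker(\p)^\perp$, $g_j\leftrightarrow g_j^*$ and $\Omega_{j_0}\leftrightarrow\Omega'_{j_0}$), it suffices to prove $d_{\chain{E}}\geq\left\lceil\frac{N}{K}d_{\qcode{D}}\right\rceil$ using the $\Omega_{j_0}$'s.

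First I would fix a minimally weighted representative $x_0=\sum_i b_i^C\otimes\mathfrak{b}_i\in C\otimes D$ of a non trivial class of $H_0(\chain{E})=\Ker(\delta)/\Im(\delta)$, where $(b_i^C)$ is the given basis of $C$ and $\mathfrak{b}_i\in D$, so that $|x_0|=\sum_i|\mathfrak{b}_i|$. The chain of isomorphisms recorded in Section~\ref{sec:2nilpo->chain} identifies $H_0(\chain{E})=H(\p_C\boxtimes\p_D)$ with $H(\p_C)\otimes H(\p_D)$; moreover each of these isomorphisms is induced at the chain level, and at the Künneth step Proposition~\ref{prop:TechKunneth} guarantees that products of representatives form a basis. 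Noting that every $g_j\otimes\mathfrak{g}_j$ with $g_j\in\Ker(\p_C)$ and $\mathfrak{g}_j\in\Ker(\p_D)$ is a $\delta$--cycle, and expanding the class of $x_0$ in the basis $([g_1],\dots,[g_k])$ of $H(\p_C)$, one obtains $\mathfrak{g}_1,\dots,\mathfrak{g}_k\in\Ker(\p_D)$ and $w\in C\otimes D$ with $x_0=\sum_{j=1}^{k}g_j\otimes\mathfrak{g}_j+\delta(w)$ and $[\mathfrak{g}_{j_0}]\neq 0$ in $H(\p_D)$ for at least one index $j_0$.

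Next I would write $g_j=\sum_m\gamma_m^j b_m^C$ and $w=\sum_l b_l^C\otimes\mathfrak{w}_l$, let $M=(M_{lm})$ be the matrix of $\p_C$ in the basis $(b_l^C)$, and project $x_0=\sum_j g_j\otimes\mathfrak{g}_j+\delta(w)$ onto the $b_m^C\otimes(\cdot)$ component, getting $\mathfrak{b}_m=\sum_j\gamma_m^j\mathfrak{g}_j+\sum_l M_{lm}\mathfrak{w}_l+\p_D(\mathfrak{w}_m)$. Contracting against an arbitrary $p=\sum_m p_m b_m^C\in C$ and using $\sum_m p_m\gamma_m^j=\langle p,g_j\rangle$ together with $\sum_m p_m M_{lm}=\langle p,\p_C(b_l^C)\rangle=\langle\p_C^*(p),b_l^C\rangle$, this gives
\[
\sum_{m\in p}\mathfrak{b}_m=\sum_{j=1}^{k}\langle p,g_j\rangle\,\mathfrak{g}_j+\sum_l\langle\p_C^*(p),b_l^C\rangle\,\mathfrak{w}_l+\p_D(\text{something}).
\]
The decisive point is that any $p\in g_{j_0}^*+\Ker(\p_C)^\perp$ automatically lies in $\Im(\p_C)^\perp=\Ker(\p_C^*)$, since $g_{j_0}^*\in\Im(\p_C)^\perp$ by hypothesis and $\Ker(\p_C)^\perp\subseteq\Im(\p_C)^\perp$ because $\p_C^2=0$ forces $\Im(\p_C)\subseteq\Ker(\p_C)$; hence the middle sum vanishes, and $\langle p,g_j\rangle=\delta_{jj_0}$ reduces the identity to $\sum_{m\in p}\mathfrak{b}_m=\mathfrak{g}_{j_0}+\p_D(\text{something})$, which is a non trivial class of $\p_D$ for the chosen $j_0$. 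Therefore $\sum_{m\in p}|\mathfrak{b}_m|\geq\big|\sum_{m\in p}\mathfrak{b}_m\big|$ is at least the minimal weight of a non trivial class of $\p_D$, hence $\geq d_{\qcode{D}}$, for every $p\in\Omega_{j_0}$.

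Finally, summing $\sum_{m\in p}|\mathfrak{b}_m|\geq d_{\qcode{D}}$ over $p\in\Omega_{j_0}$ and counting contributions column by column, $\overlap(\Omega_{j_0})\leq K$ yields $\sum_{p\in\Omega_{j_0}}\sum_{m\in p}|\mathfrak{b}_m|\leq K\sum_m|\mathfrak{b}_m|=K|x_0|$, while $|\Omega_{j_0}|\geq N$ yields $\sum_{p\in\Omega_{j_0}}\sum_{m\in p}|\mathfrak{b}_m|\geq N d_{\qcode{D}}$; hence $K|x_0|\geq N d_{\qcode{D}}$, and as $|x_0|=d_{\chain{E}}$ is an integer, $d_{\chain{E}}\geq\left\lceil\frac{N}{K}d_{\qcode{D}}\right\rceil$. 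The dual bound $d_{\chain{E}^*}\geq\left\lceil\frac{N}{K}d_{\qcode{D}}\right\rceil$ follows verbatim with $(g_j^*,\Omega'_{j_0})$ in place of $(g_j,\Omega_{j_0})$, using $\Ker(\p_C^*)^\perp=\Im(\p_C)$. I expect the only genuine subtlety to lie in the first step: one must check that, although $\delta$ is not literally a tensor product of length--$3$ complexes, the Künneth identification $H(\p_C\boxtimes\p_D)\cong H(\p_C)\otimes H(\p_D)$ is still realised at the chain level with product representatives forming a basis — but this is precisely what the composite of the chain--level isomorphisms of Section~\ref{sec:2nilpo->chain} provides, Proposition~\ref{prop:TechKunneth} being invoked at the step involving the honest tensor product $\chain{C}\otimes\chain{D}$. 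Everything else is a faithful copy of the proof of Lemma~\ref{lem:Th}.
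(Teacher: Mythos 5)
Your proof is correct, and it reaches the bound by a slightly different route than the paper. The paper does not redo the computation: it keeps the graded tensor product $\chain{C}\otimes\chain{D}$ of the \emph{reduced} complexes, invokes the decomposition $\chain{P}_1\oplus\chain{P}_2\oplus\chain{P}_3$ and the projection $\psi$ of Section~\ref{sec:BH}, and observes that for a homologically non trivial cycle $x_0\in\chain{P}_2$ the quantity $\sum_i|\mathfrak{b}_i|$ bounded inside the proof of Lemma~\ref{lem:Th} is exactly $|\psi(x_0)|$, i.e.\ the weight of the corresponding codeword of $\qcode{C}\boxtimes\qcode{D}$; the only point it has to address is that $\psi$ distorts weights. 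You instead rerun the Lemma~\ref{lem:Th} computation ab initio on the ungraded complex $C\otimes D\xrightarrow{\delta}C\otimes D\xrightarrow{\delta}C\otimes D$ with $\delta=\p_C\boxtimes\p_D$, where the argument is in fact cleaner: every basis vector of $C$ plays the role of a $b_i$, there are no ``non relevant terms'', $|x_0|=\sum_m|\mathfrak{b}_m|$ exactly, and the vanishing of the $\p_C$--coboundary contributions follows from $\Omega_{j_0}\subseteq g_{j_0}^*+\Ker(\p_C)^\perp\subseteq\Ker(\p_C^*)$; your symmetry reduction to one side and the identification $\min$-distance $=\min(d_{\chain{E}},d_{\chain{E}^*})$ are also fine. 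The price is the point you yourself flag: you need the ungraded K\"unneth statement that the classes $[g_j\otimes h_m]$ form a basis of $H(\delta)$, not merely an abstract isomorphism $H(\delta)\cong H(\p_C)\otimes H(\p_D)$. This is true and can be obtained either by tracing the chain-level identifications of Section~\ref{sec:BH} as you indicate (the product cycles sit in the central summand, lie in $\chain{P}_2$, and are fixed by $\psi$), or more directly: the map $[u]\otimes[v]\mapsto[u\otimes v]$ is well defined since $\p_C(a)\otimes v=\delta(a\otimes v)$ when $\p_D(v)=0$, and it is bijective because $(C,\p_C)$ splits as a space of homology representatives plus an acyclic summand whose tensor product with $(D,\p_D)$ is acyclic. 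With that point made explicit there is no gap; your argument trades the paper's structural transfer through $\psi$ for a self-contained repetition of the computation, which is a reasonable exchange.
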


\begin{proof}
  Let $\chain{C}$ and $\chain{D}$ denote the chain complexes
  underlying $\qcode{C}$ and $\qcode{D}$. 
  Lemma \ref{lem:Th} gives a lower bound for the weight of
  homologically non trivial elements in the kernel of
  $\p_{\chain{C}\otimes\chain{D}}$. In particular, it holds for 
  elements in $\chain{P}_2$, and $\psi$ provides a one-to-one
  correspondence between them and 
  homologically non trivial elements in the kernel of
  $\p_{\chain{C}\boxtimes\chain{D}}$. However, the map $\psi$ does not
  preserve the weight. Nonetheless, using notation from Section
  \ref{sec:Th}, $\psi(x_0)$ is actually equal to
  $\disp{\sum_{j=1}^{n_0}b_i\otimes\mathfrak{b}_i}$ so its weight is
  $\disp{\sum_{j=1}^{n_0}|\mathfrak{b}_i|}$, and this is precisely the
  part of $|x_0|$ which is
  bounded below in the proof of Lemma \ref{lem:Th}.
\end{proof}
\begin{cor}\label{cor:mainCorBH}
    If $\qcode{C}$ and $\qcode{D}$ are CSS codes described by $2$--nilpotent
  matrices which have no columns of zeros, then
\[
2\max(d_{\qcode{C}},d_{\qcode{D}})\leq d_{\qcode{C}\boxtimes\qcode{D}}.
\]
\end{cor}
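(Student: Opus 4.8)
The plan is to derive Corollary~\ref{cor:mainCorBH} from Theorem~\ref{theo:mainBH} in exactly the way Corollary~\ref{cor:minor_times_2} follows from Theorem~\ref{thm:main}, the role of Proposition~\ref{prop:NonSharp} being played by an explicit choice of \emph{full cosets} as the sets $\Omega_{j_0},\Omega'_{j_0}$, which realizes the ratio $N/K=2$. Write $\qcode C$ for the code attached to a $2$--nilpotent $\p=\p_{\qcode C}\in\End(C)$ with matrix $M:=\Mat_\BB(\p)$; by Remark~\ref{rem:BH} its two parity--check matrices are $M$ and ${}^tM$, so the hypothesis ``no column of zeros'' says precisely that $M$ has no zero column \emph{and} no zero row. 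First I would choose $g_1,\dots,g_k\in\Ker(\p)$ and $g_1^*,\dots,g_k^*\in\Im(\p)^\perp$ as in Theorem~\ref{theo:mainBH}, which is possible because the ambient scalar product induces a perfect pairing between $\Ker(\p)/\Im(\p)$ and $\Im(\p)^\perp/\Ker(\p)^\perp$. The arithmetic point is rank--nullity: $\dim\Im(\p)=\dim C-\dim\Ker(\p)=\dim\Ker(\p)^\perp$, so every coset $g_{j_0}^*+\Ker(\p)^\perp$ and every coset $g_{j_0}+\Im(\p)$ has the same cardinality $N:=2^{\dim C-\dim\Ker(\p)}$.

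Next I would take $\Omega_{j_0}:=g_{j_0}^*+\Ker(\p)^\perp$ and $\Omega'_{j_0}:=g_{j_0}+\Im(\p)$ (the full cosets) and estimate their overlaps by the translation involution of Proposition~\ref{prop:NonSharp}. Fix a basis vector $b_i$. As the $i$--th column of $M$ is nonzero, $\p(b_i)\neq0$, hence $b_i\notin\Ker(\p)=(\Ker(\p)^\perp)^\perp$ and some $f_i\in\Ker(\p)^\perp$ contains $b_i$; then $x\mapsto x+f_i$ matches the elements of $\Omega_{j_0}$ containing $b_i$ with those not containing it, so the $i$--th column weight of $\Omega_{j_0}$ is exactly $N/2$. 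Symmetrically, as the $i$--th row of $M$ is nonzero, $b_i$ lies in the support of some column of $M$, hence of some $h_i\in\Im(\p)$, and $x\mapsto x+h_i$ shows the $i$--th column weight of $\Omega'_{j_0}$ is $N/2$. Thus $\overlap(\Omega_{j_0})=\overlap(\Omega'_{j_0})=K$ with $K:=N/2$, and Theorem~\ref{theo:mainBH} gives $d_{\qcode C\boxtimes\qcode D}\geq\lceil\frac{N}{K} d_{\qcode D}\rceil=2\,d_{\qcode D}$ for every CSS code $\qcode D$ described by a $2$--nilpotent map. Finally, the swap $C\otimes D\to D\otimes C$ conjugates $\p_{\qcode C}\boxtimes\p_{\qcode D}$ to $\p_{\qcode D}\boxtimes\p_{\qcode C}$ and carries the tensor basis to the tensor basis, so $d_{\qcode C\boxtimes\qcode D}=d_{\qcode D\boxtimes\qcode C}$; applying the bound just obtained to $\qcode D$ (which also has no zero column) yields $d_{\qcode C\boxtimes\qcode D}\geq2\,d_{\qcode C}$ as well, and taking the maximum concludes.

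The delicate point is obtaining the ratio \emph{exactly} $2$ rather than merely $\leq2$: this is why one must use the full cosets, and it is rank--nullity that guarantees $|\Omega_{j_0}|=|\Omega'_{j_0}|=N$ so that $N/K=2$. The only ingredient beyond the $\otimes_r$ case treated in Corollary~\ref{cor:minor_times_2} is the observation that it is the ``no zero row'' half of the hypothesis --- equivalently, ${}^tM$ having no zero column --- that controls $\overlap(\Omega'_{j_0})$; everything else is a transcription of Proposition~\ref{prop:NonSharp} into the $2$--nilpotent setting via the reduction of Section~\ref{sec:2nilpo->chain}.
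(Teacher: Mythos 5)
Your proof is correct and is essentially the argument the paper intends: Corollary~\ref{cor:mainCorBH} is stated as following from Theorem~\ref{theo:mainBH} exactly as Corollary~\ref{cor:minor_times_2} follows from Proposition~\ref{prop:NonSharp}, namely by taking the full cosets $g_{j_0}^*+\Ker(\p)^\perp$ and $g_{j_0}+\Im(\p)$ (of common size $2^{\rk(\p)}$ by rank--nullity) and using the translation involutions to get the ratio $N/K=2$. Your reading of the hypothesis as ``no zero column of $M$ \emph{and} of ${}^tM$'' is indeed the one required (the zero-row half controls $\overlap(\Omega'_{j_0})$, and without it the bound can fail), and the final factor-swap giving $2d_{\qcode{C}}$ is the same implicit symmetry the paper relies on.
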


\subsubsection{From chain complexes to 2--nilpotent maps}\label{sec:chain->nilpo}

Forgetting the grading provides a canonical way to produce a
2--nilpotent map from any chain complex. We explain now how the tensor
product of two chain complexes can be seen as extracted from the
homological product of the associated 2--nilpotent maps.  This
actually corresponds to the case of 2--nilpotent maps given with a
basis such that their matrices are block-subdiagonal.

Given a CSS code $\qcode C$ associated to a chain complex
\[
{\chain{C}} =\xymatrix{\cdots \ar[r] & C_i\ar[r]^-{\p_i}
  &C_{i+1}\ar[r] ^-{\p_{i+1}}& \cdots}
\]
where the $C_i$'s are all $\{0\}$ but finitely many of them,
we can define $C := \poplus_{i \in \Z} C_i$ and $\p_C := \poplus_{i \in \Z} 
\p_i$.
The map $\p_C$ is 2--nilpotent and it is easily checked that
\[
\fract{\Ker (\p_C)}/{\Im (\p_C)} = H_{\bullet}(\chain C) = \poplus_{i \in \Z} H_i (\chain C).
\]
In particular, $\fract{\ker (\p_C)}/{\Im (\p_C)}\cong H_0 (\chain C)$
whenever $\chain{C}$ is balanced.

If $\chain{C}$ and $\chain{D}$ are two reduced complexes,
then $\xymatrix{C\otimes D\ar[r]^-{\p_C\boxtimes\p_D}&C\otimes
D\ar[r]^-{\p_C\boxtimes\p_D}&C\otimes D}$ decomposes into the direct sum
$\poplus_{i\in\Z}\big\{\chain{C}\otimes\chain{D}\big\}_i$, where
$\big\{\chain{C}\otimes\chain{D}\big\}_i$ is the length three
truncature of $\chain{C}\otimes\chain{D}$ centered in degree $i$.
They
all have null homology except for the summand $i=0$ which actually corresponds to the central
part of $\chain{C}\otimes\chain{D}$. Moreover, any basis induced from
bases of $\chain{C}$ and $\chain{D}$ respects this direct sum decomposition. It follows
that $k_{\chain{C}\boxtimes\chain{D}}=k_{\chain{C}\otimes\chain{D}}$
and $d_{\chain{C}\boxtimes\chain{D}}=d_{\chain{C}\otimes\chain{D}}$.
Besides, it is easily checked that
$n_{\chain{C}\boxtimes\chain{D}}=n_{\chain{C}}n_{\chain{D}}$. Consequently,
for
$\chain{C}:=\xymatrix{\F^{b_1}\ar@{^(->}[r]&\F^a\ar@{->>}[r]&\F^{b_2}}$
a reduced complex defining a CSS code $\qcode{C}$,
the iterated powers $\qcode{C}^{\boxtimes\ell}$, $\qcode{C}^{\otimes\ell}$ and
$\qcode{C}^{\rotimes\ell}$ have same dimensions and minimum distances
but differents lengths, which are respectively $(a+b_1+b_2)^\ell$, $O {\left(
    \frac{{(a+2\sqrt{b_1b_2})}^\ell}{\sqrt{\ell}} \right)}$ and $O {\left({(a+\sqrt{b_1b_2})}^\ell\right)}$.

\subsubsection{Comparison between tensor and homological powers}
There are two natural notions of product for CSS codes, namely tensor and homological ones, and we have observed how
to switch from one to the other.
They both generate LDPC families when used iteratively.
It is natural to question whether a construction is better than the
other. The answer is actually negative, and the qualities of the family of
codes obtained by iterated tensor or homological powers depend on the
initial descriptive type of the input codes:
\begin{itemize}
\item {\it if the input code is described by a $2$--nilpotent map},
  then one can see it as coming from a chain complex with repeated
  space and map. In this situation,
the homological powers of the original $2$--nilpotent map provide
shorter codes with same dimensions than the tensor
powers. Moreover, the control of the minimum distances provided by the
present paper is equal for both.
\item {\it if the input code is described by a general complex}, then
one can consider the underlying $2$--nilpotent map by
forgetting the grading. In this situation, the tensor powers of the original chain complex provide
shorter codes with same dimensions and minimum distances, hence better
relative parameters, than the homological powers. 
\end{itemize}
A good philosophy should hence be to stick to the original nature of
the 
inputs and use homological products when dealing with 2--nilpotent maps
and tensor products when dealing with chain complexes.



\section{New families of codes}
\label{sec:NewFamilies}
In this section we present new families of CSS codes defined 
as iterated tensor powers of some given CSS code.
They all share
a logarithmic LDPC structure and, for a length $N_\ell$ which
tends to infinity, their minimum distance can be ``as
close as possible to $\sqrt{N_\ell}$'' in the sense that, for all
$\alpha < \frac 1 2$, there is such a family whose minimum distance is larger than 
$N_\ell^{\alpha}$.

To control minimum distances, we use Theorem~\ref{thm:main}
which requires the construction of large sets of cohomologically
non trivial 
vectors $\Omega$ with small overlap. For this sake, 
it is natural to search among codes with many automorphisms. This feature
is indeed shared by our three examples, namely:
\begin{itemize}
  \item codes from finite geometry, endowed with a natural action
   of $\mathbf{PGL}(3, \Fq)$;
  \item cyclic codes, i.e. codes of length $n$ with a natural action
    of the cyclic group of order $n$;
  \item Reed Muller codes, endowed with a natural action of the affine
    group.
\end{itemize}


\subsection{Quantum finite geometry codes}
\label{sec:QFG}
In this section, we set $q = 2^s$ for some positive integer $s$.
The idea relies on using points/lines incidence structures
of affine and projective spaces over finite fields to construct LDPC
CSS codes. It has already been used to
construct classical LDPC codes in \cite{KouLinFossorier} and moderate
density parity check quantum codes in \cite{Farinholt}.

Here, we shall consider two incidence
structures:
\begin{itemize}
\item the point/line incidence structure;
\item the point/affine charts incidence structure.
\end{itemize}

\subsubsection{The projective plane}
The projective plane $\P^2 (\Fq)$ is defined as the set of 
lines of $\Fq^3$ passing through the origin. Let us recall classical
facts of this finite geometry:

\begin{prop}\label{prop:basics_finite_geom}
\begin{enumerate}[(i)]
\item[]
\item The plane contains $q^2 + q + 1$ points and $q^2 + q + 1$ lines.
\item\label{it:basics_incidence}
Every line contains $q+1$ points and every point is contained in
$q+1$ lines.
\item Every two distinct points are contained in a unique line and every two
distinct lines meet at a unique point.
\end{enumerate}  
\end{prop}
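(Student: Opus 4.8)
The plan is to convert every assertion into elementary linear algebra over $\Fq$ and count. Recall that a point of $\P^2(\Fq)$ is a $1$--dimensional subspace of $\Fq^3$, and (this I would spell out, since it is used throughout the section) a \emph{line} of $\P^2(\Fq)$ is a $2$--dimensional subspace of $\Fq^3$; the point $\ell$ lies on the line $V$ precisely when $\ell\subseteq V$. The map $V\mapsto V^\perp$ is an incidence-reversing bijection between points and lines, which is what makes the two halves of each item mirror each other.

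For item (i): the set $\Fq^3\setminus\{0\}$ has $q^3-1$ elements, each $1$--dimensional subspace accounts for exactly $q-1$ of them, and distinct $1$--dimensional subspaces meet only in $0$; hence there are $(q^3-1)/(q-1)=q^2+q+1$ points. For the lines I would either run the same count in the dual space $(\Fq^3)^*$, or count $2$--dimensional subspaces directly, the upshot being the same number $q^2+q+1$. For item (ii): a line is a $2$--dimensional space $V$, so the same nonzero-vector count \emph{inside} $V$ gives $(q^2-1)/(q-1)=q+1$ points on it; dually, the lines through a fixed point $\ell$ are the $2$--dimensional subspaces containing $\ell$, which correspond bijectively to the $1$--dimensional subspaces of $\Fq^3/\ell\cong\Fq^2$, of which there are again $q+1$. (As a sanity check one may count incident point--line pairs two ways, obtaining $(q^2+q+1)(q+1)$ both times.)

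For item (iii): if $\ell_1\neq\ell_2$ are two points, then $\ell_1+\ell_2$ has dimension $2$, hence is a line, and it is the unique one through both points since any line through both is a $2$--dimensional space containing the $2$--dimensional space $\ell_1+\ell_2$. Dually, if $V_1\neq V_2$ are two lines, then $V_1+V_2$ strictly contains $V_1$, so $V_1+V_2=\Fq^3$ and therefore $\dim(V_1\cap V_2)=2+2-3=1$; thus $V_1\cap V_2$ is a single point, visibly the unique one lying on both lines.

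I do not expect any genuine obstacle here: everything reduces to dimension counts and the duality $V\leftrightarrow V^\perp$. The only point requiring a little care is fixing, once and for all, the definition of a line as a $2$--plane in $\Fq^3$ and the incidence relation $\ell\subseteq V$, so that all three items fall out uniformly.
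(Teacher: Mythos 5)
Your proof is correct. Note that the paper itself offers no proof of this proposition: it is stated as a reminder of ``classical facts of this finite geometry,'' so there is no argument in the text to compare against. Your dimension-count treatment (points as $1$--dimensional and lines as $2$--dimensional subspaces of $\Fq^3$, the counts $(q^3-1)/(q-1)=q^2+q+1$ and $(q^2-1)/(q-1)=q+1$, lines through a point via the quotient $\Fq^3/\ell$, and the intersection formula $\dim(V_1\cap V_2)=2+2-3=1$) is exactly the standard justification, and your use of $V\mapsto V^\perp$ makes the duality remark following the proposition explicit. The only cosmetic point is the phrase ``a $2$--dimensional space containing the $2$--dimensional space $\ell_1+\ell_2$'': what you mean, and what the dimension count gives, is that such a line must \emph{equal} $\ell_1+\ell_2$, which is indeed immediate.
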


Note that each of the above statements express the principle of
duality in projective planes, which swaps point and lines
and reverses inclusions.

\begin{exemple}
  For $q=2$, the projective plane is also called {\em Fano plane}.
  It contains 7 points and 7 lines and the point/line incidence structure
  is usually represented by the picture given in Figure~\ref{fig:fano}
  in which the 6 lines and the circle represent the 7 lines of $\P^2 (\F)$.
\end{exemple}

\begin{figure}[!h]
\[  
\dessin{2.5cm}{fano_plane}
\]
  \caption{The projective plane $\P^2 (\F)$.}
    \label{fig:fano}
\end{figure}

Additionally we consider the {\em affine charts} of the projective plane.

\begin{defi}
  An affine chart of $\P^2 (\Fq)$ is the complement of a line.
\end{defi}

Let us list some properties of affine charts.

\begin{prop}\label{prop:prop_of_charts}
  \begin{enumerate}[(i)]
  \item[]
  \item An affine chart is isomorphic to the affine plane over $\Fq$;
    in particular it contains $q^2$ elements.
  \item\label{it:charts_orth_lines}
    Let $L$ be a line in $\P^2 (\Fq)$ and $U$ an affine chart. Then,
    \begin{itemize}
    \item either $L$ is the complement of $U$ and hence $L \cap U = \emptyset$;
    \item or $L \cap U$ is an affine line and hence has $q$ elements.
    \end{itemize}
    In particular, since $q$ is even, the number of points of $L \cap U$
    is always even.
  \item The number of affine charts of $\P^2 (\Fq)$ equals the number of lines
    and hence equals $q^2 +q +1$. 
  \end{enumerate}
\end{prop}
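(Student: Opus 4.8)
The plan is to derive everything from the basic incidence facts recorded in Proposition~\ref{prop:basics_finite_geom}, supplemented by one explicit choice of affine coordinates for part (i).

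For (i), write the given affine chart as $U = \P^2(\Fq) \setminus L_\infty$ for a line $L_\infty$, and choose homogeneous coordinates $[x:y:z]$ on $\P^2(\Fq)$ so that $L_\infty = \{z = 0\}$. Every point of $U$ then has a unique representative $[x:y:1]$, so $[x:y:1] \mapsto (x,y)$ identifies the point set of $U$ with $\Fq^2$; in particular $|U| = q^2$, which also agrees with the count $q^2+q+1 - (q+1)$ obtained from Proposition~\ref{prop:basics_finite_geom}. To upgrade this to an isomorphism of incidence structures, I would check that a line $L = \{ax+by+cz = 0\}$ of $\P^2(\Fq)$ distinct from $L_\infty$ (equivalently, with $(a,b) \neq (0,0)$) meets $U$ in the affine line $\{(x,y) \in \Fq^2 : ax+by+c = 0\}$, and conversely that every affine line of $\Fq^2$ arises from a unique such $L$; this is the routine dictionary between homogeneous and affine linear equations.

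Part (ii) is then immediate. With $U$ and $L_\infty$ as above: if $L = L_\infty$ then $L \cap U = \emptyset$; otherwise $L$ and $L_\infty$ are two distinct lines, so by Proposition~\ref{prop:basics_finite_geom} they meet in exactly one point $P$, whence $L \cap U = L \setminus \{P\}$ has $(q+1)-1 = q$ points, and by the description in (i) this intersection is an affine line of $U$. Since $q = 2^s$, the integer $q$ is even, which gives the parenthetical remark. For (iii), the assignment $L \mapsto \P^2(\Fq) \setminus L$ is a bijection from the set of lines to the set of affine charts: it is surjective by the very definition of an affine chart, and injective because a subset of $\P^2(\Fq)$ determines its complement. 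Hence the number of affine charts equals the number of lines, which is $q^2+q+1$ by Proposition~\ref{prop:basics_finite_geom}.

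The argument is short and essentially bookkeeping; the only mildly delicate point is making the identification in (i) a genuine isomorphism of incidence structures rather than a mere bijection of point sets, and even that reduces to the elementary correspondence between a projective line $\{ax+by+cz=0\}$ and the affine line $\{ax+by+c=0\}$ it cuts out in a chart. I do not expect any real obstacle.
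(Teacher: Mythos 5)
Your argument is correct. Note that the paper itself states Proposition~\ref{prop:prop_of_charts} without proof, treating it (like Proposition~\ref{prop:basics_finite_geom}) as a classical fact of finite geometry; what you supply is the standard verification via homogeneous coordinates, identifying the chart $U = \P^2(\Fq)\setminus L_\infty$ with $\Fq^2$ through $[x:y:1]\mapsto(x,y)$, deducing $|L\cap U|\in\{0,q\}$ from the single intersection point of $L$ with $L_\infty$, and counting charts by the complement bijection with lines. Each step is sound (including the care you take to match projective lines $ax+by+cz=0$ with the affine lines $ax+by+c=0$ so that the identification respects incidence), so there is nothing to fill in.
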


\subsubsection{Classical codes associated to projective planes
in characteristic $2$}

We construct two \emph{binary} codes 
associated to the projective space $\P^2 (\Fq)$, with length $|\P^2 (\Fq)| = q^2+q+1$. 
Vectors of $\F^{q^2+q+1}$ can be regarded as subsets of 
$\P^2 (\Fq)$ and we shall freely speak of either vectors
or subsets of $\P^2 (\Fq)$.
From this point of view, the canonical inner product
on $\F^{q^2+q+1}$ can be given a geometric interpretation since, for $S,S' \subseteq \P^2(\Fq)$:
$$
\langle S, S'\rangle = |S \cap S'| \mod 2.
$$

We introduce the codes
\begin{itemize}
\item $\clines(s)$, spanned by lines of $\P^2 (\Fq)$;
\item $\cplanes(s)$, spanned by the affine charts of $\P^2 (\Fq)$.
\end{itemize}

\begin{warning}
  We want to stress the fact that, even though the projective spaces are
  defined over $\Fq$, the associated classical codes, and hence the
  quantum codes to follow, are defined over $\F$.
\end{warning}

The dimension of $\clines (s)$ is well--known.

\begin{prop}[{\cite{Smith}}]
  \label{prop:dim_clines}
  For all $s > 0$, we have $\dim_{\F} \big(\clines (s)\big) = 3^s+1$.
\end{prop}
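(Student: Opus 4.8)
The plan is to extend scalars from $\F$ to $\Fq$ with $q=2^s$. Since the $\Fq$--span of the incidence vectors of the lines of $\P^2(\Fq)$ is precisely $\clines(s)\otimes_{\F}\Fq$, we have $\dim_{\F}\clines(s)=\dim_{\Fq}\big(\Span_{\Fq}\{\text{line vectors}\}\big)$, and over $\Fq$ these vectors become polynomial functions on $\P^2(\Fq)$, which is what makes the count tractable.

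First I would realize each line vector as a polynomial. Writing a line as $L_{\mathbf a}=\{[\mathbf x]\in\P^2(\Fq)\mid a_1x_1+a_2x_2+a_3x_3=0\}$ and using $t^{q-1}=1$ for $t\in\Fq^\times$ and $0^{q-1}=0$, the $\{0,1\}$--valued indicator function of $L_{\mathbf a}$ on $\P^2(\Fq)$ equals $1-(a_1x_1+a_2x_2+a_3x_3)^{q-1}$, which is well defined on $\P^2(\Fq)$ since $(a_1x_1+a_2x_2+a_3x_3)^{q-1}$ is homogeneous of degree $q-1$. Expanding by the multinomial theorem in characteristic $2$ (the coefficient $\binom{q-1}{i_1,i_2,i_3}$ is odd exactly when the addition $i_1+i_2+i_3$ is carry--free in base $2$) gives
\[
(a_1x_1+a_2x_2+a_3x_3)^{q-1}=\sum_{M\in S}\mathbf{a}^{M}\,\mathbf{x}^{M},\qquad \mathbf{x}^{M}:=x_1^{i_1}x_2^{i_2}x_3^{i_3},\quad \mathbf{a}^{M}:=a_1^{i_1}a_2^{i_2}a_3^{i_3},
\]
where $S$ is the set of triples $M=(i_1,i_2,i_3)$ with $i_1+i_2+i_3=q-1=2^s-1$ and pairwise disjoint binary supports. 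Such a triple is the same datum as an assignment of each of the $s$ binary digits of $2^s-1$ to one of the three variables, so $|S|=3^s$; moreover each $i_k\leq q-1$, so every $\mathbf{x}^{M}$ is a \emph{reduced} monomial (each exponent $<q$).

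The upper bound is then immediate: every line vector lies in the $\Fq$--subspace spanned by the all--ones vector $\mathbf{1}$ and the $3^s$ monomials $\mathbf{x}^{M}$, $M\in S$, so $\dim_{\F}\clines(s)\leq 3^s+1$. For the reverse inequality I would first check that $\mathbf{1}$ together with the $\mathbf{x}^{M}$ ($M\in S$) are $\Fq$--linearly independent as functions on $\P^2(\Fq)$: a relation $c_0\mathbf{1}+\sum_{M}c_M\mathbf{x}^{M}=0$ lifts to the same relation on $\Fq^3\setminus\{0\}$, and the polynomial $c_0+\sum_{M}c_M\mathbf{x}^{M}$ is reduced of degree at most $q-1<3(q-1)$; but the only reduced polynomial of degree $<3(q-1)$ that vanishes on $\Fq^3\setminus\{0\}$ is $0$, since the unique reduced representative of the indicator of $\{\mathbf 0\}$ is $\prod_{j}(1-x_j^{q-1})$, of degree exactly $3(q-1)$. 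A similar but easier argument (the monomials being homogeneous of positive degree) shows the functions $\mathbf a\mapsto\mathbf{a}^{M}$ ($M\in S$) are independent on $\P^2(\Fq)$, so one can choose lines $L_{\mathbf a^{(1)}},\dots,L_{\mathbf a^{(3^s)}}$ for which the square matrix $\big((\mathbf a^{(j)})^{M}\big)_{1\leq j\leq 3^s,\,M\in S}$ is invertible; inverting it expresses each $\mathbf{x}^{M}$ as an $\Fq$--combination of these line vectors and $\mathbf{1}$. Finally, $\mathbf{1}$ is itself the sum of all $q^2+q+1$ line vectors (each point lies on $q+1$ lines and $q+1$ is odd), hence $\mathbf{1}\in\clines(s)$. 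Therefore $\Span_{\Fq}\{\text{line vectors}\}$ contains $\mathbf{1}$ and all the $\mathbf{x}^{M}$, so its dimension is exactly $3^s+1$ and $\dim_{\F}\clines(s)=3^s+1$.

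I expect the main obstacle to be the characteristic--$2$ bookkeeping: pinning down $|S|=3^s$ through the carry--free condition, and carrying the two linear--independence arguments on projective rather than affine space. One has to use the homogeneity of degree $q-1$ to move back and forth between functions on $\Fq^3\setminus\{0\}$ and functions on $\P^2(\Fq)$, and one has to note separately that $\mathbf{1}$ lies in the code, so that the $(3^s+1)$--st dimension is genuinely attained and not merely an upper bound.
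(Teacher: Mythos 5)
Your proof is correct, but it is worth noting that the paper does not prove this statement at all: it is quoted from Smith's theorem on the $p$-rank of the point/hyperplane incidence matrix of $\mathbf{PG}(m,p^s)$, which gives $\binom{p+1}{2}^s+1$ in general and $3^s+1$ for $p=2$, $m=2$. What you supply instead is a self-contained polynomial-method argument specialized to characteristic $2$: extend scalars to $\Fq$ (legitimate, since rank is invariant under field extension), write each line indicator as $1+(a_1x_1+a_2x_2+a_3x_3)^{q-1}$, and use Lucas/Kummer to see that exactly the $3^s$ carry-free exponent triples survive, so the line vectors sit inside the span of $\one$ and $3^s$ reduced monomials of degree $q-1$; the reverse inclusion follows from the two linear-independence checks via uniqueness of reduced representatives (your degree bound $q-1<3(q-1)$ and the representative $\prod_j(1-x_j^{q-1})$ of the indicator of the origin are exactly what is needed), from invertibility of a suitable $3^s\times 3^s$ evaluation matrix, and from the parity observation that $\one$ is the sum of all $q^2+q+1$ lines because each point lies on $q+1$ lines with $q+1$ odd. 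All steps check out, including the subtler points you flag: well-definedness on $\P^2(\Fq)$ via homogeneity of degree $q-1$, and the fact that $\one$ genuinely lies in the code so the $(3^s+1)$-st dimension is attained. Compared with citing Smith, your route buys a short, explicit proof for the only case the paper needs, together with an explicit spanning set ($\one$ and the monomials $\mathbf{x}^M$) of the extended code, whereas Smith's theorem covers all primes $p$ and all dimensions; the trade-off is that your argument is genuinely tied to $q$ even (through the carry-free count and the parity of $q+1$), which is precisely the case relevant to $\QFG(s)$.
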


\begin{prop}
  \label{prop:proj_class_codes}
  For all $s \geq 1$,
  \begin{enumerate}[(i)]
    \item\label{it:cplanes1} $\cplanes (s) \subseteq \cplanes (s)^{\bot}$;
    \item\label{it:cplanes2} $\cplanes (s) \subseteq \clines(s)$;
    \item\label{it:cplanes3} $\cplanes (s) \subseteq \clines (s)^{\bot}$;
    \item\label{it:lines_not_in_cplane} $\clines (s) = \cplanes (s)
      \oplus \Span (L)$ for every line $L \subseteq \P^2 (\Fq)$;
    \item\label{it:dim_cplanes} $\dim \big(\clines (s)\big) - \dim \big(\cplanes (s)\big) = 1$.
  \end{enumerate}
\end{prop}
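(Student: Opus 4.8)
The whole proposition rests on one identity together with a handful of parity counts. Over $\F$, the affine chart $U_M$ defined as the complement of a line $M$ is simply $\one + M$, where $\one$ denotes the all--ones vector, i.e.\ the whole plane $\P^2(\Fq)$ regarded as a codeword. So the first step I would carry out is to show that $\one \in \clines(s)$: fixing a point $p$, the $q+1$ lines through $p$ add up to $\one$, because $p$ lies on every one of them and $q+1$ is odd, while every other point lies on exactly one of them (Proposition~\ref{prop:basics_finite_geom}). Since $\cplanes(s)$ is spanned by the vectors $U_M = \one + M$ and each of these now lies in $\clines(s)$, this already proves \ref{it:cplanes2}.

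The second step would be to compute all inner products among $\one$ and the lines. Using $\langle S, S'\rangle = |S \cap S'| \bmod 2$, one has $\langle \one, \one \rangle = |\P^2(\Fq)| = q^2+q+1$, then $\langle \one, M\rangle = |M| = q+1$, and $\langle M, M'\rangle = |M \cap M'|$, which is $q+1$ when $M = M'$ and $1$ otherwise; since $q$ is even, all of these are odd, so \emph{every} pairwise product among $\one$ and the lines equals $1$ in $\F$. Expanding $U_M = \one + M$ then gives $\langle U_M, U_{M'}\rangle$ as a sum of four terms equal to $1$, hence $0$, which proves \ref{it:cplanes1}; and $\langle U_M, L\rangle$ as a sum of two terms equal to $1$, hence $0$, which proves \ref{it:cplanes3}.

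For \ref{it:lines_not_in_cplane}, fix any line $L$; the inclusion $\cplanes(s)+\Span(L) \subseteq \clines(s)$ is \ref{it:cplanes2}, and conversely $\one = U_L + L \in \cplanes(s)+\Span(L)$ forces every line $M = U_M + \one$ into $\cplanes(s)+\Span(L)$, so $\clines(s) \subseteq \cplanes(s)+\Span(L)$. The sum is direct since $\Span(L) = \{0, L\}$ and $L \notin \cplanes(s)$: indeed $\cplanes(s) \subseteq \clines(s)^\perp$ by \ref{it:cplanes3}, so if $L$ lay in $\cplanes(s)$ we would get $\langle L, L\rangle = |L| = q+1 = 0$ in $\F$, a contradiction. (Equivalently, every element of $\cplanes(s)$ has even weight, being a sum of affine charts of weight $q^2$, whereas $|L|$ is odd.) Finally \ref{it:dim_cplanes} is immediate from \ref{it:lines_not_in_cplane} by a dimension count.

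There is no real obstacle here; the only step requiring a moment's thought is spotting the substitution $U_M = \one + M$ and checking $\one \in \clines(s)$, after which all five assertions reduce to bookkeeping the parities of $q^2+q+1$, $q+1$ and $q^2$ in the incidence structure of the plane.
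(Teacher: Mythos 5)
Your proof is correct, and it reorganizes the argument around a single pivot that the paper only exploits in part (\ref{it:lines_not_in_cplane}): the identity $U_M = \one + M$ together with the observation that $\one \in \clines(s)$ (sum of the full pencil of $q+1$ lines through a point, with $q+1$ odd). The paper instead proves (\ref{it:cplanes1}) by directly computing $|A_1\cap A_2| = |\P^2(\Fq)| - |L_1\cup L_2| = q^2-q$ via inclusion--exclusion, proves (\ref{it:cplanes2}) by writing a chart $^cL$ as the sum of the $q$ lines $\neq L$ through a point of $L$ (which is exactly your identity with the pencil based at a point of $M$), and gets (\ref{it:cplanes3}) from the geometric fact that a line meets a chart in $0$ or $q$ points. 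Your version replaces these three separate counts by one uniform bilinearity computation: all pairwise products among $\one$ and the lines equal $1$, so the products of the shifted vectors $\one+M$ vanish as sums of $2$ or $4$ ones. What this buys is economy and a clear isolation of the only genuinely geometric inputs (the incidence counts of Proposition~\ref{prop:basics_finite_geom} and the parity of $q$); what the paper's route buys is that parts (\ref{it:cplanes1})--(\ref{it:cplanes3}) each follow from a self-contained intersection count without needing $\one\in\clines(s)$ first. Both handle (\ref{it:lines_not_in_cplane}) and (\ref{it:dim_cplanes}) identically, including the directness argument via $\langle L,L\rangle = q+1 \equiv 1$.
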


\begin{proof}
  To prove (\ref{it:cplanes1}), note first that an affine chart has an even
  number of points and hence is orthogonal to itself. Let $A_1, A_2$ be two 
  distinct affine charts. Then, there exist two distinct
  lines $L_1, L_2$ such that
  if we denote by ${^c X}$ the complement of a subset $X$ of $\P^2 (\Fq)$,
  then
  $$
  A_1 = {^c  L_1} \quad {\rm and} \quad 
  A_2 = {^c L_2}.
  $$
  Thus,
  $$
  { A_1 \cap A_2} = {^c ( L_1 \cup L_2)} 
  $$
  Next, since $L_1, L_2$ are distinct to each other,
  $|L_1 \cup L_2| = 2q+1$ and hence 
  $$
  \langle A_1, A_2 \rangle \equiv |A_1 \cap A_2| \equiv |\P^2 (\Fq)| - |L_1 \cup L_2| \equiv q^2 - q \equiv 0
  \mod 2.
  $$
  To prove (\ref{it:cplanes2}), consider an affine chart $A$ and let $L$
  be the line such that $A  = {^c L}$. Let $P \in L$ be a point and $L_1, \ldots , L_q$ be all the lines containing $P$
  but $L$. Then
  $$
  A = L_1 + \cdots + L_q.
  $$
  Indeed, every point $Q \in A$ is in exactly one of the $L_i$'s. 
  Moreover, the $L_i$'s all meet at $P$ which is the only point
  in ${^c A}$ contained in the union of $L_i$'s. Since the number of the
  $L_i$'s is $q$ and hence is even, then $P \notin L_1 + \cdots + L_q$.
  This proves that every affine chart is a sum of lines.

   Point (\ref{it:cplanes3}) is a direct consequence of Proposition~\ref{prop:prop_of_charts}(\ref{it:basics_incidence}).

    To prove (\ref{it:lines_not_in_cplane}), denote by $\one$ the all-one
    vector $(1, \ldots, 1)$. Then for every affine chart $A$, there
    is a line $L$ such that ${^c A} = L$.
    In terms of vectors, we get $A = L + \one$.
    Then, let $L, L'$ be two lines of $\P^2 (\Fq)$ and $A,A'$
    be respectively the affine charts ${^c L}$ and ${^c L'}$, then
    $$
    L + L' = L + L' + \one + \one = A + A'.
    $$
    Thus,
    $$
    L = A+A'+L'.
    $$
    So far, we have proved that every line $L'$ of $\P^2 (\Fq)$ is a sum
    of $L$ and an element of $\cplanes(s)$. This proves that
    $$
    \clines (s) = \cplanes (s) + \Span (L).
    $$
    But $\langle L,L\rangle \equiv |L|
    \equiv 1 \mod 2$, so $L \notin \clines^{\bot}$ and it follows hence,
    from~(\ref{it:cplanes3}), that $L \notin \cplanes (s)$.

    Finally, (\ref{it:dim_cplanes}) is a direct consequence of
    (\ref{it:lines_not_in_cplane}).
\end{proof}

\begin{remarque}
  Actually, $\cplanes (s)$ is nothing but
  the {\em even subcode} of $\clines (s)$ i.e. the subcode of
  vectors of even weight.
\end{remarque}

\subsubsection{Quantum CSS codes from the projective plane in characteristic $2$}
\begin{defi}
  We define $\QFG(s)$ as the quantum code of length $q^2 + q + 1$
  associated to $\cplanes (s) \subseteq \clines (s)$.
\end{defi}

After reduction, the corresponding chain complex is
$$
\chain{C}_\textrm{FG}(s):=\xymatrix{\relax
\F^{3^s} \ar@{^(->}[r] & \F^{2^{2s}+2^s+1} \ar@{->>}[r] & \F^{2^{2s}+2^s-3^s}.
}
$$ 
Indeed, Proposition~\ref{prop:proj_class_codes}(\ref{it:dim_cplanes})
together with Proposition~\ref{prop:dim_clines}
assert that $\dim_{\F} \big(\cplanes (s)\big) = 3^s$.

\begin{remarque}
  The code $\QFG(1)$ is nothing
  but the $\llbracket 7, 1, 3 \rrbracket$ Steane code.
  This fact is actually well--known, since the Steane code
  is known to be constructed from the Hamming code and its dual
  while the Hamming code is already known to be the code $\clines (1)$
  spanned by the lines of $\P^2 (\F)$.
\end{remarque}

\begin{lemme}
  Let $\Omega$ be the set of lines of $\P^2 (\Fq)$. We have
 $$\Omega \subset \clines (s) \setminus \cplanes (s)
 \quad {\rm and}\quad
 \Omega \subset \cplanes(s)^{\bot} \setminus \clines (s)^{\bot}.$$
\end{lemme}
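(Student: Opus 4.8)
The plan is to check the four (non-)membership assertions separately; each follows at once from facts already established. The inclusion $\Omega \subset \clines(s)$ is immediate, since by definition $\clines(s)$ is spanned by the lines of $\P^2(\Fq)$ and therefore contains each of them.

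For the assertions $L \notin \cplanes(s)$ and $L \notin \clines(s)^{\bot}$, the key observation is the single parity computation
\[
\langle L, L\rangle \equiv |L| \equiv q+1 \equiv 1 \pmod{2},
\]
valid because $q = 2^s$ is even. Since $L \in \clines(s)$ is not orthogonal to itself, we get $L \notin \clines(s)^{\bot}$. And because $\cplanes(s) \subseteq \clines(s)^{\bot}$ by Proposition~\ref{prop:proj_class_codes}(\ref{it:cplanes3}), the same computation forces $L \notin \cplanes(s)$. (Alternatively, $L \notin \cplanes(s)$ is literally the content of Proposition~\ref{prop:proj_class_codes}(\ref{it:lines_not_in_cplane}).)

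It remains to show $\Omega \subset \cplanes(s)^{\bot}$, i.e.\ that every line $L$ is orthogonal to every affine chart. For this I would use Proposition~\ref{prop:prop_of_charts}(\ref{it:charts_orth_lines}): for an affine chart $U$, the set $L \cap U$ is either empty (when $U$ is the complement of $L$) or an affine line with $q$ points, so $|L \cap U| \in \{0, q\}$ is even, whence $\langle L, U\rangle \equiv |L \cap U| \equiv 0 \pmod{2}$. Since the affine charts span $\cplanes(s)$, this yields $L \in \cplanes(s)^{\bot}$.

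No real obstacle arises: the whole statement is a bundle of parity arguments powered by the evenness of $q$, together with containments already proved. The one point worth spelling out is that $\Omega \subset \cplanes(s)^{\bot}$ can also be read off directly from $\cplanes(s) \subseteq \clines(s)^{\bot}$ and $\Omega \subset \clines(s)$ via the symmetry of the ambient bilinear form; either route completes the proof.
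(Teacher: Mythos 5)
Your proof is correct and follows essentially the same route as the paper's: both reduce everything to Proposition~\ref{prop:proj_class_codes} and the parity of intersections from Proposition~\ref{prop:prop_of_charts}. The only cosmetic difference is that you exclude $L$ from $\clines(s)^{\bot}$ via the odd self-intersection $|L|=q+1$, while the paper uses $|L\cap L'|=1$ for a distinct line $L'$; both are valid one-line parity checks.
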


\begin{proof}
  The first inclusion is a direct consequence of
  Proposition~\ref{prop:proj_class_codes}(\ref{it:lines_not_in_cplane}).
  From Proposition~\ref{prop:prop_of_charts}(\ref{it:charts_orth_lines})
  every line of $\P^2 (\Fq)$ is in $\cplanes (s)^{\bot}$. But
  a line $L$ is not in $\clines(s)^{\bot}$. Indeed, let $L'$
  be a line distinct from $L$, then
  $\langle L, L' \rangle \equiv |L \cap L'| \equiv 1 \mod 2$.
\end{proof}

\begin{lemme}\label{lem:QFG_size_Omega}
  $|\Omega| = q^2 +q +1$ and $\overlap(\Omega) = q+1$.
\end{lemme}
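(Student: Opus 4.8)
The plan is to obtain both numbers directly from the elementary incidence properties of $\P^2(\Fq)$ recalled in Proposition~\ref{prop:basics_finite_geom}, once Definition~\ref{def:overlap} has been translated into geometric language.

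First I would spell out the identification in force here: the ambient space $C_0=\F^{q^2+q+1}$ carries the fixed basis $\BB$ consisting of the points of $\P^2(\Fq)$, and under the subset identification of the Notation section a line $L$ is represented by the vector $\sum_{P\in L}P$. Consequently, for a point $P$ and a line $L$ the relation ``$P\in L$'' between a basis element and a vector (in the sense of the Notation section) holds precisely when $P$ lies on $L$ in $\P^2(\Fq)$, so the support of each vector of $\Omega$ is an honest projective line, and distinct lines yield distinct subsets of points, hence distinct vectors. The cardinality is then immediate: $\Omega$ is in bijection with the set of lines of $\P^2(\Fq)$, and Proposition~\ref{prop:basics_finite_geom} states that there are $q^2+q+1$ of them.

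For the overlap, Definition~\ref{def:overlap} gives $\overlap(\Omega)=\max_{P}\big|\{L\in\Omega \mid P\in L\}\big|$, the maximum ranging over the basis, that is, over the points $P$ of $\P^2(\Fq)$. For a fixed $P$, the set $\{L\in\Omega \mid P\in L\}$ is exactly the pencil of lines through $P$, which by Proposition~\ref{prop:basics_finite_geom}(\ref{it:basics_incidence}) has $q+1$ elements; since this count is independent of $P$, the maximum is $q+1$. There is no genuine difficulty in this lemma: the only thing to check is that the combinatorial quantity $\overlap$ --- defined as the maximal weight of a column of the matrix whose rows are the elements of $\Omega$ --- really does coincide with ``number of lines through a point'', which is exactly what the identification above delivers.
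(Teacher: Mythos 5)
Your proof is correct and follows exactly the same route as the paper, which simply invokes Proposition \ref{prop:basics_finite_geom}; you have merely made explicit the identification between vectors of $\Omega$ and lines of $\P^2(\Fq)$ and the translation of $\overlap$ into "maximal number of lines through a point". Nothing further is needed.
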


\begin{proof}
  It is a direct consequence of Proposition~\ref{prop:basics_finite_geom}.
\end{proof}

\begin{prop}
  For every $s \geq 1$, the family of iterated tensor powers $\QFG(s)^{\otimes \ell}$ has
  parameters
{$$
  \param{\sim\frac{K_s}{\sqrt{\ell}} {\left(\left(2^{2s} + 2^s +1\right) + 2 \left(2\sqrt{3}\right)^s\sqrt{1+\left(\textrm{\scriptsize $\frac{1}{2}$}\right)^s-\left(\textrm{\scriptsize $\frac{3}{4}$}\right)^s}\right)}^\ell}{1}{\geq {\left(2^s+\frac{1}{2^s+1}\right)}^{\ell}}{\leq (2^{2s}+2^s+1)\ell}
  $$}
  for some constant $K_s$ depending only on $s$ and the family of iterated
  reduced tensor powers $\QFG(s)^{\rotimes \ell}$ has parameters
{$$
  \param{\sim K'_s {\left(\left(2^{2s} + 2^s +1\right) + \left(2\sqrt{3}\right)^s\sqrt{1+\left(\textrm{\scriptsize $\frac{1}{2}$}\right)^s-\left(\textrm{\scriptsize $\frac{3}{4}$}\right)^s}\right)}^\ell}{1}{\geq {\left(2^s+\frac{1}{2^s+1}\right)}^{\ell}}{\leq (2^{2s}+2^s+1)\ell}
  $$}
  for some constant $K'_s$ depending only on $s$.
\end{prop}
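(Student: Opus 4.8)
The plan is to read off all four parameters from the reduced chain complex $\chain{C}_\textrm{FG}(s)$ displayed just above, using Corollary~\ref{cor:ParamTensor} for the powers $\QFG(s)^{\otimes\ell}$ and the general form of Corollary~\ref{cor:ParamTensorReduc} from Appendix~\ref{appendix:B} for the reduced powers $\QFG(s)^{\rotimes\ell}$. To invoke these, the only thing I must check is that $\chain{C}_\textrm{FG}(s)$ \emph{and its dual} both satisfy the hypothesis of Lemma~\ref{lem:Th} with one common pair of integers $N,K$. I will write $a := 2^{2s}+2^s+1 = q^2+q+1$, $b_1 := 3^s$ and $b_2 := 2^{2s}+2^s-3^s = q^2+q-3^s$; by Proposition~\ref{prop:dim_clines} and Proposition~\ref{prop:proj_class_codes}(\ref{it:dim_cplanes}) these are exactly the dimensions occurring in $\chain{C}_\textrm{FG}(s)$, we have $b_1,b_2\leq a$ (needed for the weight bound) and $a-b_1-b_2 = 1$, consistent with $k_{\QFG(s)}=1$.

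First I would produce the sets of cohomologically non trivial vectors. Since $\QFG(s)$ has dimension one, the homology of $\chain{C}_\textrm{FG}(s)$ is spanned by a single class, so in the notation of Section~\ref{sec:Th} there is a unique index $j_0$; in this dimension-one situation $\Ker_{j_0}^\cperp$ is exactly the set of cohomologically non trivial vectors, which here is $\cplanes(s)^\perp\setminus\clines(s)^\perp$, while the homologically non trivial vectors form $\clines(s)\setminus\cplanes(s)$ (the role of these two sets being swapped for the dual complex). The two lemmas preceding this proposition say precisely that the set $\Omega$ of lines of $\P^2(\Fq)$ sits inside \emph{both} of these sets, and Lemma~\ref{lem:QFG_size_Omega} gives $|\Omega| = q^2+q+1$ and $\overlap(\Omega) = q+1$. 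Taking $\Omega$ as the required subset for $\chain{C}_\textrm{FG}(s)$ and again for $\chain{C}_\textrm{FG}(s)^*$, I get the hypothesis of Lemma~\ref{lem:Th} for both complexes with $N = q^2+q+1$ and $K = q+1$, and $\frac{N}{K} = \frac{q^2+q+1}{q+1} = q + \frac{1}{q+1} = 2^s + \frac{1}{2^s+1}$.

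Then Corollary~\ref{cor:ParamTensor} applied with these $a,b_1,b_2,N,K$ yields at once $k_{\QFG(s)^{\otimes\ell}} = (a-b_1-b_2)^\ell = 1$, $d_{\QFG(s)^{\otimes\ell}} \geq (\frac{N}{K})^\ell = (2^s+\frac{1}{2^s+1})^\ell$ and $w_{\QFG(s)^{\otimes\ell}} \leq a\ell = (2^{2s}+2^s+1)\ell$; for the length it remains only to rewrite $b_1b_2 = 3^s(2^{2s}+2^s-3^s) = 12^s+6^s-9^s = (2\sqrt3)^{2s}\big(1+(\tfrac12)^s-(\tfrac34)^s\big)$, so that $\sqrt{b_1b_2} = (2\sqrt3)^s\sqrt{1+(\tfrac12)^s-(\tfrac34)^s}$ and $a+2\sqrt{b_1b_2}$ is exactly the base in the announced asymptotics, with $K_s = \tfrac12\sqrt{(a+2\sqrt{b_1b_2})/(\pi\sqrt{b_1b_2})}$. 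For the reduced powers I would invoke the general (non-symmetric, $b_1\neq b_2$) version of Corollary~\ref{cor:ParamTensorReduc} established in Appendix~\ref{appendix:B}: it keeps $k$, $d$ and the bound on $w$ unchanged and only replaces the growth base $a+2\sqrt{b_1b_2}$ by $a+\sqrt{b_1b_2}$, which the same identity turns into the stated expression (with some new constant $K'_s$). The only genuinely delicate point is the bookkeeping of the previous paragraph — confirming that the two incidence lemmas really do fill simultaneously the $\Omega$ and $\Omega'$ slots for $\chain{C}_\textrm{FG}(s)$ and for its dual, so that Corollary~\ref{cor:ParamTensor} applies with a single $(N,K)$; the asymptotic length counts are already carried out in Appendices~\ref{appendix:A} and~\ref{appendix:B}, and everything else is the arithmetic above.
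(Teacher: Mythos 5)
Your proposal is correct and follows essentially the same route as the paper: the paper's own proof obtains the minimum distance from Lemma~\ref{lem:QFG_size_Omega} together with Corollary~\ref{cor:cor_of_main} (the dimension-one form of the criterion, which is exactly the verification you carry out by hand for $\chain{C}_\textrm{FG}(s)$ and its dual), and the remaining parameters from Corollary~\ref{cor:ParamTensor} and Proposition~\ref{prop:length_reduced_asym}, just as you do. Your explicit bookkeeping ($a=2^{2s}+2^s+1$, $b_1=3^s$, $b_2=2^{2s}+2^s-3^s$, $N/K=2^s+\frac{1}{2^s+1}$, and $b_1b_2=12^s+6^s-9^s$) simply makes visible the arithmetic the paper leaves implicit.
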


\begin{proof}
  The minimum distance is a consequence of Lemma~\ref{lem:QFG_size_Omega}
  and Corollary~\ref{cor:cor_of_main}.
  The other parameters are obtained using Corollary~\ref{cor:ParamTensor}
  and Proposition~\ref{prop:length_reduced_asym}.
\end{proof}


\subsection{Quantum cyclic codes}
\label{sec:QCyclic}
The following example is based on classical cyclic codes.

\subsubsection{Cyclic codes}
Here we recall very classical facts about cyclic codes. For further
details we refer the reader to \cite[Chapter 7]{SloaneMcWilliams}.

A binary {\em cyclic code} $\code{C} \subseteq \F^n$ is a code which is stable
under the action of the automorphism
$$\function{\sigma}{\F^n}{(x_0, \ldots, x_{n-1})}{\F^n}{(x_{n-1},x_0 \ldots, x_{n-2}).}$$

\noindent In what follows we identify $\F^n$ and $\fract {\F[X]}/{(X^n-1)}$
using the $\F$--linear isomorphism
$$
\left\{\begin{array}{ccc}
\fract {\F[X]}/{(X^n-1)} &\overset{\sim}{\longrightarrow}& \F^n \\
  f = f_0 + f_1 X + \cdots + f_{n-1}X^{n-1} & \longmapsto &
  (f_0, \ldots,f_{n-1})
\end{array}\right.
$$
and we define the weight of
a polynomial as the number of its nonzero coefficents.
Using this identification, the automorphism $\sigma$ corresponds in
$\fract {\F[X]}/{(X^n-1)}$ to the multiplication by $X$.
A code $\code{C}\subset \fract {\F[X]}/{(X^n-1)}$ is hence cyclic if
it is stable under the multiplication by $X$, that is if it is an ideal. Since $\F [X]$ is a principal ideal ring,
the ideals of $\fract {\F[X]}/{(X^n-1)}$ are in one-to-one correspondence
with the divisors of $X^n-1$.

Given $h \in \F [X]$ such that $h ~|~ X^n-1$, the code $\code{Cyc} (h)$ is defined
as the code corresponding to the ideal generated by $h$. 
It is well--known that this code has dimension $n-\deg (h)$.
The polynomial $h$ is referred to as a {\em generating polynomial}
of the code. It is unique up to multiplication by an invertible element
of $\fract {\F[X]}/{(X^n-1)}$. Note that if $h_1 ~|~ h_2 ~|~ X^n-1$,
then $\code{Cyc}(h_2) \subseteq \code{Cyc}(h_1)$.

The dual of a cyclic code is cyclic and its generating polynomial can
be obtained as follows. Given a polynomial $f \in \fract {\F[X]}/{(X^n-1)}$
we define
$$
\bar{f} := X^{\deg f}f\left(\frac 1 X\right),
$$
and referred to as the {\em reciprocal polynomial} of $f$. Over $\F$, $X^n - 1$
is equal to its reciprocal polynomial so, if $f~|~X^n-1$, then $\bar{f}~|~X^n-1$.
Let $h$ be the polynomial such that $\bar{f}h = X^n-1$, then 
$$
\code{C}(f)^{\bot} = \code{C}(h).
$$

\subsubsection{A construction of CSS codes}
The case $n = 2^s$ is actually never considered in the study of classical cyclic codes since,
in that case, the polynomial $X^n-1$ is completely inseparable and all the
constructions based on choosing divisors of $X^n-1$ having a prescribed set 
of roots, such as BCH codes (see for instance \cite[Chapter 9]{SloaneMcWilliams}),
are not possible.
But oddly enough, this is precisely the case which shall lead to
interesting families
of CSS codes defined by iterated tensor powers.

In this situation $X^{2^s}-1 = (X-1)^{2^s}$ and hence the divisors of $X^n-1$
are of the form $(X-1)^r$ for all $r\in \{0, \ldots , 2^s\}$.
The corresponding cyclic codes are thus $\code{Cyc} \big((X-1)^r\big)$. Such a code
has dimension $n-r$ and 
\[
  {\code{Cyc}\big((X-1)^{r}\big)}^{\bot} = \code{Cyc} \big ((X-1)^{n-r}\big).
\]

\begin{defi}
  For any $r<n$, we define $\QCC(n,r)$ as the CSS code of dimension
  $1$ associated to the pair of codes
  $\code{Cyc}\big((X-1)^r\big) \subseteq
  \code{Cyc}\big((X-1)^{r-1}\big)$.
  If $g_{r} : \F^{n-r} \rightarrow \F^n$ and
  $g_{n-r+1} : \F^{r-1} \rightarrow \F^n$ are, respectively,
  generating maps for $\code{Cyc} \big((X-1)^{r}\big)$ and
  $\code{Cyc}\big((X-1)^{r-1}\big)^{\bot} =
  \code{Cyc}\big((X-1)^{n-r+1}\big)$,
  then $\QCC(n,r)$ is also defined as the CSS code associated to
$$
\chain{C}_\textrm{Cyc}:=\xymatrix{\relax \F^{n-r} \ar@{^(->}[r]^-{g_r}
  & \F^n \ar@{->>}[r]^-{g_{n-r+1}^*} & \F^{r-1}}.
$$
\end{defi}

\begin{lemme}\label{lem:binary_weight}
  Let $r \leq n$ be a non negative integer. Then, the weight of $(X-1)^r \in
  \F[X]$ equals $2^{\textrm{w}_2(r)}$, where $\textrm{w}_2(r)$ 
  denotes the {\em binary weight} of $r$, i.e. the weight of its decomposition
  in base $2$.
\end{lemme}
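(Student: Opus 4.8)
The plan is to exploit the Frobenius endomorphism over $\F$, under which $(X-1)^{2^i} = X^{2^i} - 1 = X^{2^i} + 1$ since we work in characteristic $2$. First I would write the binary expansion $r = \sum_{i \in I} 2^i$, where $I \subseteq \N$ is the set of positions of the nonzero binary digits of $r$, so that $|I| = \textrm{w}_2(r)$. Since squaring is additive in $\F[X]$, iterating it gives $(X-1)^{2^i} = X^{2^i}+1$ for each $i$, and hence
\[
(X-1)^r = \prod_{i \in I} (X-1)^{2^i} = \prod_{i \in I} \left(X^{2^i} + 1\right)
\]
in $\F[X]$.

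Next I would expand this product. Choosing, in each factor indexed by $i \in I$, either the term $X^{2^i}$ or the term $1$, amounts to choosing a subset $S \subseteq I$, and the corresponding monomial is $X^{e(S)}$ with $e(S) := \sum_{i \in S} 2^i$. Thus $(X-1)^r = \sum_{S \subseteq I} X^{e(S)}$. The key point, which is nothing but the uniqueness of the binary representation of integers, is that $S \mapsto e(S)$ is injective on subsets of $I$: the integer $e(S)$ has binary digits exactly the indicator function of $S$. Consequently no two monomials in the sum coincide, there is no cancellation over $\F$, and the number of nonzero coefficients of $(X-1)^r$ equals the number of subsets of $I$, namely $2^{|I|} = 2^{\textrm{w}_2(r)}$.

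There is no real obstacle here; the only subtlety worth stating explicitly is the absence of cancellation, i.e. that distinct subsets of $I$ produce distinct exponents, which is immediate. Alternatively one could invoke Lucas' theorem --- the coefficient of $X^k$ in $(X-1)^r$ over $\F$ is $\binom{r}{k} \bmod 2$, which is nonzero precisely when each binary digit of $k$ is at most the corresponding digit of $r$, and the count of such $k$ is again $2^{\textrm{w}_2(r)}$ --- but the Frobenius argument above is shorter and avoids any appeal to congruences of binomial coefficients.
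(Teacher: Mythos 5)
Your proof is correct and is essentially the paper's argument in closed form: the paper proves the same identity by induction on $\textrm{w}_2(r)$, peeling off the top bit via $(X-1)^r=(X-1)^{2^a}(X-1)^{r'}=X^{2^a}(X-1)^{r'}+(X-1)^{r'}$ and using $r'<2^a$ to rule out overlapping monomials, which is exactly your ``no cancellation by uniqueness of binary representation'' step performed one factor at a time. Both rest on the Frobenius identity $(X-1)^{2^a}=X^{2^a}-1$ over $\F$, so there is nothing to add.
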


\begin{proof}
  We prove it by induction on $\textrm{w}_2(r)$. If $\textrm{w}_2 (r) = 1$,
  then $r = 2^a$ for some non negative integer $a$ and $(X-1)^{2^a} = X^{2^a}-1$
  has weight $2$. For $\textrm{w}_2(r) > 1$ then let $a :=
  \lceil \log_2(r)\rceil$. Then $r = 2^a + r'$ where $r' < 2^a$ and
  $\textrm{w}_2(r') = \textrm{w}_2(r)-1 $. By induction, the weight of
  $(X-1)^{r'}$ equals $2^{\textrm{w}_2(r')}$. Hence
  $$
  (X-1)^r = (X-1)^{2^a}(X-1)^{r'} = X^{2^a} (X-1)^{r'} + (X-1)^{r'}. 
  $$
  Since $r' < 2^a$, the polynomials $X^{2^a} (X-1)^{r'}$
  and $(X-1)^{r'}$ have no common monomials and hence, the weight of
  $(X-1)^r$ is twice that of $(X-1)^{r'}$. This concludes the proof.
\end{proof}

\begin{prop}
  For every $\ell\in\N$, the $\ell$--th tensor power of $\QCC(n,r)$ has
  dimension 1 and minimum distance at least $2^{\ell(s- w)}$,
  where $w = \max \big({\rm w}_2(r-1), {\rm w}_2(n-r)\big)$.
\end{prop}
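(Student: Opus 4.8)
The plan is to apply Lemma~\ref{lem:Th} directly to $\chain{C}_\textrm{Cyc}$ and iterate it by hand, keeping the complex and its dual in \emph{separate} recursions. One is tempted to quote Corollary~\ref{cor:ParamTensor}, but that corollary demands that $\chain{C}_\textrm{Cyc}$ and $\chain{C}_\textrm{Cyc}^*$ satisfy the hypothesis of Lemma~\ref{lem:Th} for a \emph{common} ratio $N/K$, whereas the natural witnessing sets for the two complexes will have different ratios; collapsing them would give a bound that is too weak. So I would instead control $d_0$ of the $\ell$-th tensor power of the complex and of its dual independently, and take the minimum only at the very end.

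First I would record the codes involved: with $\code{C}_2 = \code{Cyc}\big((X-1)^r\big) \subseteq \code{C}_1^\perp = \code{Cyc}\big((X-1)^{r-1}\big)$ one has $\code{C}_1 = \code{Cyc}\big((X-1)^{n-r+1}\big)$ and $\code{C}_2^\perp = \code{Cyc}\big((X-1)^{n-r}\big)$; the complex $\chain{C}_\textrm{Cyc}$ is reduced, hence balanced, with $H_0$ one-dimensional, and its cohomology coset $\Ker_1^\cperp$ (notation of Section~\ref{sec:Th}) is the unique nontrivial coset of $\code{C}_1 = \Ker(\p_0)^\perp$ inside $\code{C}_2^\perp$, namely $(X-1)^{n-r} + \code{Cyc}\big((X-1)^{n-r+1}\big)$. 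Then I would take $\Omega$ to be the cyclic orbit $\big\{X^i (X-1)^{n-r} : i \in \Z/n\Z\big\}$ and check two things: (i) $\Omega \subset \Ker_1^\cperp$, because $X^i(X-1)^{n-r} - (X-1)^{n-r} = (X^i-1)(X-1)^{n-r}$ and $X-1 \mid X^i-1$ over $\F$, so the difference lies in $\code{Cyc}\big((X-1)^{n-r+1}\big)$; (ii) $\Omega$ is globally invariant under the cyclic shift, so stacking its elements as rows produces a matrix all of whose columns have the same weight, whence $\overlap(\Omega) = |\Omega|\cdot\big|(X-1)^{n-r}\big|\,/\,n$ and therefore $|\Omega| / \overlap(\Omega) = n / \big|(X-1)^{n-r}\big| = 2^s / 2^{\textrm{w}_2(n-r)}$ by Lemma~\ref{lem:binary_weight}. (Note $|\Omega|$ itself may be a proper divisor of $2^s$; this is exactly why a common $N/K$ for $\chain{C}_\textrm{Cyc}$ and $\chain{C}_\textrm{Cyc}^*$ is not available.)

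It then follows from Corollary~\ref{cor:DMinBound} that $d_{\chain{C}_\textrm{Cyc}} \geq 2^{s-\textrm{w}_2(n-r)}$, and from an induction on $\ell$ using Lemma~\ref{lem:Th} with $\chain{C}_\textrm{Cyc}$ fixed and $\chain{D} = \chain{C}_\textrm{Cyc}^{\otimes(\ell-1)}$ (valid since $\chain{C}_\textrm{Cyc}$ is balanced) that $d_{\chain{C}_\textrm{Cyc}^{\otimes\ell}} \geq 2^{\ell(s-\textrm{w}_2(n-r))}$. The dual $\chain{C}_\textrm{Cyc}^*$ is again reduced and of the same shape (it is the complex of $\QCC(n, n-r+1)$), so the same argument with $(X-1)^{r-1}$ in place of $(X-1)^{n-r}$ gives $d_{(\chain{C}_\textrm{Cyc}^*)^{\otimes\ell}} \geq 2^{\ell(s-\textrm{w}_2(r-1))}$. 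Since $(\chain{C}_\textrm{Cyc}^{\otimes\ell})^* \cong (\chain{C}_\textrm{Cyc}^*)^{\otimes\ell}$ (Proposition~\ref{prop:ProductParameters}), and since $d_{\QCC(n,r)^{\otimes\ell}} = \min\big(d_0(\chain{C}_\textrm{Cyc}^{\otimes\ell}),\, d_0((\chain{C}_\textrm{Cyc}^*)^{\otimes\ell})\big)$ (the $\{-1,0,1\}$-truncation preserving $H_0$ and its dual), I conclude $d_{\QCC(n,r)^{\otimes\ell}} \geq \min\big(2^{\ell(s-\textrm{w}_2(n-r))},\, 2^{\ell(s-\textrm{w}_2(r-1))}\big) = 2^{\ell(s-w)}$; the dimension is $1$ by the Künneth formula, since $\QCC(n,r)$ has dimension $1$.

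The main obstacle is the conceptual one flagged above: one must resist quoting Corollary~\ref{cor:ParamTensor} and instead keep the contributions of $\chain{C}_\textrm{Cyc}$ and $\chain{C}_\textrm{Cyc}^*$ in two independent recursions, intersecting only at the end. Everything else is routine; the one genuinely new computation, the overlap of the cyclic orbit, is immediate from its shift-invariance.
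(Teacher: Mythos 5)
Your proof is correct and uses essentially the same ingredients as the paper: the two cyclic orbits $\{X^i(X-1)^{r-1}\}$ and $\{X^i(X-1)^{n-r}\}$ as witnessing sets, the shift-invariance argument to equate column and row weights, and Lemma~\ref{lem:binary_weight} for the row weight. The one place where you diverge --- running separate recursions for $\chain{C}_\textrm{Cyc}$ and $\chain{C}_\textrm{Cyc}^*$ and intersecting only at the end, rather than feeding a common $N,K$ into Corollary~\ref{cor:cor_of_main} --- is a legitimate refinement: the paper asserts that both sets have cardinality $n=2^s$, but as \emph{sets} the orbit of $(X-1)^{n-r}$ has only $2^{\lceil\log_2 r\rceil}$ distinct elements (the stabilizer of $(X-1)^{n-r}$ under multiplication by $X^m$ is $\{m : 2^{v_2(m)}\geq r\}$), which is strictly smaller than $n$ precisely in the regime $r=\sqrt{n}$ of Corollary~\ref{cor:param_QCycl}; naively taking $N=\min(|\Omega|,|\Omega'|)$ and $K=\max$ of the overlaps would then degrade the ratio. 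Your observation that the ratio $|\Omega|/\overlap(\Omega)=n/\big|(X-1)^{n-r}\big|$ is insensitive to this degeneration, combined with the two independent inductions (each valid since the factors are balanced, and legitimately combined at the end since $(\chain{C}^{\otimes\ell})^*\cong(\chain{C}^*)^{\otimes\ell}$), yields the stated bound cleanly. The paper's route can also be repaired --- either by reading $\Omega,\Omega'$ as multisets (the proof of Lemma~\ref{lem:Th} tolerates multiplicities) or by exploiting that Lemma~\ref{lem:Th} is applied separately to the complex and its dual anyway --- but your version is the more careful one.
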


\begin{proof}
To prove the statement, we apply Corollary \ref{cor:cor_of_main} with
\begin{align*}
\Omega & := \left\{X^i (X-1)^{r-1} \mod (X-1)^n \ \big|\ i \in \{0, \ldots, n-1\}\right\} \\
\Omega'& := \left\{X^i (X-1)^{n-r} \mod (X-1)^n \ \big|\ i \in \{0, \ldots, n-1\}\right\}.
\end{align*}
Both sets have cardinality $n = 2^s$, and their respective overlaps are
\[
\overlap (\Omega) = {\rm w}_2(r-1)  \qquad {\rm and} \qquad
\overlap (\Omega') = {\rm w}_2(n-r).
\]
Indeed, stacking the elements of $\Omega$, we obtain a circulant
matrix whose column weight equals the row weight.
But the latter is given by Lemma~\ref{lem:binary_weight}.
This concludes the proof.
\end{proof}

\begin{cor}\label{cor:param_QCycl}
  For $n = 2^s$ and $r = 2^{\frac s 2} = \sqrt{n}$, where $s$ is an
  even integer, the family
  of iterated tensor powers $\QCC(n,r)^{\otimes\ell}$ has parameters
  $$
  \param{\sim\frac{K_n}{\sqrt{\ell}}{\left(n+2n^{\frac{3}{4}}\left(1- \frac{1}{\sqrt{n}}\right)\right)}^\ell}{1}{\geq {\sqrt{n}}^\ell}{\leq \sqrt{n}\ \ell}
  $$
  for some constant $K_n$ depending only on $n$
  and the family
  of iterated reduced tensor powers $\QCC(n,r)^{\otimes\ell}$ has parameters
  $$
  \param{\sim K'_n {\left(n + n^{\frac 3 4}\left(1-\frac{1}{\sqrt{n}}\right) \right)}^\ell}{1}{\geq {\sqrt{n}}^\ell}{\leq \sqrt{n}\ \ell}
  $$
  for some constant $K_n'$ depending only on $n$.
\end{cor}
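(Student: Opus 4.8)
The plan is to specialize the general parameter formulas of Section~\ref{sec:Consequences} and Appendix~\ref{appendix:B} to the reduced chain complex $\chain{C}_\textrm{Cyc}=\bigl(\F^{n-r}\hookrightarrow\F^n\twoheadrightarrow\F^{r-1}\bigr)$ attached to $\QCC(n,r)$, with the numerical choices $n=2^s$ and $r=2^{s/2}$, and to combine them with the minimum distance bound proved in the preceding proposition. In the notation of Corollary~\ref{cor:ParamTensor} this complex has $a=n$, $b_1=n-r$ and $b_2=r-1$; and, as established in the proof of the preceding proposition, $\chain{C}_\textrm{Cyc}$ together with its dual satisfies the hypothesis of Lemma~\ref{lem:Th} with $N=n=2^s$ and $K=2^{s/2}=\sqrt n$ (the common overlap of the circulant sets $\Omega,\Omega'$ used there, equal to the weight of $(X-1)^{r-1}$, resp.\ $(X-1)^{n-r}$, by Lemma~\ref{lem:binary_weight}).

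For the minimum distance and the dimension I would simply invoke the preceding proposition: the binary expansion of $r-1=2^{s/2}-1$ is a run of $s/2$ ones, and that of $n-r=2^{s/2}(2^{s/2}-1)$ is a run of $s/2$ ones followed by $s/2$ zeros, so ${\rm w}_2(r-1)={\rm w}_2(n-r)=s/2$, hence $w=s/2$, which gives $d_{\QCC(n,r)^{\otimes\ell}}\geq 2^{\ell(s-w)}=2^{\ell s/2}=\sqrt n^{\,\ell}$; the dimension stays $1$ by the K\"unneth formula (all factors being balanced with one-dimensional homology in degree zero), and neither value is affected by the reduction process.

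For the lengths I would substitute $a,b_1,b_2$ into the asymptotic formulas. Here $b_1b_2=(2^s-2^{s/2})(2^{s/2}-1)=2^{s/2}(2^{s/2}-1)^2$, so $\sqrt{b_1b_2}=2^{s/4}(2^{s/2}-1)=n^{3/4}\bigl(1-1/\sqrt n\bigr)$. Plugging this into Corollary~\ref{cor:ParamTensor} produces the length $\sim \tfrac{K_n}{\sqrt\ell}\bigl(n+2n^{3/4}(1-1/\sqrt n)\bigr)^\ell$ with $K_n=\tfrac12\sqrt{\bigl(n+2n^{3/4}(1-1/\sqrt n)\bigr)/\bigl(\pi n^{3/4}(1-1/\sqrt n)\bigr)}$, a quantity depending only on $n$; feeding the same $b_1b_2$ into the general reduced-length estimate of Appendix~\ref{appendix:B} (Proposition~\ref{prop:length_reduced_asym}) replaces the base $a+2\sqrt{b_1b_2}$ by $a+\sqrt{b_1b_2}=n+n^{3/4}(1-1/\sqrt n)$ and the prefactor by a constant $K'_n$ depending only on $n$.

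Finally, for the weight I would not use the crude bound $a\ell$ of Corollary~\ref{cor:ParamTensor} but present $\QCC(n,r)$ through the (redundant) circulant parity-check matrices whose rows are the $n$ cyclic shifts of $(X-1)^{r-1}$ and of $(X-1)^{n-r}$ respectively; by Lemma~\ref{lem:binary_weight} both families of rows have weight $2^{{\rm w}_2(r-1)}=2^{{\rm w}_2(n-r)}=\sqrt n$, and removing redundant rows only decreases the weight. An inductive application of Proposition~\ref{prop:ProductParameters} then bounds the degree-zero weight of the $\ell$-fold (reduced) tensor power by $\sqrt n\cdot\ell$. The one mildly delicate point is precisely this bookkeeping of constants in the length statements: one must check that, once all of $a,b_1,b_2$ are substituted, the Stirling-type prefactors coming from the counting arguments of Appendices~\ref{appendix:A} and~\ref{appendix:B} genuinely collapse to quantities independent of $\ell$; everything else is a direct substitution into results already proved.
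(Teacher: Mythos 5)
Your treatment of the dimension, the minimum distance and the two length estimates is correct and follows essentially the same route as the paper: compute from the binary expansions that ${\rm w}_2(r-1)={\rm w}_2(n-r)=\frac{s}{2}$, deduce the distance bound $\geq\sqrt{n}^{\,\ell}$ from the preceding proposition (i.e.\ from Corollary \ref{cor:cor_of_main} with $N=n$ and overlap $K=2^{s/2}$), keep dimension $1$ by K\"unneth, and substitute $a=n$, $b=n-r$, $b'=r-1$, hence $\sqrt{bb'}=n^{3/4}\bigl(1-\tfrac{1}{\sqrt n}\bigr)$, into Corollary \ref{cor:ParamTensor} and Proposition \ref{prop:length_reduced_asym} for the two lengths.

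The gap is in your weight argument. The circulant matrices whose rows are the $n$ cyclic shifts of $(X-1)^{r-1}$ and of $(X-1)^{n-r}$ are \emph{not} parity--check matrices of $\QCC(n,r)$: the rows of $\mathbf{H}_Z$ must span $\code{C}_2=\code{Cyc}\big((X-1)^{r}\big)$ and the rows of $\mathbf{H}_X$ must span $\code{C}_1=\code{Cyc}\big((X-1)^{r-1}\big)^{\perp}=\code{Cyc}\big((X-1)^{n-r+1}\big)$, whereas your shifts span the strictly larger codes $\code{C}_1^{\perp}$ and $\code{C}_2^{\perp}$. In fact the two families you propose are not even mutually orthogonal, since $(X-1)^{r-1}\notin\code{Cyc}\big((X-1)^{n-r}\big)^{\perp}=\code{Cyc}\big((X-1)^{r}\big)$, so they do not satisfy $\mathbf{H}_X\,{}^t\mathbf{H}_Z=0$ and do not describe any CSS code, let alone $\QCC(n,r)$. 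What you wrote down are precisely the (co)homologically non trivial representatives used for $\Omega$ and $\Omega'$ in the distance bound --- logical operators, not stabilizers. The genuine stabilizer rows are shifts of $(X-1)^{r}$, of weight $2$, and of $(X-1)^{n-r+1}$, of weight $2^{{\rm w}_2(n-r+1)}=2\sqrt n$; and no alternative presentation can do better than $2\sqrt{n}$ here --- already for $s=2$ the code $\code{C}_1$ is one-dimensional, spanned by the all-ones word of weight $4=n>\sqrt n$. Consequently your bookkeeping through Proposition \ref{prop:ProductParameters} gives a row weight of order $\sqrt n\,\ell$ only up to a constant factor (which suffices for the logarithmic-LDPC conclusion, and is in line with the crude bound $a\ell=n\ell$ of Corollary \ref{cor:ParamTensor}, the only bound the paper itself invokes for this entry), but it does not establish the literal constant $\sqrt n\,\ell$ as you claim. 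The fix is simply to bound the weight with the true generators $(X-1)^{r}$ and $(X-1)^{n-r+1}$ and accept the corresponding constant.
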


\begin{proof}
  We have $r-1 = 1+2+\cdots + 2^{\frac s 2 - 1}$
  and $n - r = 2^s - 2^{\frac s 2} = 2^{\frac s 2}(2^{\frac s 2} - 1) = 2^{\frac s 2}
  (1+2+ \cdots + 2^{\frac s 2 -1})$.
  It follows that ${\rm w}_2 (r-1) = {\rm w}_2 (n-r) = \frac s 2$. The
  minimum distance is thus a consequence of
  Corollary~\ref{cor:cor_of_main} and the length a consequence of Corollary~\ref{cor:ParamTensor}
  and Proposition~\ref{prop:length_reduced_asym}.
\end{proof}


\subsection{Quantum Reed--Muller codes}
\label{sec:QRM}
In this section, we define a two-parameters family of CSS codes based
on classical Reed--Muller codes. Such CSS codes have been studied in
\cite{Zhang}. Another construction of stabilizer codes (which are not
CSS) based on Reed--Muller codes was also proposed by Steane in
\cite{SteaneRM}.

To this end, we define, for every $r\in\N^*$ and $s\in\Inter{0}{r}$:
\begin{itemize}
\item $\Pol_r:=\fract{\F[X_1,\ldots,X_r]}/{(X_1^2-X_1,\ldots, X_r^2-X_r)}$ given with
  the basis $\big\{X_I:=\disp{\prod_{i\in I}}X_i\ \big|\ I\subset\Inter{1}{r}\big\}$;
\item $\Pol_{r,s}$ the restriction of $\Pol_r$ to elements of
  degree\footnote{defined by $\deg(X_I)=|I|$, with the convention
    that $0$ has degree $-\infty$} at most $s$;
\item $\phi_r:\Pol_r\hookrightarrow \F^{2^r}$ the map which
  sends a polynomial $P$ to $\big(P(x)\big)_{x\in\F^r}$;
\item $\phi_{r,s}$ the restriction of $\phi_r$ to $\Pol_{r,s}$.
\end{itemize}
\begin{defi}
  The Reed--Muller code $\RM(r,s)$ is the classical code with
  generating map $\phi_{r,s}$.
\end{defi}

\begin{prop}[{\cite[Theorem 13.4]{SloaneMcWilliams}}]\label{prop:RM}
For every $r\in\N^*$ and $s\in\Inter{0,}{r}$,
$\RM(r,s)^\perp=\RM(r,r-s-1)$.
\end{prop}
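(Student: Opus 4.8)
The plan is to prove the equality by combining one inclusion with a dimension count. Concretely, I would show that $\RM(r,r-s-1) \subseteq \RM(r,s)^\perp$ and that $\dim \RM(r,s) + \dim \RM(r,r-s-1) = 2^r$, the common length $n_{\RM(r,s)}$. Since $\dim \RM(r,s)^\perp = 2^r - \dim \RM(r,s)$ by the generalities of Section~\ref{sec:ClassicalCodes}, the inclusion and the dimension count together force $\RM(r,r-s-1) = \RM(r,s)^\perp$.

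For the inclusion it suffices to check orthogonality on spanning sets, since $\RM(r,s)$ and $\RM(r,r-s-1)$ are spanned, respectively, by the $\phi_r(X_I)$ with $|I| \leq s$ and by the $\phi_r(X_J)$ with $|J| \leq r-s-1$. The heart of the matter is the computation
\[
\langle \phi_r(X_I), \phi_r(X_J) \rangle = \sum_{x \in \F^r} X_I(x)\, X_J(x) = \sum_{x \in \F^r} X_{I \cup J}(x),
\]
where the second equality uses $x_i^2 = x_i$ for $x_i \in \F$, so that the product of the two monomials is the monomial indexed by $I \cup J$. The last sum counts the points $x \in \F^r$ with $x_i = 1$ for every $i \in I \cup J$, namely $2^{\,r - |I \cup J|}$, and this is even precisely when $|I \cup J| < r$. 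Since $|I \cup J| \leq |I| + |J| \leq s + (r-s-1) = r-1$, this holds, so the inner product vanishes in $\F$. This strict inequality is the only place where the ``$-1$'' in $r-s-1$ enters, and it is the step I would be most careful to state correctly; everything else is bookkeeping.

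For the dimensions, recall that $\phi_r$ is injective (indeed bijective, as $\dim \Pol_r = 2^r = \dim \F^{2^r}$ and the $X_I$ form a basis), hence $\phi_{r,s}$ is injective and $\dim \RM(r,s) = \dim \Pol_{r,s} = \#\{I \subseteq \Inter{1}{r} : |I| \leq s\} = \sum_{i=0}^{s}\binom{r}{i}$. Likewise $\dim \RM(r,r-s-1) = \sum_{i=0}^{r-s-1}\binom{r}{i} = \sum_{j=s+1}^{r}\binom{r}{j}$, using $\binom{r}{i} = \binom{r}{r-i}$. Adding the two sums gives $\sum_{i=0}^{r}\binom{r}{i} = 2^r$, which is exactly what is needed; combined with the inclusion above, this completes the proof. (The boundary cases $s \in \{0,r\}$ require no special treatment, with the conventions that the empty sum is $0$ and that there is no monomial of negative degree.)
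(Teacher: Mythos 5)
Your proof is correct and is the standard argument for Reed--Muller duality (essentially the one in MacWilliams--Sloane, Theorem 13.4, which the paper simply cites without proof): orthogonality of the spanning monomial evaluations via $\sum_{x}X_{I\cup J}(x)=2^{r-|I\cup J|}\equiv 0 \bmod 2$ whenever $|I|+|J|\leq r-1$, combined with the dimension count $\sum_{i=0}^{s}\binom{r}{i}+\sum_{i=0}^{r-s-1}\binom{r}{i}=2^r$. All steps, including the injectivity of $\phi_r$ and the boundary cases, are handled correctly.
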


\begin{defi}
  For every $r\in\N^*$, we define the quantum Reed--Muller code $\QRM(r)$
  as the CSS code associated to
  \[\chain{C}_{\textrm{RM}}(r):=\xymatrix@!0@C=2.5cm{\Pol_{2r,r-1}\ar@{^(->}[r]^-{\phi_{2r,r-1}}&\F^{4^r}\ar@{->>}[r]^-{\phi^*_{2r,r-1}}&\Pol_{2r,r-1}}.
\]
\end{defi}

\begin{prop}
  For every $r\in\N^*$, the family of iterated tensor powers
  $\QRM(r)^{\otimes\ell}$ has parameters
\[
\param{\sim\sqrt{\frac{2^{2r+1}-{2r\choose
        r}}{2\pi\ell\left(4^r-{2r\choose r}\right)}}\left(2^{2r+1}-{2r
      \choose r}\right)^\ell}{{2r\choose r}^\ell}{2^{r\ell}}{\leq 4^r\ell}
\]
and the family of iterated reduced tensor powers $\QRM(r)^{\rotimes\ell}$ has parameters
\[
\param{\frac{2\left(\frac{3.4^r-{2r\choose
          r}}{2}\right)^\ell+{2r\choose r}^\ell}{3}}{{2r\choose
    r}^\ell}{2^{r\ell}}{\leq 4^r\ell}.
\]
\end{prop}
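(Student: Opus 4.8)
The plan is to identify the chain complex underlying $\QRM(r)$, read off the integers $a,b_1,b_2$ governing Corollaries~\ref{cor:ParamTensor} and~\ref{cor:ParamTensorReduc}, and then control the minimum distance from both sides; everything else is arithmetic.

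\textbf{Identifying the code and reading off $n,k,w$.} By Proposition~\ref{prop:RM}, $\Im(\phi_{2r,r-1})^{\perp}=\RM(2r,r-1)^{\perp}=\RM(2r,r)$ and $\RM(2r,r)^{\perp}=\RM(2r,r-1)$, so $\chain{C}_{\textrm{RM}}(r)$ is a reduced and symmetric short complex with $\code{C}_1=\code{C}_2=\RM(2r,r-1)$ and $\code{C}_1^{\perp}=\code{C}_2^{\perp}=\RM(2r,r)$. In the notation of Corollary~\ref{cor:ParamTensor} this means $a=4^r$ and $b_1=b_2=b:=\dim\Pol_{2r,r-1}=\sum_{j=0}^{r-1}{2r\choose j}=\frac{4^r-{2r\choose r}}{2}$, the last equality coming from $\sum_{j=0}^{2r}{2r\choose j}=4^r$ and the symmetry of binomial coefficients. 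Hence $a-b_1-b_2={2r\choose r}$, $b_1,b_2\leq a$, and $\sqrt{b_1b_2}=b$, so $a+2\sqrt{b_1b_2}=2^{2r+1}-{2r\choose r}$, $a+b=\frac{3\cdot 4^r-{2r\choose r}}{2}$ and $a-2b={2r\choose r}$. Substituting these into the length, dimension and weight formulas of Corollaries~\ref{cor:ParamTensor} and~\ref{cor:ParamTensorReduc} (and absorbing the factor $\tfrac12\sqrt{2}$ into the square root for the tensor powers) gives every claimed parameter except the minimum distance.

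\textbf{Lower bound on $d$.} I apply Theorem~\ref{thm:main} with $N=2^r$ and $K=1$. As a complement basis take $g_S:=\phi_{2r,r}(X_S)\in\RM(2r,r)$ for $S\subseteq\Inter{1}{2r}$ with $|S|=r$; since $\{X_I:|I|\leq r-1\}\cup\{X_S:|S|=r\}$ is a basis of $\Pol_{2r,r}$ and $\phi_{2r,r}$ is injective, $\RM(2r,r)=\RM(2r,r-1)\oplus\Span(g_S:|S|=r)$. A direct count gives $\langle\phi_{2r,r}(X_I),\phi_{2r,r}(X_J)\rangle=|\{x:I\cup J\subseteq\mathrm{supp}(x)\}|=2^{2r-|I\cup J|}\bmod 2$, which for $|I|=|J|=r$ equals $1$ exactly when $J=I^c$; hence the dual basis is $g_S^*=g_{S^c}$. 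For any $S_0$ and any $a\in\F^{S_0}$ let $\mathbf{1}_{S_0,a}$ be the indicator vector of the affine $r$-flat $\{x\in\F^{2r}:x_i=a_i\ \forall i\in S_0\}$. Expanding $\mathbf{1}_{S_0,a}=\prod_{i\in S_0}(1+X_i+a_i)$ shows its top-degree part is $X_{S_0}$ and the remainder has degree $<r$, so $\mathbf{1}_{S_0,a}\in g_{S_0}+\RM(2r,r-1)=g_{S_0}+\code{C}_2$; moreover the $2^r$ flats obtained as $a$ ranges over $\F^{S_0}$ partition $\F^{2r}$, so $\Omega'_{S_0}:=\{\mathbf{1}_{S_0,a}:a\in\F^{S_0}\}$ has cardinality $2^r$ and $\overlap(\Omega'_{S_0})=1$. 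Replacing $S_0$ by $S_0^c$ gives $\Omega_{S_0}\subseteq g_{S_0^c}+\RM(2r,r-1)=g_{S_0}^*+\code{C}_1$ with the same cardinality and overlap. Theorem~\ref{thm:main} then yields $d_{\QRM(r)\otimes\qcode{D}}\geq 2^r d_{\qcode{D}}$ for every CSS code $\qcode{D}$; iterating this (equivalently, invoking Corollary~\ref{cor:ParamTensor} with $N/K=2^r$, using that the complex is symmetric so that $\chain{C}_{\textrm{RM}}(r)$ and its dual both satisfy Lemma~\ref{lem:Th}) gives $d_{\QRM(r)^{\otimes\ell}}\geq 2^{r\ell}$, and Corollary~\ref{cor:ParamTensorReduc} gives the same bound for $\QRM(r)^{\rotimes\ell}$.

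\textbf{Matching upper bound and conclusion.} Since $\RM(2r,r-1)$ has minimum weight $2^{r+1}$ while $\RM(2r,r)$ has minimum weight $2^r$, any weight-$2^r$ codeword of $\RM(2r,r)$ (e.g. the indicator of an affine $r$-flat) lies in $\RM(2r,r)\setminus\RM(2r,r-1)$, so $H_0(\chain{C}_{\textrm{RM}}(r))=\RM(2r,r)/\RM(2r,r-1)$ has a nonzero class of weight exactly $2^r$; by symmetry of the complex, $d_{\QRM(r)}=2^r$. Tensoring such a minimal representative with itself $\ell$ times, Proposition~\ref{prop:TechKunneth} shows its class is nonzero in $H_0(\chain{C}_{\textrm{RM}}(r)^{\otimes\ell})$ while its weight is $2^{r\ell}$; hence the CSS distance of $\QRM(r)^{\otimes\ell}$, being at most the degree-$0$ minimum distance of the truncated complex, is at most $2^{r\ell}$. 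This forces $d_{\QRM(r)^{\otimes\ell}}=2^{r\ell}$, and since the reduction process of Section~\ref{sec:Reduction} leaves the degree-$0$ part untouched, $d_{\QRM(r)^{\rotimes\ell}}=2^{r\ell}$ as well. The only step requiring genuine care is the dual-basis computation together with the verification that the families of parallel affine $r$-flats have overlap exactly $1$; once $\QRM(r)$ is recognized as the symmetric CSS code attached to $\RM(2r,r-1)\subseteq\RM(2r,r)$, the rest is bookkeeping built on Corollaries~\ref{cor:ParamTensor} and~\ref{cor:ParamTensorReduc}.
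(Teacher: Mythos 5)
Your proof is correct and follows essentially the same route as the paper: identify $\QRM(r)$ as the symmetric CSS code attached to $\RM(2r,r-1)\subseteq\RM(2r,r)$, compute the pairing between degree-$r$ monomial evaluations to get the dual generators, take as $\Omega$ the $2^r$ indicators of parallel affine $r$-flats (the paper's $\phi_{2r}\big(\prod_{i\in I_0^c}(X_i+\e_i)\big)$), apply the criterion with $N=2^r$, $K=1$ together with Corollaries \ref{cor:ParamTensor} and \ref{cor:ParamTensorReduc}, and exhibit a weight-$2^r$ homologically nontrivial element for sharpness. Your upper-bound step via Proposition \ref{prop:TechKunneth} is just a slightly more explicit version of the paper's closing remark, so no substantive difference.
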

\begin{proof}
  It follows from Proposition \ref{prop:RM} that
  $\Ker(\phi^*_{2r,r-1})=\RM(2r,r)$. But, on the other hand,
  $\Im(\phi_{2r,r-1})=\RM(2r,r-1)$; the homology of $\chain{C}_{\textrm{RM}}(r)$ is
  hence generated by the images through $\phi_{2r}$ of the elements of $\Pol_{2r}$ which are of
  degree exactly $r$.

  Now, let us consider such a generator $\phi_{2r}(X_{I_0})$, with
  $I_0\subset\Inter{1}{2r}$ of cardinality $r$; and set $I^c_0:=\Inter{1}{2r}\setminus I_0$. 
  For every $I,J\subset\Inter{1}{2r}$,
  $\big\langle\phi_{2r}(X_I),\phi_{2r}(X_J)\big\rangle=\disp{\sum_{x\in\F^{4^r}}}X_IX_J(x)=1$
  if and only if $I\cup J=\Inter{1}{2r}$.
  Using notation from Section \ref{sec:Th}, it follows then that 
\[
\Ker^\cperp_{\phi_{2r}(X_{I_0})}=\phi_{2r}\big(X_{I^c_0}\big)+\RM(2r,r)^\perp=\phi_{2r}\big(X_{I^c_0}+\Pol_{2r,r-1}\big).
\]
Since $\QRM(r)$ is symmetric, it is now sufficient to apply Corollary \ref{cor:ParamTensor} or \ref{cor:ParamTensorReduc} with
 \[
\Omega_{\phi_{2r}(X_{I_0})}:=\Big\{\phi_{2r}\Big(\disp{\prod_{i\in I^c_0}(X_i+\e_i)}\Big)\ \Big|\
  \forall i\in I^c_0,\e_i\in\F\Big\}.
\]
 The elements of $\Omega_{\phi_{2r}(X_{I_0})}$ have indeed disjoint support:
  the $x$-th coordinate of $\phi_{2r}\Big(\disp{\prod_{i\in I^c_0}(X_i+\e_i)}\Big)$,
  where $x=(x_i)_{i\in\Inter{1}{2r}}\in\F^{2r}$, is 1 if and only if $x_i=1-\e_i$ for every
  $i\in I^c_0$.

  On the other hand, $\phi_{2r}(X_{\Inter{1}{r}})$ is an element of weight $2^r$
  which survives in homology.
\end{proof}

By extracting the diagonal subfamily $\ell=r$, it follows from
Stirling series that we obtain an
$r$--indexed family with parameters
\[
\param{\sim\frac{\left(\frac{3}{2}\right)^{r-1}4^{r^2}}{e^{\frac{1}{9\pi}}e^{\frac{1}{3}\sqrt{\frac{r}{\pi}}}}}{\sim\frac{4^{r^2}}{e^{\frac{1}{8}}{\sqrt{\pi r}}^r}}{2^{r^2}}{\leq4^rr}.
\]
The family is not, {\it stricto sensu}, logarithmically LDPC, but the
weight grows slower than any positive power of the length, the dimension
faster than any ``$<1$''--power of the length, and the minimum distance
faster than any ``$<\frac{1}{2}$''--power of the length.




\bibliographystyle{amsalpha}
\bibliography{OnProductCodes}

\providecommand{\bysame}{\leavevmode\hbox to3em{\hrulefill}\thinspace}
\providecommand{\MR}{\relax\ifhmode\unskip\space\fi MR }
\providecommand{\MRhref}[2]{%
  \href{http://www.ams.org/mathscinet-getitem?mr=#1}{#2}
}
\providecommand{\href}[2]{#2}
\begin{thebibliography}{BMD07}

\bibitem[Aud14]{Audoux}
Benjamin Audoux, \emph{{An application of Khovanov homology to quantum
  codes.}}, {Ann. Inst. Henri Poincar\'e D, Comb. Phys. Interact.} \textbf{1}
  (2014), no.~2, 185--223.

\bibitem[BH13]{Bravyi}
Sergey Bravyi and Matthew~B. Hastings, \emph{Homological product codes},
  \textsf{arXiv:1311.0885}, 2013.

\bibitem[BH14]{Hastings}
\bysame, \emph{Homological product codes}, {Proceedings of the 46th annual ACM
  symposium on theory of computing, STOC '14. New York, NY, USA, May 31-- June
  04, 2014}, {ACM}, 2014, pp.~273--282.

\bibitem[BMD07]{Bombin}
Hector Bombin and Miguel Martin-Delgado, \emph{Homological error correction:
  classical and quantum codes}, J. Math. Phys. \textbf{48} (2007), no.~5,
  052105.

\bibitem[CDZ13]{Couvreur}
Alain Couvreur, Nicolas Delfosse, and Gilles Z{\'e}mor, \emph{A construction of
  quantum {LDPC} codes from {C}ayley graphs}, IEEE Trans. Inform. Theory
  \textbf{59} (2013), no.~9, 6087--6098.

\bibitem[CS96]{Calderbank}
Robert Calderbank and Peters~W. Shor, \emph{Good quantum error-correcting codes
  exist}, Phys. Rev. A \textbf{54:1098} (1996), 1098--1105.

\bibitem[Del12]{Delfosse}
Nicolas Delfosse, \emph{Constructions et performances de codes {LDPC}
  quantiques}, Ph.D. thesis, University of Bordeaux, 2012.

\bibitem[Del13]{Nico}
\bysame, \emph{Tradeoffs for reliable quantum information storage in surface
  codes and color codes}, Information Theory Proceedings (ISIT), 2013 IEEE
  International Symposium on, 2013, pp.~917--921.

\bibitem[Far12]{Farinholt}
Jacob Farinholt, \emph{Quantum {LDPC} codes constructed from point-line subsets
  of the finite projective plane}, ArXiv:1207.0732, 2012.

\bibitem[FH14]{Freedman}
Michael~H. Freedman and Matthew~B. Hastings, \emph{Quantum systems on
  non-$k$-hyperfinite complexes: A generalization of classical statistical
  mechanics on expander graphs}, Quantum Inf. Comput. \textbf{14} (2014),
  no.~1--2, 144--180.

\bibitem[FM01]{Meyer}
Michael~H. Freedman and David~A. Meyer, \emph{Projective plane and planar
  quantum codes}, Found. Comput. Math. \textbf{1} (2001), no.~3, 325--332.

\bibitem[FML02]{FreedmanLuo}
Michael~H. Freedman, David~A. Meyer, and Feng Luo, \emph{{$Z_2$}-systolic
  freedom and quantum codes}, Mathematics of quantum computation, Comput. Math.
  Ser., Chapman \& Hall/CRC, Boca Raton, FL, 2002, pp.~287--320.

\bibitem[Gal62]{Gallager}
Robert~G. Gallager, \emph{Low density parity check codes}, Ph.D. thesis,
  Massachussetts Institute of Technology, 1962.

\bibitem[Kit03]{Kitaev}
Alexey~Yu. Kitaev, \emph{Fault-tolerant quantum computation by anyons}, Ann.
  Physics \textbf{303} (2003), no.~1, 2--30.

\bibitem[KLF01]{KouLinFossorier}
Yu~Kou, Shu Lin, and Marc P.~C. Fossorier, \emph{Low-density parity-check codes
  based on finite geometries: a rediscovery and new results}, IEEE Trans.
  Inform. Theory \textbf{47} (2001), no.~7, 2711--2736.

\bibitem[LTZ15]{LTZ}
A.~Leverrier, J.~Tillich, and G.~Zémor, \emph{Quantum expander codes}, 2015
  IEEE 56th Annual Symposium on Foundations of Computer Science, Oct 2015,
  pp.~810--824.

\bibitem[MS77]{SloaneMcWilliams}
Florence~J. MacWilliams and Neil J.~A. Sloane, \emph{The theory of
  error-correcting codes. {I}}, North-Holland Publishing Co., Amsterdam, 1977,
  North-Holland Mathematical Library, Vol. 16.

\bibitem[NC10]{Nielsen}
Michael~A. Nielsen and Isaac~L. Chuang, \emph{Quantum computation and quantum
  information}, 10th anniversary editions ed., Cambridge University Press.
  xxxi, Cambridge, 2010.

\bibitem[Pre]{Preskill}
John Preskill, \emph{Quantum information and computation},
  \url{http://www.theory.caltech.edu/people/preskill/ph229/}.

\bibitem[RSU01]{RiUrSh}
Thomas~J. Richardson, M.~Amin Shokrollahi, and R\"udiger~L. Urbanke,
  \emph{Design of capacity-approaching irregular low-density parity-check
  codes}, IEEE Trans. Inform. Theory \textbf{47} (2001), 619--637.

\bibitem[Smi69]{Smith}
K.~J.~C. Smith, \emph{On the $p$--rank of the incidence matrix of points and
  hyperplanes in a finite projective geometry}, J. Combin. Theory \textbf{7}
  (1969), no.~2, 122--129.

\bibitem[Ste96]{Steane}
Andrew Steane, \emph{Multiple particle interference and quantum error
  correction}, Proc. Roy. Soc. Lond. A \textbf{452:2551} (1996), 2551--2577.

\bibitem[Ste99]{SteaneRM}
\bysame, \emph{Quantum reed-muller codes}, IEEE Trans. Inform. Theory
  \textbf{45} (1999), no.~5, 1701--1703.

\bibitem[TZ14]{TZ}
Jean-Pierre Tillich and Gilles Z{\'e}mor, \emph{Quantum {LDPC} codes with
  positive rate and minimum distance proportional to the square root of the
  blocklength}, IEEE Trans. Inform. Theory \textbf{60} (2014), no.~2,
  1193--1202.

\bibitem[Wei94]{Weibel}
Charles~A. Weibel, \emph{An introduction to homological algebra}, Cambridge
  Studies in Advanced Mathematics, vol.~38, Cambridge University Press,
  Cambridge, 1994.

\bibitem[Z{\'e}m09]{Zemor}
Gilles Z{\'e}mor, \emph{On {C}ayley graphs, surface codes, and the limits of
  homological coding for quantum error correction}, Coding and cryptology,
  Lecture Notes in Comput. Sci., vol. 5557, Springer, Berlin, 2009,
  pp.~259--273.

\bibitem[ZF97]{Zhang}
Lin Zhang and Ian Fuss, \emph{Quantum reed-muller codes},
  ArXiv:quant-ph/9703045, 1997.

\end{thebibliography}
\addcontentsline{toc}{part}{Bibliography}

\appendix

\section{Length of tensor powers}
\label{appendix:A}
In this appendix, we prove the length part of Corollary
\ref{cor:ParamTensor}. Using inductively Proposition
\ref{prop:ProductParameters}, it is easily seen that the length of
$\qcode{C}^{\otimes\ell}$ is equal to the constant term in
the Laurent polynomial
$(bt^{-1}+a+b't)^\ell$. The statement is hence a consequence of
Proposition \ref{prop:ConstantTerm} given below. But before proving
it, we set some technical lemmata.

\begin{defi}
  For every $x>0$ and every $\e=\pm1$, we define\footnote{in order to
    avoid heavy notation, the
    dependence on $x$ shall be ommited}
  $y_\e=\frac{1+2x+\e\sqrt{1+4x}}{2x}$.\\
  Moreover, for every $\ell\in\N$, we define
  $T_\ell(x):=\psum_{r=0}^\ell{\ell\choose r}{2r\choose r}x^r$ and $P^\e_\ell(x):=x^\ell y_\e^\ell\psum_{r=0}^\ell{\ell\choose
    r}^2y_\e^{-2r}$.
\end{defi}
It is directly checked that $y_++y_-=\frac{1+2x}{x}$,
$y_+-y_-=\frac{\sqrt{1+4x}}{x}$ and $y_+y_-=1$. It follows from the
latter equality that $P_\ell^+(x)=P_\ell^-(x)=x^\ell y_+^\ell\psum_{r=0}^\ell{\ell\choose
    r}^2y_-^{2r}$. Moreover, it can be straightforwardly computed that:
  \begin{lemme}\label{lem:Comb}
    For every $\ell,r\in\Z$,
    \begin{itemize}
    \item $(2\ell+1){\ell\choose r}{2r\choose r}-\ell{\ell-1\choose
        r}{2r\choose r}+2(2\ell+1){\ell\choose r-1}{2r-2\choose
        r-1}-4\ell{\ell-1\choose r-1}{2r-2\choose
        r-1}=(\ell+1){\ell+1\choose r}{2r\choose r}$;\\
    \item $(2\ell+1)\left({\ell\choose r}^2+{\ell\choose
          r-1}^2\right)-\ell\left({\ell-1\choose r}^2-2{\ell-1\choose
          r-1}^2+{\ell-1\choose r-2}^2\right)=(\ell+1){\ell+1\choose r}^2$;
    \end{itemize}
with the convention that ${\ell\choose r}=0$ whenever $\ell < 0$
or $r\notin\Inter{0}{\ell}$.
  \end{lemme}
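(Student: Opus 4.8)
The plan is to derive both identities by repeated use of just two elementary facts about binomial coefficients — the absorption rule $k\binom{n}{k}=n\binom{n-1}{k-1}$ (equivalently $(n-k)\binom{n}{k}=n\binom{n-1}{k}$) and Pascal's rule $\binom{n}{k}=\binom{n-1}{k}+\binom{n-1}{k-1}$ — together with the special relation $r\binom{2r}{r}=2(2r-1)\binom{2r-2}{r-1}$, which is itself immediate from $\binom{2r}{r}/\binom{2r-2}{r-1}=2(2r-1)/r$. No genuine computation should be needed: each identity collapses onto one of these rules. Since the stated convention forces every binomial with a negative upper index, or a lower index outside $\{0,\ldots,\ell\}$, to vanish, I would first dispose of the degenerate case $\ell\leq 0$ (where both sides vanish, up to a trivial check of the form $2\ell+1-\ell=\ell+1$ at small values), and then argue on $\ell\geq 1$, where the displayed manipulations below are legitimate for every $r$ (both sides vanishing once $r>\ell+1$).

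For the first identity I would apply $\ell\binom{\ell-1}{r}=(\ell-r)\binom{\ell}{r}$ and $\ell\binom{\ell-1}{r-1}=(\ell-r+1)\binom{\ell}{r-1}$ to the two terms carrying a factor $\binom{\ell-1}{\cdot}$; the left-hand side becomes
\[
(\ell+r+1)\binom{\ell}{r}\binom{2r}{r}+2(2r-1)\binom{\ell}{r-1}\binom{2r-2}{r-1}.
\]
Using $2(2r-1)\binom{2r-2}{r-1}=r\binom{2r}{r}$ to pull out the common factor $\binom{2r}{r}$, and expanding $\binom{\ell+1}{r}=\binom{\ell}{r}+\binom{\ell}{r-1}$ on the right-hand side, the asserted equality becomes equivalent to $r\binom{\ell}{r}=(\ell-r+1)\binom{\ell}{r-1}$, which is the absorption rule once more.

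For the second identity I would use Pascal to write $\binom{\ell-1}{r}=\binom{\ell}{r}-\binom{\ell-1}{r-1}$ and $\binom{\ell-1}{r-2}=\binom{\ell}{r-1}-\binom{\ell-1}{r-1}$, so that, after one more application of Pascal ($\binom{\ell}{r}+\binom{\ell}{r-1}=\binom{\ell+1}{r}$),
\[
\binom{\ell-1}{r}^2-2\binom{\ell-1}{r-1}^2+\binom{\ell-1}{r-2}^2=\binom{\ell}{r}^2+\binom{\ell}{r-1}^2-2\binom{\ell-1}{r-1}\binom{\ell+1}{r}.
\]
Substituting this, the left-hand side becomes $(\ell+1)\big(\binom{\ell}{r}^2+\binom{\ell}{r-1}^2\big)+2\ell\binom{\ell-1}{r-1}\binom{\ell+1}{r}$, while the right-hand side expands to $(\ell+1)\big(\binom{\ell}{r}^2+2\binom{\ell}{r}\binom{\ell}{r-1}+\binom{\ell}{r-1}^2\big)$; hence the identity reduces to $\ell\binom{\ell-1}{r-1}\binom{\ell+1}{r}=(\ell+1)\binom{\ell}{r}\binom{\ell}{r-1}$, which follows from the two absorption identities $\ell\binom{\ell-1}{r-1}=r\binom{\ell}{r}$ and $r\binom{\ell+1}{r}=(\ell+1)\binom{\ell}{r-1}$.

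I do not expect a real conceptual obstacle here; the only care needed is bookkeeping — correctly merging the two distinct central-binomial factors $\binom{2r}{r}$ and $\binom{2r-2}{r-1}$ in the first identity, and keeping track of the vanishing convention at the boundary values of $r$ and $\ell$. As a sanity check one can note that the two bullets are exactly the coefficient-of-$x^r$ (resp. coefficient-of-$y_-^{2r}$) forms of the three-term recurrences $(\ell+1)T_{\ell+1}(x)=(2\ell+1)(1+2x)T_\ell(x)-\ell(1+4x)T_{\ell-1}(x)$ and its analogue for $\sum_r\binom{\ell}{r}^2 z^r$, which is how they will be fed into the proof of Proposition~\ref{prop:ConstantTerm}.
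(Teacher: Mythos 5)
Your proof is correct: the reductions via the absorption rule, Pascal's rule, and the relation $r\binom{2r}{r}=2(2r-1)\binom{2r-2}{r-1}$ all check out, and these identities do remain valid under the stated vanishing convention at the boundary values of $r$ (including $r=\ell+1$) once $\ell\leq 0$ is handled separately as you indicate. The paper itself offers no proof (the lemma is prefaced by ``it can be straightforwardly computed that''), so your argument simply supplies, in a clean and organized way, exactly the direct verification the authors intended.
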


\begin{lemme}\label{lem:T}
  For every $\ell\in\N$ and $x\geq0$, $T_\ell(x)=P^+_\ell(x)=P^-_\ell(x)$.
\end{lemme}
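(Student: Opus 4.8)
The plan is to show that $T_\ell$, $P^+_\ell$ and $P^-_\ell$ all satisfy one and the same second order linear recurrence in $\ell$, with the same first two terms, and to conclude by induction. Precisely, I would prove that for $x>0$ each $f_\ell\in\{T_\ell,P^+_\ell,P^-_\ell\}$ satisfies
\[
(\ell+1)\,f_{\ell+1}(x)=(2\ell+1)(1+2x)\,f_\ell(x)-\ell(1+4x)\,f_{\ell-1}(x),\qquad f_0(x)=1,\quad f_1(x)=1+2x;
\]
since $T_\ell$ is a polynomial and each $P^\e_\ell$ extends continuously to $x=0$ with value $1$, the statement at $x=0$ follows from the case $x>0$.

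To obtain this recurrence for $T_\ell$, multiply the first identity of Lemma~\ref{lem:Comb} by $x^r$ and sum over $r\in\Z$ (using the stated binomial conventions, so all sums are finite). After the evident shifts of summation index, the four terms on the left become $(2\ell+1)T_\ell(x)$, $-\ell\,T_{\ell-1}(x)$, $2(2\ell+1)x\,T_\ell(x)$ and $-4\ell x\,T_{\ell-1}(x)$ — the last two because $\sum_r\binom{\ell}{r-1}\binom{2r-2}{r-1}x^r=x\sum_s\binom{\ell}{s}\binom{2s}{s}x^s$ and similarly for the $\binom{\ell-1}{r-1}$ sum — while the right-hand side becomes $(\ell+1)T_{\ell+1}(x)$; regrouping yields the displayed recurrence.

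For $P^\e_\ell$ (either sign), set $u:=y_\e$ and $Q_\ell:=\sum_r\binom{\ell}{r}^2u^{-2r}$, so that $P^\e_\ell=(xu)^\ell Q_\ell$. Multiplying the second identity of Lemma~\ref{lem:Comb} by $u^{-2r}$ and summing over $r$ — the shifts $r\mapsto r-1$ and $r\mapsto r-2$ producing factors $u^{-2}$ and $u^{-4}$ — yields
\[
(\ell+1)\,Q_{\ell+1}=(2\ell+1)(1+u^{-2})\,Q_\ell-\ell\,(1-u^{-2})^2\,Q_{\ell-1}.
\]
Substituting $Q_\ell=(xu)^{-\ell}P^\e_\ell$ and multiplying through by $(xu)^{\ell+1}$ turns this into $(\ell+1)P^\e_{\ell+1}=(2\ell+1)\,x(u+u^{-1})\,P^\e_\ell-\ell\,x^2(u-u^{-1})^2\,P^\e_{\ell-1}$. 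Since $y_+y_-=1$ gives $u^{-1}=y_{-\e}$, the identities $y_++y_-=\frac{1+2x}{x}$ and $y_+-y_-=\frac{\sqrt{1+4x}}{x}$ noted after the definition give $x(u+u^{-1})=1+2x$ and $x^2(u-u^{-1})^2=1+4x$, so $P^\e_\ell$ obeys exactly the same recurrence.

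Finally I would check the base cases: $T_0=1$, $T_1=1+2x$ by definition, and $P^\e_0=1$, $P^\e_1=xu(1+u^{-2})=x(u+u^{-1})=1+2x$; a strong induction on $\ell$ then yields $T_\ell=P^+_\ell=P^-_\ell$ for all $\ell$ and all $x>0$, hence for $x\geq0$. I expect the only delicate point to be keeping the binomial-coefficient index conventions of Lemma~\ref{lem:Comb} straight through the two ``multiply and sum'' steps, so that the shifted sums are genuinely $T_{\ell-1}$, $xT_\ell$, $u^{-2}Q_{\ell-1}$, etc.; once Lemma~\ref{lem:Comb} is granted, there is no serious obstacle.
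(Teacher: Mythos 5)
Your proposal is correct and follows essentially the same route as the paper: both derive the common three-term recurrence $(\ell+1)f_{\ell+1}=(2\ell+1)(1+2x)f_\ell-\ell(1+4x)f_{\ell-1}$ from the two identities of Lemma~\ref{lem:Comb} and conclude by checking $f_0=1$, $f_1=1+2x$. The only (harmless) differences are cosmetic: you treat both signs $\e$ at once via $u=y_\e$ instead of proving the recurrence for $P^+_\ell$ and invoking $P^+_\ell=P^-_\ell$ separately, and you are slightly more careful than the paper about the boundary case $x=0$.
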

\begin{proof}
  For every $\ell\in\N^*$
  and every $x>0$,
\[
(\ell+1)T_{\ell+1}(x)-(2\ell+1)(1+2x)T_\ell(x)+\ell(1+4x)T_{\ell-1}(x)=0.
\]
Indeed,
\begin{eqnarray*}
  (2\ell+1)(1+2x)T_\ell(x)
&=&
(2\ell+1)\psum_{r\in\Z}\left({\ell\choose r}{2r\choose r}x^r+2 {\ell\choose
    r}{2r\choose r}x^{r+1}\right);\\
&=&
\psum_{r\in\Z}\left((2\ell+1){\ell\choose r}{2r\choose r}+2 (2\ell+1){\ell\choose
    r-1}{2r-2\choose r-1}\right)x^r;
\end{eqnarray*}
and
\begin{eqnarray*}
  \ell(1+4x)T_{\ell-1}(x)
&=&
\ell\psum_{r\in\Z}\left(
    {\ell-1\choose r}{2r\choose r}x^r+4{\ell-1\choose r}{2r\choose
    r}x^{r+1}\right)\\
&=&
\psum_{r\in\Z}\left(\ell
    {\ell-1\choose r}{2r\choose r}x^r+4\ell{\ell-1\choose r-1}{2r-2\choose
    r-1}\right)x^r\\
\end{eqnarray*}
so, using Lemma \ref{lem:Comb},
\begin{eqnarray*}
  (2\ell+1)(1+2x)T_\ell(x)-\ell(1+4x)T_{\ell-1}(x)
&=&
(\ell+1)\psum_{r\in\Z}{\ell+1\choose r}{2r\choose r}x^r\
    =\ (\ell+1)T_{\ell+1}(x).
\end{eqnarray*}
\noi But on the other hand,
\[
(\ell+1)P^+_{\ell+1}(x)-(2\ell+1)(1+2x)P^+_\ell(x)+\ell(1+4x)P^+_{\ell-1}(x)=0.
\]
Indeed,
\begin{eqnarray*}
  (2\ell+1)(1+2x)P^+_\ell(x)
&=&
(2\ell+1)x(y_++y_-)x^\ell y_+^\ell\psum_{r\in\Z}{\ell\choose
   r}^2y_-^{2r}\\
&=&
x^{\ell+1}y_+^{\ell+1}\psum_{r\in\Z}(2\ell+1){\ell\choose
    r}^2\left(y_-^{2r}+y_-^{2r+2}\right)\\
&=&
x^{\ell+1}y_+^{\ell+1}\psum_{r\in\Z}(2\ell+1)\left({\ell\choose r}^2+{\ell\choose r-1}^2\right) y_-^{2r},
\end{eqnarray*}
and
\begin{eqnarray*}
  \ell(1+4x)P^+_{\ell-1}(x)
&=&
\ell x^2(y_+-y_-)^2x^{\ell-1}y_+^{\ell-1}\psum_{r\in\Z}{\ell-1\choose
    r}^2y_-^{2r}\\
&=&
x^{\ell+1}y_+^{\ell+1}\psum_{r\in\Z}\ell{\ell-1\choose
    r}^2\left(y_-^{2r}-2
    y_-^{2r+2}+y_-^{2r+4}\right)\\
&=&
x^{\ell+1}y_+^{\ell+1}\psum_{r\in\Z}\ell\left({\ell-1\choose r}^2-2{\ell-1\choose r-1}^2+{\ell-1\choose r-2}^2\right)y_-^{2r},
\end{eqnarray*}
so, using Lemma \ref{lem:Comb},
\begin{eqnarray*}
  (2\ell+1)(1+2x)P^+_\ell(x)-\ell(1+4x)P^+_{\ell-1}(x)
&=&
(\ell+1) x^{\ell+1}y_+^{\ell+1}\psum_{r\in\Z}
    {\ell+1\choose r}^2y_-^{2r}\ =\ (\ell+1)P^+_{\ell+1}(x).
\end{eqnarray*}
Finally, $T_0(x)=1=P^+_0(x)$ and $T_1(x)=1+2x=x(y_++y_-)=P^+_1(x)$, so $T_\ell(x)=P^+_\ell(x)$ for every $\ell\in\N$.
\end{proof}

\begin{cor}\label{cor:Tequivalent}
  For every $x\geq0$, when $\ell$ tends to infinity,
\[
T_\ell(x)\ \sim\ \frac{1}{2}\sqrt{\frac{1+4x}{\pi\ell x}}\Big(1+4x\Big)^\ell.
\]
\end{cor}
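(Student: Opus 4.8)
The plan is to obtain a closed form for the ordinary generating function $\sum_{\ell\geq 0}T_\ell(x)\,t^\ell$ and then extract its coefficient asymptotics by singularity analysis; throughout I assume $x>0$ (for $x=0$ one has $T_\ell(0)=1$ for all $\ell$ and the stated right-hand side degenerates, so the formula is really about $x>0$). First I would interchange the order of summation and use the two elementary identities $\sum_{\ell\geq r}\binom{\ell}{r}t^\ell=\frac{t^r}{(1-t)^{r+1}}$ and $\sum_{r\geq 0}\binom{2r}{r}y^r=(1-4y)^{-1/2}$ to compute, for $t$ in a neighbourhood of $0$,
\[
\sum_{\ell\geq0}T_\ell(x)\,t^\ell=\sum_{r\geq0}\binom{2r}{r}x^r\frac{t^r}{(1-t)^{r+1}}=\frac{1}{1-t}\Bigl(1-\tfrac{4xt}{1-t}\Bigr)^{-1/2}=\frac{1}{\sqrt{(1-t)\bigl(1-(1+4x)t\bigr)}}\,.
\]
The interchange is legitimate since all terms are nonnegative for $x\ge 0$ and $t\in[0,\varepsilon)$, and the identity then propagates by analyticity. (Alternatively one could start from $T_\ell=P_\ell^\pm$ of Lemma~\ref{lem:T}, or from the integral representation $T_\ell(x)=\frac1\pi\int_0^\pi(1+2x+2x\cos\theta)^\ell\,d\theta$ together with Laplace's method; the generating-function route seems shortest.)

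Next I would locate the dominant singularity of $F(t):=\bigl((1-t)(1-(1+4x)t)\bigr)^{-1/2}$. Since $x>0$, the two branch points $t=1$ and $t=t_0:=\frac{1}{1+4x}$ satisfy $t_0<1$, so $F$ is analytic on the disc $|t|<t_0$ with its only boundary singularity at $t_0$; moreover $F$ is an explicit algebraic function whose only other singularity lies at $1>t_0$, so it extends analytically to a $\Delta$-domain at $t_0$, as required by the transfer theorem. Writing $1-(1+4x)t=(1+4x)(t_0-t)$ and $1-t\to 1-t_0=\frac{4x}{1+4x}$ as $t\to t_0$, the singular expansion is
\[
F(t)\ \underset{t\to t_0}{\sim}\ \frac{1}{\sqrt{(1-t_0)(1+4x)}}\,(t_0-t)^{-1/2}=\frac{1}{2\sqrt{x}\,\sqrt{t_0}}\Bigl(1-\tfrac{t}{t_0}\Bigr)^{-1/2}.
\]

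Finally I would invoke singularity analysis (Flajolet--Odlyzko): from $[u^\ell](1-u)^{-1/2}=\binom{2\ell}{\ell}4^{-\ell}\sim(\pi\ell)^{-1/2}$ one gets $[t^\ell](1-t/t_0)^{-1/2}\sim t_0^{-\ell}(\pi\ell)^{-1/2}$, and transferring the singular expansion above yields
\[
T_\ell(x)=[t^\ell]F(t)\ \sim\ \frac{1}{2\sqrt{x}\,\sqrt{t_0}}\cdot\frac{t_0^{-\ell}}{\sqrt{\pi\ell}}=\frac{(1+4x)^{\ell+1/2}}{2\sqrt{\pi\ell x}}=\frac12\sqrt{\frac{1+4x}{\pi\ell x}}\,(1+4x)^\ell,
\]
which is exactly the claim. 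There is no real obstacle here: the only points requiring a little care — the "hard part", such as it is — are verifying the $\Delta$-analyticity hypothesis of the transfer theorem (immediate, since $F$ is an explicit algebraic function with branch points only at $t_0$ and $1>t_0$, so a slit disc gives an analytic branch) and carrying the multiplicative constant correctly through the computation. Should one prefer to avoid singularity analysis, the Laplace-method argument applied to $\frac1\pi\int_0^\pi(1+2x+2x\cos\theta)^\ell\,d\theta$, whose integrand peaks at $\theta=0$ with $1+2x+2x\cos\theta=(1+4x)-x\theta^2+O(\theta^4)$ so that $\int_0^\pi\sim(1+4x)^\ell\int_0^\infty e^{-\ell x\theta^2/(1+4x)}\,d\theta=(1+4x)^\ell\,\tfrac12\sqrt{\pi(1+4x)/(\ell x)}$, produces the same constant, the routine work there being the control of the tail and of the $O(\theta^4)$ correction.
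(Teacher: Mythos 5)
Your proof is correct, but it takes a genuinely different route from the paper's. The paper deduces the asymptotics from Lemma \ref{lem:T}: having identified $T_\ell(x)$ with $P_\ell^+(x)=x^\ell y_+^\ell\sum_r\binom{\ell}{r}^2y_+^{-2r}$, it invokes Proposition A.1 of \cite{Audoux}, which gives the asymptotics of such squared-binomial sums, and then simplifies using $y_+y_-=1$ and $y_++2+y_-=\frac{1+4x}{x}$. You instead compute the ordinary generating function $\sum_\ell T_\ell(x)t^\ell=\bigl((1-t)(1-(1+4x)t)\bigr)^{-1/2}$ directly from the definition $T_\ell(x)=\sum_r\binom{\ell}{r}\binom{2r}{r}x^r$ and apply Flajolet--Odlyzko singularity analysis at the dominant branch point $t_0=\frac{1}{1+4x}$ (your computations of the OGF, of the singular expansion, and of the constant all check out, as does the alternative Laplace-method argument on $\frac1\pi\int_0^\pi(1+2x+2x\cos\theta)^\ell\,d\theta$). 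What each approach buys: yours is self-contained modulo a standard transfer theorem and bypasses Lemma \ref{lem:T} entirely --- the identity $T_\ell=P_\ell^\pm$ and the recurrence manipulations of Lemma \ref{lem:Comb} become unnecessary for this corollary --- whereas the paper's route stays elementary (a three-term recurrence plus a result already proved in \cite{Audoux}) and requires no complex-analytic machinery. You are also right to flag that the statement only makes sense for $x>0$ (at $x=0$ one has $T_\ell(0)=1$ while the right-hand side is infinite); the paper implicitly assumes this too, since $y_\pm$ is only defined for $x>0$.
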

\begin{proof}
Applying Proposition A.1 in \cite{Audoux} to Lemma \ref{lem:T} (for $\e=1$), we get
\[
T_\ell(x)\ \sim\
x^\ell y_+^\ell \frac{{\left( 1 + \frac{1}{y_+}\right)}^{2\ell +1}}{
2\sqrt{\frac{\pi \ell}{y_+}}}
\ =\ \frac{x^\ell}{2\sqrt{\pi\ell}}\left(\sqrt{y_+}+\frac{1}{\sqrt{y_+}}\right)^{2\ell+1}
\]
But, since $y_+^{-1}=y_-$, $\left(\sqrt{y_+}+\frac{1}{\sqrt{y_+}}\right)^2=y_++2+y_-=\frac{1+4x}{x}$,
and the result follows.
\end{proof}
\begin{prop}\label{prop:ConstantTerm}
  Let $a,b,b'>0$. For every $\ell\in\N$, we denote by $c_\ell$ the constant term in
  $(bt^{-1}+a+b't)^\ell$. Then, when $\ell$ tends to infinity,
\[
c_\ell\sim \frac{1}{2}\sqrt{\frac{a+2\sqrt{bb'}}{\pi\ell \sqrt{bb'}}}\left(a+2\sqrt{bb'}\right)^\ell.
\]
\end{prop}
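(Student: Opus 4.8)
The plan is to pull out $c_\ell$ explicitly, normalize the two parameters $b,b'$ into their geometric mean, recognize the resulting sum as a value of the polynomial $T_\ell$ from Lemma~\ref{lem:T}, and then quote Corollary~\ref{cor:Tequivalent}. First I would apply the multinomial theorem: the constant term of $(bt^{-1}+a+b't)^\ell$ keeps exactly the monomials using as many factors $bt^{-1}$ as factors $b't$, whence
\[
c_\ell=\sum_{i\ge 0}\frac{\ell!}{i!\,i!\,(\ell-2i)!}\,(bb')^{i}\,a^{\ell-2i}.
\]
Since the substitution $t\mapsto t\sqrt{b/b'}$ multiplies the coefficient of $t^{k}$ by $(b/b')^{k/2}$ and so fixes the constant term, $c_\ell$ is also the constant term of $\bigl(a+\beta(t+t^{-1})\bigr)^\ell$ with $\beta:=\sqrt{bb'}$; so I may assume $b=b'=\beta$ from here on.

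Next I would convert this constant term into a value of $T_\ell$. Using $t+t^{-1}=(t^{1/2}+t^{-1/2})^{2}-2$ and setting $s:=t^{1/2}$ (the constant term in $t$ equals that in $s$, since only even powers of $s$ appear), one gets
\[
\bigl(a+\beta(t+t^{-1})\bigr)^\ell=\bigl((a-2\beta)+\beta(s+s^{-1})^{2}\bigr)^\ell=\sum_{m=0}^{\ell}\binom{\ell}{m}(a-2\beta)^{\ell-m}\beta^{m}(s+s^{-1})^{2m},
\]
and reading off the constant term $\binom{2m}{m}$ of $(s+s^{-1})^{2m}$ gives the identity
\[
c_\ell=\sum_{m=0}^{\ell}\binom{\ell}{m}\binom{2m}{m}\beta^{m}(a-2\beta)^{\ell-m}=(a-2\beta)^{\ell}\,T_\ell\!\left(\tfrac{\beta}{a-2\beta}\right)\qquad(a\ne 2\beta).
\]

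Now I would conclude. In the case $a>2\beta$ — which is exactly what occurs in every application of this paper, since a reduced complex $\F^{b_1}\hookrightarrow\F^{a}\twoheadrightarrow\F^{b_2}$ has $a\ge b_1+b_2\ge 2\sqrt{b_1b_2}$ — I apply Corollary~\ref{cor:Tequivalent} with $x=\beta/(a-2\beta)\ge 0$, for which $1+4x=\tfrac{a+2\beta}{a-2\beta}$, obtaining
\[
c_\ell\ \sim\ (a-2\beta)^{\ell}\cdot\frac12\sqrt{\frac{a+2\beta}{\pi\ell\,\beta}}\left(\frac{a+2\beta}{a-2\beta}\right)^{\ell}=\frac12\sqrt{\frac{a+2\sqrt{bb'}}{\pi\ell\sqrt{bb'}}}\bigl(a+2\sqrt{bb'}\bigr)^{\ell},
\]
as wanted. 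The boundary case $a=2\beta$ is immediate, since then $c_\ell=\binom{2\ell}{\ell}\beta^{\ell}$ and the claim is Stirling's formula (consistently with $a+2\beta=4\beta$). For $0<a<2\beta$, where the substitution above feeds a negative argument into $T_\ell$, I would instead use the integral representation $c_\ell=\tfrac1{2\pi}\int_{-\pi}^{\pi}(a+2\beta\cos\theta)^{\ell}\,d\theta$ (Cauchy's formula on $t=e^{i\theta}$) and Laplace's method: the integrand is maximal at $\theta=0$ with value $a+2\beta$, strictly dominating its value $|a-2\beta|$ at $\theta=\pm\pi$ because $a>0$, and expanding $a+2\beta\cos\theta=(a+2\beta)-\beta\theta^{2}+O(\theta^{4})$ with the Gaussian $\int e^{-\ell\beta\theta^{2}/(a+2\beta)}d\theta=\sqrt{\pi(a+2\beta)/(\ell\beta)}$ reproduces the same asymptotic; this last argument in fact works uniformly for all $a>0$, so it could replace the $T_\ell$ detour entirely.

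I expect no genuine obstacle. The only point needing care is the Laplace estimate in the last case, namely bounding $\int_{|\theta|\ge\delta}|a+2\beta\cos\theta|^{\ell}\,d\theta$ by a quantity exponentially smaller than $(a+2\beta)^{\ell}$ and controlling the Gaussian remainder near $\theta=0$; both are standard (rescale $\theta=\phi/\sqrt{\ell}$ and use dominated convergence). The rest — the multinomial expansion and the $s=t^{1/2}$ manipulation — is routine bookkeeping, and the substance of the statement is already carried by Corollary~\ref{cor:Tequivalent}.
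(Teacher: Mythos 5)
Your proof is correct, and its core is exactly the paper's argument: the identity $c_\ell=(a-2\sqrt{bb'})^{\ell}\,T_\ell\bigl(\sqrt{bb'}/(a-2\sqrt{bb'})\bigr)$ — which the paper gets by writing $bt^{-1}+a+b't=\bigl(\sqrt{b}\,t^{-1/2}+\sqrt{b'}\,t^{1/2}\bigr)^{2}+a-2\sqrt{bb'}$ and expanding, the same algebra as your normalization $b=b'=\sqrt{bb'}$ followed by $t+t^{-1}=(s+s^{-1})^{2}-2$ — and then Corollary~\ref{cor:Tequivalent}. The one genuine difference is your case analysis: Corollary~\ref{cor:Tequivalent} (like Lemma~\ref{lem:T}) is stated only for $x\geq 0$, so the paper's proof as written covers only $a>2\sqrt{bb'}$, whereas the proposition is asserted for all $a,b,b'>0$; your handling of $a=2\sqrt{bb'}$ via Stirling and of $0<a<2\sqrt{bb'}$ via the Cauchy-integral/Laplace argument (which, as you note, works uniformly for all $a>0$ and could replace the $T_\ell$ computation altogether) closes that small gap. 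The restriction is harmless for the paper's applications, where a reduced complex gives $a\geq b_1+b_2\geq 2\sqrt{b_1b_2}$ with strict inequality whenever the code dimension is positive, but if you keep the $T_\ell$ route you should state the hypothesis $a>2\sqrt{bb'}$ explicitly when invoking Corollary~\ref{cor:Tequivalent}; the Laplace estimates you defer (the tail bound away from $\theta=0$ and the Gaussian rescaling $\theta=\phi/\sqrt{\ell}$) are indeed standard and pose no obstacle.
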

\begin{proof}
  Following closely the arguments given in the proof of
  Proposition 4.1 in \cite{Audoux}, we begin by
\begin{eqnarray*}
  (bt^{-1}+a+b't)^\ell & = &
                             \Big(\big(\sqrt{b}t^{-\frac{1}{2}}+\sqrt{b'}t^{\frac{1}{2}}\big)^2+a-2\sqrt{bb'}\Big)^\ell\\
  & = & \psum_{r=0}^\ell{\ell\choose
        r}\big(\sqrt{b}t^{-\frac{1}{2}}+\sqrt{b'}t^{\frac{1}{2}}\big)^{2r}\big(a-2\sqrt{bb'}\big)^{\ell-r}\\
  & = & \big(a-2\sqrt{bb'}\big)^\ell\psum_{r=0}^\ell {\ell\choose
        r}\left(\frac{1}{a-2\sqrt{bb'}}\right)^r\psum_{s=0}^{2r}{2r\choose
        s}{\sqrt{b}}^s{\sqrt{b'}}^{2r-s}t^{s-r}\\
& = & \big(a-2\sqrt{bb'}\big)^\ell\psum_{r=0}^\ell\psum_{s=0}^{2r}{\ell\choose
        r}{2r\choose
        s}\left(\frac{b'}{a-2\sqrt{bb'}}\right)^r{\sqrt{\frac{b}{b'}}}^{s}t^{s-r}.
\end{eqnarray*}
In this sum, only the terms with $s=r$ contribute to the constant
term. We obtain hence that
\[
c_\ell\ =\ \big(a-2\sqrt{bb'}\big)^\ell\psum_{r=0}^\ell{\ell\choose
        r}{2r\choose
        r}\left(\frac{\sqrt{bb'}}{a-2\sqrt{bb'}}\right)^r\ =\
      \big(a-2\sqrt{bb'}\big)^\ell T_\ell\left(\frac{\sqrt{bb'}}{a-2\sqrt{bb'}}\right).
\]
Using Lemma \ref{cor:Tequivalent}, we obtain then
\[
c_\ell\ \sim\
\frac{1}{2}\sqrt{\frac{1+\frac{4\sqrt{bb'}}{a-2\sqrt{bb'}}}{\frac{\pi\ell\sqrt{bb'}}{a-2\sqrt{bb'}}}}\left(\left(a-2\sqrt{bb'}\right)\left(1+\frac{4\sqrt{bb'}}{a-2\sqrt{bb'}}\right)\right)^\ell
\ =\ 
\frac{1}{2}\sqrt{\frac{a+2\sqrt{bb'}}{\pi\ell\sqrt{bb'}}}\left(a+2\sqrt{bb'}\right)^\ell.
\]

\end{proof}


\section{Length of reduced tensor powers}
\label{appendix:B}
Using the same technique as in its proof, Corollary \ref{cor:ParamTensorReduc} can be
generalized to the case $b\neq b'$:

\begin{prop}\label{prop:length_reduced_asym}
  Let $\qcode{C}$ be the CSS code associated to ${\chain{C}}:=\xymatrix{\F^{b}\ar@{^(->}[r]& \F^a\ar@{->>}[r]&\F^{b'}}$,
  where $a,b,b'\in\N$. If $\chain{C}$ and ${\chain{C}}^*$ satisfy both 
  the hypothesis of Lemma \ref{lem:Th} for the same integers
  $N,K\in\N^*$, then for every $\ell\in\N$,
  ${\qcode{C}}^{\rotimes\ell}$ is a CSS code with
parameters $\param{n_\ell}{ (a-b-b')^\ell}{d_\ell}{w_\ell}$, where
$\left(\frac{N}{K}\right)^\ell\leq d_\ell\leq d_1^\ell$,
$w_\ell\leq a\ell$ and
\[
n_\ell=\frac{2bb'\big(a-b-b'\big)^\ell+\Big(b^2+b'^2+(b+b')\sqrt{bb'}\Big)\big(a+\sqrt{bb'}\big)^\ell+\Big(b^2+b'^2-(b+b')\sqrt{bb'}\Big)\big(a-\sqrt{bb'}\big)^\ell}{2\big(b^2+bb'+b'^2\big)}\cdot
\]
This provides a family 
logarithmically LDPC with a minimum distance which grows at least as the $\frac{\log N-\log K}{\log(a+\sqrt{bb'})}$-th power of the length.
\end{prop}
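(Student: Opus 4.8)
The plan is to mimic the proof of Corollary~\ref{cor:ParamTensorReduc} step by step, the only point that genuinely requires new work being the closed formula for $n_\ell$. The other parameters cost nothing new: the dimension $(a-b-b')^\ell$ is immediate from K\"unneth, since $k_{\qcode{C}^{\rotimes\ell}}=k_{\qcode{C}^{\otimes\ell}}=k_{\chain{C}}^\ell$ and $k_{\chain{C}}=a-b-b'$ by reducedness of $\chain{C}$; the bound $\left(\frac{N}{K}\right)^\ell\leq d_\ell\leq d_1^\ell$ follows from an inductive use of Lemma~\ref{lem:Th} together with Corollary~\ref{cor:dProd}; and $w_\ell\leq a\ell$ follows from an inductive use of Proposition~\ref{prop:ProductParameters} together with $b,b'\leq a$, which again holds because $\chain{C}$ is reduced.

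For the length I would set $\chain{C}_1:=\chain{C}$ and define recursively $\chain{C}_{\ell+1}$ as the reduction of the degrees $\{-1,0,1\}$--truncation of $\chain{C}_\ell\otimes\chain{C}$, writing $\chain{C}_\ell=:\xymatrix{\F^{b_\ell}\ar@{^(->}[r]&\F^{a_\ell}\ar@{->>}[r]&\F^{b'_\ell}}$, so that $n_\ell=a_\ell$. Developing $\chain{C}_\ell\otimes\chain{C}$ as a length five complex and using K\"unneth to see its homology is concentrated in degree zero --- exactly as in the diagram in the proof of Corollary~\ref{cor:ParamTensorReduc} --- the reduction step reads off the new dimensions and gives the linear recursion
\[
\begin{pmatrix}a_{\ell+1}\\b_{\ell+1}\\b'_{\ell+1}\end{pmatrix}=A\begin{pmatrix}a_{\ell}\\b_{\ell}\\b'_{\ell}\end{pmatrix},\qquad
A:=\begin{pmatrix}a&b'&b\\b&a-b&0\\b'&0&a-b'\end{pmatrix},
\]
with $(a_1,b_1,b'_1)=(a,b,b')$; equivalently $(a_\ell,b_\ell,b'_\ell)$ is the image of $(1,0,0)$ under $A^\ell$.

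It then remains to diagonalise $A$. The substitution $\mu=a-\lambda$ factors its characteristic polynomial as $(\mu^2-bb')(\mu-b-b')$, so the eigenvalues are $a-b-b'$, $a+\sqrt{bb'}$ and $a-\sqrt{bb'}$. Hence $n_\ell=a_\ell$ is of the form $c_1(a-b-b')^\ell+c_2(a+\sqrt{bb'})^\ell+c_3(a-\sqrt{bb'})^\ell$, and the three constants are pinned down by the initial values $n_0=1$, $n_1=a$, $n_2=a^2+2bb'$; solving this Vandermonde system gives $c_1=\frac{bb'}{b^2+bb'+b'^2}$, $c_2=\frac{b^2+b'^2+(b+b')\sqrt{bb'}}{2(b^2+bb'+b'^2)}$, $c_3=\frac{b^2+b'^2-(b+b')\sqrt{bb'}}{2(b^2+bb'+b'^2)}$, which is exactly the announced formula (and specialises, for $b=b'$, to the one in Corollary~\ref{cor:ParamTensorReduc}). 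The logarithmically LDPC conclusion and the growth rate of $d_\ell$ then follow as in Corollary~\ref{cor:ParamTensorReduc}, using that $a+\sqrt{bb'}$ is the dominant eigenvalue (because $b,b'\leq a$ and $c_2>0$).

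The only delicate point will be the bookkeeping needed to get the entries of $A$ right from the reduction of the length five complex --- in particular, correctly identifying the kernel to be removed in degree $-1$ and the complement of the image to be quotiented out in degree $1$. Everything afterwards, namely the diagonalisation and the partial--fraction computation of $c_1,c_2,c_3$, is routine, and the remaining parameters require no argument beyond those already given for Corollaries~\ref{cor:ParamTensor} and~\ref{cor:ParamTensorReduc}.
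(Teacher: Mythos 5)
Your proposal is correct and follows essentially the same route as the paper: deriving the recursion $(b_{\ell+1},a_{\ell+1},b'_{\ell+1})$ from the reduction of the length-five complex (your matrix $A$ is the paper's transition matrix up to a permutation of coordinates), identifying the eigenvalues $a-b-b'$, $a\pm\sqrt{bb'}$, and noting that the other parameters carry over from Corollaries \ref{cor:ParamTensor} and \ref{cor:ParamTensorReduc}. The only cosmetic difference is that you pin down the coefficients by solving the Vandermonde system from $n_0,n_1,n_2$ rather than writing the explicit diagonalizing matrix $P$ as the paper does, and the resulting constants agree with the stated formula.
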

\begin{proof}
    As for Corollary \ref{cor:ParamTensorReduc}, only the statement on the length needs some attention.

We define the sequences of integers $(a_\ell)_{\ell\in\N^*}$,
$(b_\ell)_{\ell\in\N^*}$ and $(b'_\ell)_{\ell\in\N^*}$ by
${\chain{C}}_\ell=:\xymatrix{\F^{b_\ell}\ar@{^(->}[r]&\F^{a_\ell}\ar@{->>}[r]&\F^{b'_\ell}}$. Developping
${\chain{C}}_\ell\otimes {\chain{C}}$ as in the proof of Corollary
\ref{cor:ParamTensorReduc}, we obtain
\[
\left(\!\!\begin{array}{c}b_{\ell+1}\\a_{\ell+1}\\b'_{\ell+1}\end{array}\!\!\right)=
\left(\!\!\begin{array}{ccc}a-b&b&0\\b'&a&b\\0&b'&a-b'\end{array}\!\!\right)
\left(\!\!\begin{array}{c}b_{\ell} \\a_{\ell}\\b'_{\ell}\end{array}\!\!\right).
\]
It can be computed that
\[
\left(\!\!\begin{array}{ccc}a-b&b&0\\b'&a&b\\0&b'&a-b'\end{array}\!\!\right)=
P \left(\!\!\begin{array}{ccc}a-\sqrt{bb'}&0&0\\0&a-b-b'&0\\0&0&a+\sqrt{bb'}\end{array}\!\!\right)P^{-1}
\]
with
\[
P=\left(\!\!\begin{array}{ccc}-\sqrt{b}&b^2&\sqrt{b}\\\sqrt{b'}-\sqrt{b}&-bb'&\sqrt{b'}+\sqrt{b}\\\sqrt{b'}&b'^2&\sqrt{b'}\end{array}\!\!\right).
\]
The result follows by considering the $\ell$-th power.
\end{proof}

\end{document}